\theoremstyle{plain}
\newtheorem{theorem}{Theorem}
\newtheorem{proposition}[theorem]{Proposition}
\newtheorem{lemma}[theorem]{Lemma}
\theoremstyle{definition}
\newtheorem{definition}{Definition}
\newtheorem{assumption}{Assumption}
\newtheorem{remark}{Remark}
\newtheorem{problem}{Problem}
\newcommand{\eqn}[1]{\hyperref[eqn:#1]{(\ref*{eqn:#1})}}
\newcommand{\rem}[1]{\hyperref[rem:#1]{Remark~\ref*{rem:#1}}}
\newcommand{\thm}[1]{\hyperref[thm:#1]{Theorem~\ref*{thm:#1}}}
\newcommand{\cor}[1]{\hyperref[cor:#1]{Corollary~\ref*{cor:#1}}}
\newcommand{\defn}[1]{\hyperref[defn:#1]{Definition~\ref*{defn:#1}}}
\newcommand{\lem}[1]{\hyperref[lem:#1]{Lemma~\ref*{lem:#1}}}
\newcommand{\prop}[1]{\hyperref[prop:#1]{Proposition~\ref*{prop:#1}}}
\newcommand{\fig}[1]{\hyperref[fig:#1]{Figure~\ref*{fig:#1}}}
\newcommand{\tab}[1]{\hyperref[tab:#1]{Table~\ref*{tab:#1}}}
\newcommand{\algo}[1]{\hyperref[algo:#1]{Algorithm~\ref*{algo:#1}}}
\renewcommand{\sec}[1]{\hyperref[sec:#1]{Section~\ref*{sec:#1}}}
\newcommand{\append}[1]{\hyperref[append:#1]{Appendix~\ref*{append:#1}}}
\newcommand{\prob}[1]{\hyperref[prob:#1]{Problem~\ref*{prob:#1}}}
\newcommand{\fac}[1]{\hyperref[fac:#1]{Fact~\ref*{fac:#1}}}
\newcommand{\lin}[1]{\hyperref[lin:#1]{Line~\ref*{lin:#1}}}
\newcommand{\fnote}[1]{\hyperref[fnote:#1]{Footnote~\ref*{fnote:#1}}}
\newcommand{\assump}[1]{\hyperref[assump:#1]{Assumption~\ref*{assump:#1}}}
\newcommand{\rmk}[1]{\hyperref[rmk:#1]{Remark~\ref*{rmk:#1}}}
\def\>{\rangle}
\def\<{\langle}
\def\trans{^{\top}}
\newcommand{\N}{\mathbb{N}}
\newcommand{\R}{\mathbb{R}}
\newcommand{\C}{\mathbb{C}}
\renewcommand{\d}{\mathrm{d}}
\renewcommand{\AA}{\mathcal{A}}
\newcommand{\DD}{\mathcal{D}}
\newcommand{\OO}{\mathcal{O}}
\DeclareMathOperator{\poly}{poly}
\DeclareMathOperator{\diag}{diag}
\renewcommand\bra[1]{{\langle{#1}|}}
\renewcommand\ket[1]{%
  \@ifnextchar\bra{\k@t{#1}\!}{\k@t{#1}}%
}
\newcommand\k@t[1]{{|{#1}\rangle}}
\title{Differentiable Quantum Computing for Large-scale Linear Control}
\author{
    Connor Clayton$^{*, 1,2}$ \ \ \
    Jiaqi Leng$^{*, 1,3,5}$ \ \ \
    Gengzhi Yang$^{*, 1,3}$ \ \ \ \\
    Yi-Ling Qiao$^{2,4}$ \ \ \
    Ming C. Lin$^{2,4}$ \ \ \
    Xiaodi Wu$^{\dagger, 1,2}$\\
    \vspace{1em}
    \small{$^1$Joint Center for Quantum Information and Computer Science, University of Maryland}\\
    \small{$^2$Department of Computer Science, University of Maryland} \\
    \small{$^3$Department of Mathematics, University of Maryland} \\
    \small{$^4$Center for Machine Learning, University of Maryland} \\
    \small{$^5$Department of Mathematics and Simons Institute for the Theory of Computing, UC Berkeley}
    \small{$^*$Equal Contribution} \\
    \small{$^\dagger$\href{mailto:xiaodiwu@umd.edu}{xiaodiwu@umd.edu}}
}
\begin{document}
\maketitle

\begin{abstract}
    As industrial models and designs grow increasingly complex, the demand for optimal control of large-scale dynamical systems has significantly increased. However, traditional methods for optimal control incur significant overhead as problem dimensions grow.
    In this paper, we introduce an end-to-end quantum algorithm for linear-quadratic control with provable speedups. Our algorithm, based on a policy gradient method, incorporates a novel quantum subroutine for solving the matrix Lyapunov equation. Specifically, we build a {\em quantum-assisted differentiable simulator} for efficient gradient estimation that is more accurate and robust than classical methods relying on stochastic approximation. Compared to the classical approaches, our method achieves a {\em super-quadratic} speedup. 
    To the best of our knowledge, this is the first end-to-end quantum application to linear control problems with provable quantum advantage.
\end{abstract}

\section{Introduction}

Over the past few decades, the growing complexity of modern engineering designs has made the control of large-scale dynamical systems a crucial task across various application fields, such as power grid management~\cite{vinifa2016linear}, swarm robotics~\cite{elamvazhuthi2015optimal,correll2008parameter}, sensor networks~\cite{foderaro2016distributed}, and airline scheduling~\cite{wen2021airline}. These challenges often involve high-dimensional solution spaces with tens of thousands of degrees of freedom, presenting a significant obstacle for traditional optimal control methods.

The emergence of quantum computing has expanded the potential for designing efficient algorithms in numerical optimization and machine learning~\cite{zhang2020recent,abbas2023quantum,leng2023quantum}. By leveraging the principles of quantum mechanics, such as superposition and entanglement, quantum computers excel at efficient data processing, making them promising for accelerating solutions to large-scale computational challenges~\cite{kim2023evidence}.

Although there has been some progress in quantum algorithms for some specific optimal control problems arising in quantum sciences~\cite{leng2022differentiable,li2023efficient}, a viable pathway for accelerating general large-scale optimal control problems remains unclear. A conventional approach to optimal control involves solving the Algebraic Riccati Equation (ARE, see \sec{lqr} for details), which is a nonlinear matrix equation. This problem has been less explored in the field of quantum computing for two reasons. 
First, most proposed quantum algorithms for algebraic and differential equations focus on linear and vector-valued problems, and extending them to nonlinear matrix equations is highly challenging. 
Second, while efficient quantum algorithms exist for certain weakly nonlinear problems~\cite{liu2021efficient}, they are not powerful enough to handle the nonlinearity present in the Algebraic Riccati Equation.
A breakthrough in this direction calls for novel ideas in algorithm design.

Inspired by the recent advances in differentiable physics~\cite{mora2021pods,pmlr-v162-suh22b,leng2022differentiable} and reinforcement learning~\cite{fazel2018global,mohammadi2021convergence}, we develop an end-to-end quantum algorithm that solves a fundamental optimal control problem called the linear-quadratic regulator (LQR).
Given its widely applicable mathematical formulation, LQR has been extensively researched and serves as a standard case study for various computing and learning algorithms~\cite{hu2023toward}; moreover, LQR is of significant practical relevance as many real-world optimal control problems can be formulated to address through linearization techniques.
Our quantum algorithm is proven to output an $\varepsilon$-approximate optimal solution in time $\widetilde{\OO}\left(n\varepsilon^{-1.5}\right)$\footnote{The $\widetilde{\OO}(\cdot)$ notation suppresses the condition number dependence and polylogarithmic factors in $n$ and $\varepsilon$.}, where $n$ is the dimension of the state vector and $\varepsilon$ is an error tolerance parameter. 
The algorithm involves two major components: a quantum differentiable simulator and a quantum-accessible classical data structure. This hybrid quantum-classical framework enables us to employ a \textit{policy gradient} method that exhibits a fast convergence rate for the LQR problem. Since almost all known classical methods for the LQR problem heavily rely on subroutines such as matrix factorization and matrix inversion~\cite{kleinman1968iterative,laub1979schur,arnold1984generalized,lancaster1995algebraic}, which require at least $\OO(n^3)$ overhead in the problem dimension $n$, our new linear-time quantum algorithm, with {\em super-quadratic} speedup, offers significant promise for large-scale applications.

\paragraph{Notation.}
We use $\R$ and $\C$ to denote the set of real and complex numbers, respectively. 
$\mathbb{I}$ denotes an identity operator with an appropriate dimension.
For two real vectors $u, v \in \R^n$, the Euclidean inner product $\langle u,v \rangle = u^T v$, and the norm of a vector $u$ is $\|u\| = \sqrt{u^Tu}$.
Given a symmetric/Hermitian matrix $M$, we denote $\lambda_{\max}(M)$ (or $\lambda_{\min}(M)$) as the maximal/minimal eigenvalue of $M$.
The spectral norm of a matrix $M\in \R^{m\times n}$ is denoted by $\|M\| = \sup_{\|v\|=1}\|Mv\|$.
The Frobenius norm of a matrix $M\in \R^{m\times n}$ is denoted by $\|M\|_F = \sum_{i,j}|M_{i,j}|^2 = \Tr[M^T M]$. We say $\xi\sim \DD$ if the random variable $\xi \in \R^n$ is distributed according to $\DD$.

\subsection{Problem Formulation}\label{sec:lqr}
We focus on the infinite-horizon continuous-time linear-quadratic regulator (LQR) problem:

\vspace{-4mm}
\begin{subequations}\label{eqn:problem}
\begin{align}
    &\min_{x,u} J = \mathbb{E}\left[\int^{\infty}_0 \left(x\trans(t) Q x(t) + u\trans(t)R u(t)\right)\d t\right]\label{eqn:obj}\\
    &\text{subject to  }\dot{x} = Ax + Bu,~~x(0)\sim \mathcal{D},\label{eqn:control_ode}
\end{align}
\end{subequations}
where $x(t)\colon [0,\infty]\to \R^n$ is the state vector, $u(t)\colon [0,\infty]\to \R^m$ is the control input. $A$ and $B$ are constant matrices of appropriate dimensions; $Q$ and $R$ are positive definite matrices.

\begin{definition}
    For a square matrix $M \in \R^{n\times n}$, we say $M$ is \textit{Hurwitz} if every eigenvalue of $M$ has a strictly negative real part.
\end{definition}

\begin{definition}
    For a controllable pair $(A, B)$, the set of stabilizing feedback gains is given by
    \begin{equation}
    S_K \coloneqq \{K\in \R^{m\times n}\colon A-BK \text{ is Hurwitz}\}.
    \end{equation}
\end{definition}

Given a controllable pair $(A, B)$, the optimal controller $u(t)$ of problem \eqn{problem} can be expressed as a linear function of the state vector $x(t)$, namely
\begin{align}\label{eqn:optim_controller}
    u(t) = -K^* x(t),
\end{align}
where the matrix $K^*$ is the optimal linear feedback gain. An analytical form of the optimal feedback gain is given by $K^* = R^{-1}B\trans P^*$, where $P^*$ is the unique positive solution to the Algebraic Riccati Equation (ARE),
\begin{align}\label{eqn:are}
    A\trans P + PA + Q - PBR^{-1}B\trans P = 0.
\end{align}

For large-scale control problems where the control input is much smaller than the state vector (i.e., $m \ll n$), it is often desired to compute the optimal feedback gain matrix $K^*$ without explicitly solving for $P^*$.
To this end, we can rewrite the LQR objective function $J(x,u)$ as a function solely depending on $K$ by leveraging the linearity of the optimal controller $u(t) = -Kx(t)$. With standard algebraic manipulation, it turns out that
\begin{align}\label{eqn:fK}
    J(x,u) = f(K) = \begin{cases}
        \Tr[P(K)\Sigma_0] & K\in \mathcal{S}_K,\\
        \infty & \mathrm{otherwise},
    \end{cases}
\end{align}
where 
\begin{align}\label{eqn:PK}
    P(K) = \int^\infty_0 e^{(A-BK)\trans t}\left(Q+K\trans 
 RK\right)e^{(A-BK)t} ~\d t,
\end{align}
and $\Sigma_0 \coloneqq \mathbb{E}_{\xi \sim \mathcal{D}} [\xi \xi\trans]$.
Given this reformulation, the search for the optimal feedback gain $K^*$ reduces to minimizing the unconstrained objective function $f(K)$.

In practice, the matrices $A, B, Q, R$ often possess sparsity structures that can be leveraged by quantum computers. We make the assumptions on the efficient quantum access model.

\begin{assumption}[Sparse-access matrices]\label{assump:sparse-oracle}
    We assume $A$, $B$, $Q$, and $R$ are $s$-sparse, i.e., there are at most $s$ non-zero entries in each row/column. For $M \in \{A, B, Q, R\}$, we assume access to an efficient procedure\footnote{This input model is sometimes referred to as the \textit{sparse-input oracle} model in the literature. It is a standard assumption in many applications, see~\cite{aharonov2003adiabatic,childs2017quantum,gilyen2019quantum} for details.} that loads the matrix into quantum data:
    \begin{align}
        \ket{i}\ket{k} \mapsto \ket{i}\ket{r_{i,k}},\quad
        \ket{\ell}\ket{j} \mapsto \ket{c_{\ell, j}}\ket{j},\quad
        \ket{i}\ket{j}\ket{0} \mapsto \ket{i}\ket{j}\ket{M_{ij}},
    \end{align}
    where $r_{i,k}$ is the index of the $k$-th non-zero entry of the $i$-th row of $M$, $c_{\ell,j}$ is the index of the $\ell$-th non-zero entry of the $j$-th column of $M$, and $M_{ij}$ is a fixed-length binary description of the $(i,j)$-th entry of $M$.
\end{assumption}

With quantum access to the problem data, we aim to determine the optimal linear feedback gain $K^*$ so that the objective function $f(K)$ is minimized, as summarized in the following problem statement:

\begin{problem}\label{prob:lqr}
    Assume $(A, B)$ is a controllable pair and $Q$, $R$ are positive-definite. Given quantum access to $A, B, Q, R$ in the sense of \assump{sparse-oracle}, we want to compute an $\varepsilon$-approximate solution $K$ such that $\|K - K^*\|_F \le \varepsilon$, where $\varepsilon>0$ is prefixed.
\end{problem}

\subsection{Main Contributions}

In this paper, we propose an end-to-end quantum algorithm for solving LQR problems that exhibits the desired quantum advantage in the large-scale setting (i.e., in the parameter regime $m \ll n$).
A detailed comparison between ours and various other methods is given in \tab{costs}.
Compared with state-of-the-art classical methods, our algorithm achieves a super-quadratic speedup in terms of the state vector dimension $n$.
To the best of our knowledge, this is the first end-to-end quantum application to linear control problems with provable speedup.

\begin{table}[!ht]
    \centering
    \begin{tabular}{c|c}
     \hline
     \hline
     \textbf{Methods} & \textbf{Time/Gate Complexity}\\
     \hline
     \\[-1em]
     Schur method \cite{laub1979schur,morris2000introduction} & $\widetilde{\OO}(n^3)$\\
     \hline
     \\[-1em]
     Newton-Kleinman method \cite{kleinman1968iterative,morris2006iterative} & $\widetilde{\OO}(n^3)$ \\
     \hline
     \\[-1em]
     (Model-based) policy gradient~\cite{mohammadi2021convergence} & $\widetilde{\OO}(n^3\cdot \poly(\varepsilon^{-1}))$\\
     \hline
     \\[-1em]
     \textbf{Ours} & $\widetilde{\OO}({n\varepsilon^{-1.5}})$ (\thm{informal-policy-grad-main})\\
     \hline
     \hline
     \\[-1em]
    \end{tabular}
    \caption{Asymptotic cost of different methods for LQR.}
    \label{tab:costs}
\end{table}

Our algorithm is based on a novel policy gradient strategy to find globally optimal solutions for linear-quadratic control problems.
A brief overview of the policy gradient method for LQR is available in \sec{policy-gradient-lqr}.
In each iteration cycle, our algorithm executes a fast, quantum-assisted differentiable simulator to obtain robust gradient estimates, as detailed in~\thm{approx-gradient}. The gradient estimates are then utilized by a classical computer to update the control policy $K$.

As illustrated in~\fig{diagram}, with the back-and-forth iterations between the quantum simulator and a classical computer, the control policy $K$ converges to the optimal policy $K^*$ at a linear rate, leading to an end-to-end resolution of the LQR problem.

Our quantum algorithm design can be regarded as a novel realization of the hybrid quantum-classical computing paradigm, where collaboration between classical and quantum computers significantly reduces the burden on the quantum side. Moreover, we provide explicit constructions of the quantum simulator and analyze the convergence rate of the policy gradient based on our end-to-end model. These desirable attributes make our proposed design more practical and relevant in the early fault-tolerant era~\cite{katabarwa2023early}.
\begin{wrapfigure}{r}{0.6\textwidth}
    \centering
    \vspace*{0.75em}
    \includegraphics[width=1.0\linewidth]{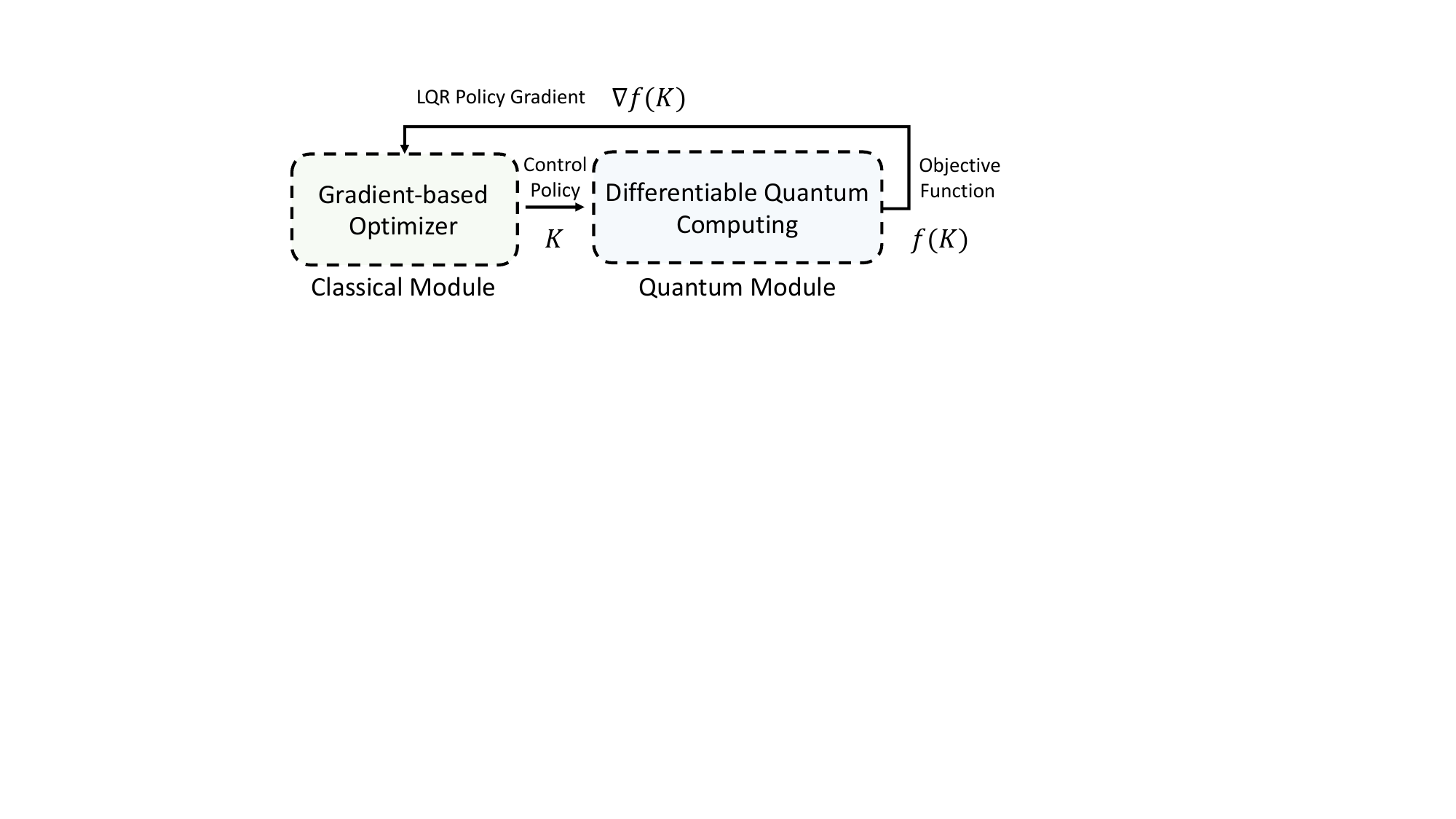}
    \caption{Differentiable quantum computing for linear control.}
    \label{fig:diagram}
    \vspace{-1.5em}
\end{wrapfigure}

Notably, we propose a new quantum algorithm for solving the Lyapunov equation, a fundamental task in optimal control theory~\cite{hu2023toward}. Based on an integral representation of the solution and a rich toolbox of methods for quantum numerical linear algebra, our algorithm can produce a quantum representation of the solution matrix in a cost that is polylogarithmic in the matrix dimension $n$ (see~\thm{lyapunov-equation}), leading to an {\em exponential speedup} over existing classical methods~\cite{simoncini2016computational}. Moreover, since the matrix Lyapunov equation is fundamental to many control problems, we foresee that our quantum algorithm may play an important role in finding speedups for other tasks.

The fast quantum algorithm for the Lyapunov equation enables us to develop a quantum gradient estimation subroutine in near-optimal cost, as detailed in~\thm{approx-gradient}.
Compared with the conventional gradient estimation techniques based on stochastic approximation (e.g., one- and two-point gradient estimators~\cite{mohammadi2021convergence}), our quantum gradient estimation benefits from the explicit exploitation of the analytical form of the gradient. 
Numerical experiments suggest that our gradient estimation is robust and often leads to faster convergence in practice, as demonstrated in~\sec{numerical}.

\section{Related Work}

We survey related work on model-based and model-free linear-quadratic control, differentiable physics, and quantum reinforcement learning in this section.

\paragraph{Model-based linear-quadratic control.} 
\textit{Model-based} optimal control~\cite{simchowitz2018learning,dean2020sample} refers to the scenario where historical measurement data explicitly gives (or estimates) the problem description. In this case, the optimal linear feedback gain $K^*$ can be computed by solving the algebraic Riccati equation (ARE), as detailed in~\sec{lqr}.
Commonly used numerical methods for ARE include factorization methods (e.g., Schur method~\cite{laub1979schur,morris2000introduction}) and iterative methods (e.g., Newton-Kleinman method~\cite{kleinman1968iterative,morris2006iterative,feitzinger2009inexact}). These methods require computing matrix factorization or matrix inverse, which in general leads to a $O(n^3)$ run time (assuming $m \le n$). Some methods for ARE can achieve $\OO(n)$ runtime under strong assumptions such as the solution $P^*$ is of low rank~\cite{benner2020numerical}.
It is also possible to solve LQR by reformulating it as a semidefinite programming (SDP) problem~\cite{cohen2018online}. We do not dive into the SPD approach as it does not demonstrate superior asymptotic scaling compared to other direct methods.

\paragraph{Model-free linear-quadratic control.}
LQR can be regarded as a continuous-time analog of the discrete Markov Decision Process (MDP) model and many techniques from reinforcement learning (RL) can be introduced to learn the optimal feedback gain (i.e., \textit{control policy}), such as policy gradient~\cite{fazel2018global,mohammadi2021convergence}, natural gradient~\cite{hu2023toward}, and policy iteration~\cite{bradtke1994adaptive}. These RL-based methods are particularly useful when we have access to the observed costs but the system model can not be directly constructed.

\paragraph{Differentiable physics and quantum computing.}
The differentiable programming paradigm has been applied to many dynamical systems for learning~\cite{mora2021pods} and control~\cite{pmlr-v162-suh22b}. Those gradient-based methods can be used in reinforcement learning~\cite{xu2021accelerated}, inverse problems~\cite{gradsim}, optimization~\cite{pmlr-v205-antonova23a}, design~\cite{Xu-RSS-21}, etc. People have developed differentiable pipelines for various dynamics including fluids~\cite{xian2023fluidlab}, rigid body~\cite{qiao2021Efficient}, soft body~\cite{Hu2019:ICLR}, and other hybrid systems~\cite{Son2022differentiable}. Recently, ~\cite{leng2022differentiable} have derived a differentiable analog quantum computing pipeline for quantum optimization and control. In this work, we will focus on using differentiable quantum computing to accelerate a widely studied classical problem - linear control synthesis.

\paragraph{Quantum reinforcement learning.}
Recently, quantum-accelerated reinforcement learning has attracted significant attention as it demonstrates the potential for computational speedup~\cite{meyer2022survey}.
It has been shown that quantum computers can be used to compute policy gradients given coherent access to a Markov Decision tree model~\cite{cornelissen2022near, jerbi2022quantum}.
Some works also discuss the quantum policy iteration method for RL, see~\cite{wiedemann2022quantum, cherrat2023quantum}. 
It is worth noting that the existing works are usually based on a strong quantum access model and it remains unclear what the cost of constructing such models is in an \textit{end-to-end} sense.

\section{Preliminaries}
\subsection{Introduction to Quantum Computing}
All quantum states of a quantum system form a Hilbert space, which is isomorphic to $\mathbb{C}^N$. We may assume $N = 2^n$ and
$n$ is a non-negative integer. An element $\ket{\psi}$ in this Hilbert space is then noted as a $N$-dimensional quantum state, where
\begin{equation}
    \ket{\psi} = 
    \begin{bmatrix}
        v_0\\
        v_1\\
        \vdots\\
        v_{N-1}
    \end{bmatrix},
\end{equation}
where $v_i \in \mathbb{C}, i \in \{0,1,\cdots, N-1\}$.
Also, we often use $\bra{\psi}$ to denote the conjugate transpose of $\ket{\psi}$.
For any $c \neq 0$, $c\ket{\psi}$ and $\ket{\psi}$ refer to the same state, thus without loss of generality, $\|\ket{\psi}\|=1$ always holds.
Specifically, a one qubit system corresponds to the aforementioned Hilbert space with $n=1$.

Given $m$ quantum states $\ket{\psi_1},\ket{\psi_2},\cdots,\ket{\psi_m}$ from $m$ quantum systems, then
\begin{equation}
    \ket{\psi} = \ket{\psi_1} \otimes \ket{\psi_2} \otimes \cdots \otimes \ket{\psi_m}
\end{equation}
is a quantum state in the space that consists $m$ subspaces.

The evolution of a quantum state can be described by a unitary operator $U$, meaning
\begin{equation}
    U^\dag U = U U^\dag = I.
\end{equation}
We often note these operations as gates on the quantum circuit.
One important type of unitary operators are the Pauli operators, namely
\begin{equation}
    X = 
    \begin{bmatrix}
    0 & 1\\
    1 & 0
    \end{bmatrix},\quad
    Y = 
    \begin{bmatrix}
    0 & -i \\
    i & 0
    \end{bmatrix},\quad
    Z = 
    \begin{bmatrix}
    1 & 0 \\
    0 & -1
    \end{bmatrix}.
\end{equation}
They form a basis of all the linear operators acting on $\mathbb{C}^2$.

For the quantum measurement, given a quantum observable $H$, we can do the measurement of a quantum state
$\ket{\psi}$. Specifically, after the measurement on $\ket{\psi}$, 
the state collapses to $\frac{P_m \ket{\psi}}{\sqrt{p_m}}$, and
we get an outcome $\lambda_m$ with probability $p_m = \bra{\psi}P_m\ket{\psi}$,
where
\begin{equation}
    H = \sum_{m}\lambda_m P_m
\end{equation}
is the spectral decomposition of $H$.

\subsection{Policy Gradient for LQR}\label{sec:policy-gradient-lqr}

For all stabilizing feedback gains $K \in S_K$, the gradient of the objective function $f(K)$ as defined in~\eqn{fK} has the following closed-form expression~\cite{levine1970determination,lin2011augmented}:
\begin{align}\label{eqn:exact-grad}
    \nabla f(K) = 2(RK - B\trans P(K))X(K),
\end{align}
where $P(K)$ is given in~\eqn{PK}, and $X(K)$ is determined by 
\begin{align}\label{eqn:XK}
    X(K) = \int^\infty_0 e^{(A-BK) t} \Sigma_0 e^{(A-BK)\trans t} ~\d t.
\end{align}

The (direct) policy gradient method for LQR minimizes the objective $f(K)$ via the vanilla gradient update rule $K \leftarrow  K - s \nabla f(K)$, where $s>0$ is a fixed step size. Given sufficiently small $s$, it has been shown that the policy gradient method converges at a linear rate~\cite[Theorem 2]{mohammadi2021convergence}. In practice, however, the policy gradient is often estimated through stochastic approximation, such as one- and two-point estimation~\cite{mohammadi2021convergence}. While these zeroth-order gradient estimation methods are less demanding in terms of computational cost, they tend to be sensitive to random perturbations and slow to converge.

In this paper, we propose a fast quantum algorithm that outputs a robust estimate of the gradient in $\widetilde{\OO}(n)$ time (assuming $m \ll n$, see~\thm{approx-gradient}). Leveraging the quantum gradient estimation subroutine, we recover the linear convergence rate using robust gradient descent, as detailed in~\prop{linear-converge-rate}.

\subsection{Quantum Data Structure}

To perform policy gradient in the training process, the linear feedback gain $K$ is stored in a \textit{quantum-accessible data structure} as proposed in~\cite{kerenidis2016quantum}. This data structure allows intermediate updates on $K$ and efficient quantum queries to $K$ as a block-encoded matrix.
This data structure is a purely classical representation of $K$, and quantum access to this data structure (e.g., through qRAM~\cite{giovannetti2008quantum}) is required to build the block-encoding of $K$. In the literature, this data structure is also known as \textit{classical-write, quantum-read} qRAM~\cite{van2019quantum,augustino2023quantum}.

\begin{definition}[Block-encoding]
Suppose that $M$ is an $p$-qubit operator, $\alpha, \varepsilon \in \R^+$ and $r \in \N$, then we say that the $(p+r)$-qubit unitary $U_M$ is an $(\alpha, r, \varepsilon)$-block-encoding of $M$, if
\begin{align}
    \|M - \alpha (\bra{0}^r \otimes \mathbb{I}) U_M (\ket{0}^r \otimes \mathbb{I})\| \le \varepsilon.
\end{align}
\end{definition}
In this paper, the growth of ancilla qubits (space complexity) is dominated by the number of elementary gates (gate complexity). Therefore, when referring to a specific block-encoding, we often omit the number of ancilla qubits (i.e., the parameter $r$) for simplicity.

\begin{lemma}\label{lem:block_encoding_data_structure}
    Let $K \in \R^{m \times n}$. There exists a data structure to store $K$ with the following properties: (1) the size of the data structure is $\OO\left(mn\log^2(mn)\right)$, (2) the time to store a new entry $(i, j, \hat{K}_{i,j})$ is $\OO\left(\log^2(mn)\right)$, and (3) for any $\varepsilon > 0$, a quantum algorithm can implement a $(\|K\|_F, \lceil\log_2n\rceil+2, \varepsilon)$-block-encoding of $K$ in time $\OO\left(\poly\log (n,1/\varepsilon)\right)$.
    There also exists an analogous data structure for $\hat{K}\trans$.
\end{lemma}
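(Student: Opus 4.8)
\emph{Approach.} The plan is to realize the object as the standard quantum-accessible binary-tree data structure of~\cite{kerenidis2016quantum} (in the block-encoding form later used in~\cite{gilyen2019quantum}), and then check the three bullet points directly. For each row $i\in[m]$ I would keep a complete binary tree $B_i$ with $n$ leaves whose $j$-th leaf stores $|K_{ij}|^2$ together with $\sgn(K_{ij})$, every internal node storing the sum of the leaf values below it, so that the root of $B_i$ holds $\|K_{i,\cdot}\|^2$. On top of these I would keep one more tree $B_0$ with $m$ leaves whose $i$-th leaf is a copy of the root of $B_i$, so the root of $B_0$ holds $\sum_i\|K_{i,\cdot}\|^2=\|K\|_F^2$. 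Keeping each node value to $\OO(\log(mn))$ bits gives $\OO(mn)$ nodes of $\OO(\log(mn))$ bits, which is claim~(1); and writing an entry $(i,j,K_{ij})$ only touches the $\OO(\log n)$ nodes on the root-to-leaf path of $B_i$ through leaf $j$, hence (via the root of $B_i$) the $\OO(\log m)$ nodes on the matching path of $B_0$, each node update costing $\OO(\log(mn))$, which is claim~(2).

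\emph{The block-encoding.} For claim~(3) I would, given quantum read access to the trees (qRAM), build two state-preparation unitaries on a register pair $R_1\otimes R_2$, taking both registers to have $\lceil\log_2 n\rceil$ qubits (WLOG $m\le n$ — the regime of interest — otherwise pad $K$ with zero rows). The first, $U_{\mathrm R}$, implements the row-controlled preparation $\ket{i}_{R_1}\ket{0}_{R_2}\mapsto \ket{i}_{R_1}\otimes\|K_{i,\cdot}\|^{-1}\sum_j K_{ij}\ket{j}_{R_2}$ by a depth-$\lceil\log_2 n\rceil$ descent of $B_i$: at each level one queries qRAM for the two children's stored values, applies the corresponding controlled single-qubit rotation, and applies the stored signs at the leaves. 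The second, $U_{\mathrm L}$, acts only on $R_1$ and prepares $\|K\|_F^{-1}\sum_i \|K_{i,\cdot}\|\ket{i}$ from $B_0$ in the same way (no controls, all amplitudes nonnegative). A one-line computation then gives, for all $i,j$,
\[
\bra{i}_{R_1}\bra{0}_{R_2}\, U_{\mathrm R}^{\dagger} U_{\mathrm L}\, \ket{0}_{R_1}\ket{j}_{R_2} = \frac{\|K_{i,\cdot}\|}{\|K\|_F}\cdot\frac{K_{ij}}{\|K_{i,\cdot}\|} = \frac{K_{ij}}{\|K\|_F},
\]
so after appending a SWAP of $R_1$ with $R_2$ (to put input and output in a common register) the circuit $\mathrm{SWAP}\cdot U_{\mathrm R}^{\dagger}U_{\mathrm L}$ is an exact $(\|K\|_F,\lceil\log_2 n\rceil+2,0)$-block-encoding of $K$: the $\lceil\log_2 n\rceil$ ancillas are the row register (padded) projected onto $\ket{0}$, and the two extra flag qubits are the ones $U_{\mathrm R}$ uses for sign bookkeeping and for handling zero rows. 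Carrying every stored value, rotation angle, and piece of arithmetic to $\OO(\poly\log(n,1/\varepsilon))$ bits makes the whole circuit $\varepsilon$-accurate in operator norm while using $\OO(\log n)$ qRAM queries and $\OO(\poly\log(n,1/\varepsilon))$ elementary gates, which is claim~(3). Running the identical construction with rows and columns interchanged — per-column trees plus one tree over column norms — gives the analogous structure for $K\trans$.

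\emph{Main obstacle.} Claims~(1)–(2) and the inner-product identity above are essentially bookkeeping. The step that needs genuine care is the explicit circuit for $U_{\mathrm R}$: making the controlled-rotation descent uniform across all $m$ rows using only qRAM queries, correctly threading the sign and zero-row flag qubits, pinning down the exact ancilla count $\lceil\log_2 n\rceil+2$ in the non-square setting, and carrying out the finite-precision error analysis so that the block-encoding error $\varepsilon$ is honestly bounded in terms of $\OO(\poly\log(n,1/\varepsilon))$-bit arithmetic. All of this follows the template of~\cite{kerenidis2016quantum,gilyen2019quantum}, so what is new here is the packaging of the statement rather than the argument; in a final writeup I would likely just cite that template for the circuit details and reproduce only the tree description and the identity that exhibits $K/\|K\|_F$ as the top-left block.
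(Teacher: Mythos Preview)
Your proposal is correct and takes essentially the same approach as the paper: the paper's proof is a two-line citation of the Kerenidis--Prakash data structure~\cite{kerenidis2016quantum} for claims~(1)--(2) and of \cite[Lemma~50]{gilyen2019quantum} for the block-encoding in claim~(3), and you have simply unpacked what those citations say. Your own final remark---that in a writeup you would just cite the template---is exactly what the paper does.
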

\begin{proof}
We use the data structure as described in \cite[Theorem 5.1]{kerenidis2016quantum}. To construct the block-encoding, we utilize \cite[Lemma 50]{gilyen2019quantum}.
\end{proof}

\subsection{Quantum Simulation of Linear Dynamics}

Quantum computers can simulate certain linear ordinary differential equations (ODEs) exponentially faster than classical computers~\cite{berry2014high, krovi2023improved, an2023linear,an2023quantum,gilyen2019quantum}. In this paper, we present a quantum simulation subroutine based on quantum linear system solvers (QLSS)~\cite{berry2014high, krovi2023improved}, as described below. While this approach may not be optimal in terms of state preparation cost compared to quantum singular value transformation~\cite{gilyen2019quantum} or linear combination of Hamiltonian simulation~\cite{an2023linear,an2023quantum}, it allows us to incorporate the Hurwitz stability of the system.

\begin{theorem}[Informal version of \thm{main-block-encode-mat-exp}]
\label{thm:matrix-exponential-block-encoding}
    Suppose that $\AA \in \R^{n\times n}$ is a Hurwitz matrix, and $O_\AA$ is an $(\alpha,0)$-block-encoding of $\AA$. For an arbitrary $t > 0$, we can implement a $(\zeta t, \varepsilon)$-block-encoding of $e^{\AA t}$ using 
    $\widetilde{\mathcal{O}}\left(\alpha \rho t \cdot \poly\log(1/\varepsilon)\right)$
    queries to $O_A$, and $\widetilde{\mathcal{O}}\left(\alpha \rho t \cdot \poly\log(1/\varepsilon)\right)$ queries to additional gates.
    Here, the normalization factor $\zeta = \OO(\alpha \rho)$, and the constant $\rho$ is solely determined by $\AA$.
\end{theorem}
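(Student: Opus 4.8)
The plan is to realize $e^{\AA t}$ as a designated block of the solution to a large, sparse linear system obtained by discretizing the matrix ODE $\dot Y(s) = \AA Y(s)$ with $Y(0)=\mathbb{I}$ (so that $Y(t)=e^{\AA t}$), and then to invert that system with a quantum linear system solver (QLSS). This is the route that lets us build in the Hurwitz stability of $\AA$; a Cauchy-integral representation of $e^{\AA t}$ would be an alternative, but it would not naturally carry the dissipativity.

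First I would set up the discretization following the high-order linear-ODE framework of~\cite{berry2014high,krovi2023improved}: partition $[0,t]$ into $N = \widetilde{\OO}(\alpha t)$ subintervals and, on each, approximate the local propagator by a truncated Taylor series of degree $k = \OO\left(\log(1/\varepsilon)/\log\log(1/\varepsilon)\right)$. Stacking the unknowns $Y(t_0),\dots,Y(t_N)$ together with the auxiliary Taylor-coefficient slots yields a block-bidiagonal system $L\,\mathbf{Y} = \mathbf{F}$, in which $\mathbf{F}$ encodes the initial condition $\mathbb{I}$ and $L$ is assembled from identities, index-shift operators, and rescaled copies of $\AA$ (each scaled by $\Delta t = t/N = \widetilde{\Theta}(1/\alpha)$, hence of norm $\widetilde{\OO}(1)$). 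Using $O_\AA$ together with a few elementary arithmetic and Taylor-coefficient circuits, a short linear combination of unitaries produces an $(\alpha_L,\cdot)$-block-encoding of $L$ with $\alpha_L = \widetilde{\OO}(1)$, each use of which costs $\widetilde{\OO}(1)$ queries to $O_\AA$; the block of $L^{-1}\mathbf{F}$ indexed by the final node $t_N = t$ then approximates $e^{\AA t}$ to error $\varepsilon/2$ by the standard local-truncation estimates, \emph{provided} the scheme is numerically stable.

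The crux — and the step I expect to be the main obstacle — is the stability analysis, i.e.\ controlling $\|L^{-1}\|$ and the solution-to-source amplification $\|L^{-1}\mathbf{F}\|/\|\mathbf{F}\|$. For a generic matrix these quantities are as large as $e^{\|\AA\|t}$; here they must stay polynomial, and this is exactly where Hurwitz stability enters: the discrete propagator inherits the uniform bound $\rho := \sup_{s\ge0}\|e^{\AA s}\| < \infty$, which is finite precisely because $\AA$ is Hurwitz and is the constant ``determined solely by $\AA$'' in the statement. Writing $L^{-1}$ as a Neumann-type series in the rescaled one-step map and comparing it term by term with the continuous semigroup, the uniform bound $\rho$ prevents any exponential growth and yields $\|L^{-1}\| \le (N+1)\rho = \widetilde{\OO}(\alpha\rho t)$, hence $\kappa(L) = \widetilde{\OO}(\alpha\rho t)$, with the amplification ratio controlled by $\rho$ and $t$ in the same fashion.

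Finally I would run a QLSS (e.g.\ via variable-time amplitude amplification or the discrete-adiabatic solver) on the block-encoding of $L$ with right-hand side $\mathbf{F}$. To obtain a block-encoding of the \emph{operator} $e^{\AA t}$, rather than merely the ability to prepare $e^{\AA t}\ket{b}$, I would leave the ``column'' register uncontracted, so that the QLSS unitary is itself an approximate block-encoding of $L^{-1}$ restricted to the relevant input subspace; composing it with the projector onto the initial-node block on the right and onto the final-node block $t_N = t$ on the left extracts $e^{\AA t}$. Since a QLSS uses $\widetilde{\OO}(\alpha_L\|L^{-1}\|\,\poly\log(1/\varepsilon))$ queries to the block-encoding of $L$, and here $\alpha_L\|L^{-1}\| = \widetilde{\OO}(\alpha\rho t)$ with each such query costing $\widetilde{\OO}(1)$ calls to $O_\AA$, the overall cost is $\widetilde{\OO}(\alpha\rho t\,\poly\log(1/\varepsilon))$ queries to $O_\AA$ and a comparable number of other gates. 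All the subnormalization factors ($\alpha_L$, the inverse scaling $\|L^{-1}\|$, the amplification ratio, and the $\ell_1$-weight of the Taylor series) collect into a single prefactor $\zeta t$ with $\zeta = \OO(\alpha\rho)$, and tracking the two error sources — $\varepsilon/2$ from Taylor truncation and $\varepsilon/2$ from the QLSS — gives the claimed $(\zeta t,\varepsilon)$-block-encoding of $e^{\AA t}$.
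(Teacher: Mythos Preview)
Your proposal is correct and follows essentially the same route as the paper: discretize the ODE via the Berry--Krovi truncated-Taylor linear system, block-encode $L$, invert it with a QSVT-based solver, and use the Hurwitz bound $\rho = \sup_{s\ge 0}\|e^{\AA s}\|$ (which is precisely the paper's constant $\rho$) to control $\|L^{-1}\|$ and hence $\kappa_L$ by $\widetilde{\OO}(\alpha\rho t)$, yielding the stated query count and normalization $\zeta = \OO(\alpha\rho)$. The paper's only additional implementation detail you gloss over is that it follows Krovi's construction literally---first block-encoding $(I-M_1)^{-1}$ (the local Taylor-step inverse) and then assembling and inverting $L$---but the extra $\poly(k)$ factors this introduces are exactly the polylogarithmic overhead you already absorb into $\widetilde{\OO}(1)$.
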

More details and the proof of \thm{matrix-exponential-block-encoding} can be found in~\append{exponential-matrix}. Note that the dependence on $t$ in the above result can be further improved using a standard padding technique, but for simplicity, we do not discuss this minor improvement, as it does not affect our main end-to-end result.
We also notice a technique called quantum eigenvalue transformation (QEVT), recently proposed by Low and Su~\citep{low2024quantum}. While this method cannot be directly applied to Hurwitz-stable systems, it may be enhanced to provide a simulation algorithm with a similar cost, as discussed in~\append{qevt_exp}.

\section{Quantum Algorithm for the Lyapunov Equation}

The (continuous-time) Lyapunov equation is a linear matrix equation of the following form,
\begin{align}\label{eqn:lyapunov}
    \AA X + X \AA\trans + \Omega = 0.  
\end{align}

Since this equation is linear in terms of the matrix $X$, it is possible to derive a vectorization form of \eqn{lyapunov} and solve it using a quantum linear system algorithm~\cite{harrow2009quantum,chakraborty2018power}. In this paper, we propose a new quantum algorithm for solving the Lyapunov equation based on an integral representation formula. Our algorithm directly prepares a block-encoded solution matrix $X^*$. Compared to the previous approach, our method leads to an exponentially faster quantum objective function evaluation algorithm (see~\thm{obj-eval}) and a new quantum gradient estimation subroutine (see~\thm{informal-approx-gradient}).

\subsection{Representation}

Given a positive-definite $Q$, there exists a unique positive-definite $X^*$ satisfying \eqn{lyapunov} if and only if $\AA$ is Hurwitz. The unique positive solution is given by
\begin{align}\label{eqn:integral-representation}
    X^* = \int^\infty_{0} e^{\AA t}\Omega e^{\AA\trans t}~\d t.
\end{align}

The integral formula~\eqn{integral-representation} suggests that the solution to the Lyapunov equation can be computed using a numerical integration technique. For a finite $\tau > 0$, we define
\begin{align}
    X_\tau \coloneqq \int^\tau_0 e^{\AA t}\Omega e^{\AA\trans t}~\d t.
\end{align}
For any arbitrary $\varepsilon>0$, we find that a $\tau = \widetilde{\OO}(\log(1/\varepsilon))$ 
is sufficient to ensure an $\varepsilon$-approximate solution. We denote $\kappa \coloneqq \|X^*\|/\lambda_{\min}(\Omega)$. 
\begin{lemma}[Numerical integration]\label{lem:time-integral-error}
    For any $\varepsilon > 0$, we have $\|X^* - X_\tau\| \le \varepsilon$, provided that
    \begin{align}\label{eqn:tau}
        \tau = \kappa \log(\frac{\|\Omega\|\|X^*\|\kappa}{\varepsilon\lambda_{\min}(X^*)}).
    \end{align}
\end{lemma}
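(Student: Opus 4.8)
The plan is to get a clean closed form for the tail error $X^* - X_\tau$ from the fact that $X^*$ itself is a Lyapunov certificate, and then control that form by an exponential-decay estimate governed by $\kappa$. By \eqn{integral-representation} we have $X^* - X_\tau = \int_\tau^\infty e^{\AA t}\Omega e^{\AA\trans t}\,\d t$, and the key observation is that, using the Lyapunov equation \eqn{lyapunov} in the form $\AA X^* + X^*\AA\trans = -\Omega$, the integrand equals $-\tfrac{\d}{\d t}\bigl(e^{\AA t} X^* e^{\AA\trans t}\bigr)$. Since $\AA$ is Hurwitz, $\|e^{\AA t}\|\to 0$ exponentially, so $e^{\AA t} X^* e^{\AA\trans t}\to 0$ as $t\to\infty$, and the fundamental theorem of calculus gives the identity $X^* - X_\tau = e^{\AA\tau} X^* e^{\AA\trans\tau} =: g(\tau)$. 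In particular $g(\tau)\succeq 0$, so $\|X^* - X_\tau\| = \|g(\tau)\|$.

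Next I would prove the decay bound $0 \preceq g(t) \preceq e^{-t/\kappa} X^*$. The point is that $\Omega \succeq \lambda_{\min}(\Omega)\mathbb{I} \succeq \tfrac{\lambda_{\min}(\Omega)}{\|X^*\|} X^* = \tfrac1\kappa X^*$, so conjugating by $e^{\AA t}$ yields $g'(t) = -e^{\AA t}\Omega e^{\AA\trans t} \preceq -\tfrac1\kappa e^{\AA t} X^* e^{\AA\trans t} = -\tfrac1\kappa g(t)$. Then $F(t) := e^{t/\kappa} g(t)$ satisfies $F'(t) = e^{t/\kappa}\bigl(g'(t) + \tfrac1\kappa g(t)\bigr) \preceq 0$, so $F$ is nonincreasing in the Loewner order and $F(t) \preceq F(0) = g(0) = X^*$. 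Taking spectral norms in $0\preceq g(\tau)\preceq e^{-\tau/\kappa}X^*$ gives $\|X^* - X_\tau\| \le e^{-\tau/\kappa}\|X^*\|$.

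Finally, I would choose $\tau$ so that $e^{-\tau/\kappa}\|X^*\| \le \varepsilon$, which holds whenever $\tau \ge \kappa\log(\|X^*\|/\varepsilon)$. Since
\[
\frac{\|\Omega\|\,\|X^*\|\,\kappa}{\varepsilon\,\lambda_{\min}(X^*)} \;=\; \frac{\|X^*\|}{\varepsilon}\cdot\frac{\|\Omega\|}{\lambda_{\min}(\Omega)}\cdot\frac{\|X^*\|}{\lambda_{\min}(X^*)} \;\ge\; \frac{\|X^*\|}{\varepsilon},
\]
the choice of $\tau$ in \eqn{tau} is at least $\kappa\log(\|X^*\|/\varepsilon)$ and hence suffices.

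The main technical care lies in the two analytic steps rather than the algebra: justifying that $e^{\AA t}X^*e^{\AA\trans t}\to 0$ from Hurwitz stability (so the improper integral genuinely equals $g(\tau)$, including absolute convergence of the tail), and justifying that the Loewner-order differential inequality $F'(t)\preceq 0$ integrates to $F(t)\preceq F(0)$ — most cleanly by fixing a vector $v$ and applying the scalar statement to $v^\trans F(t) v$. Both are standard, but they are where the rigor is concentrated.
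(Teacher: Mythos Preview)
Your proof is correct and takes a genuinely different route from the paper's. The paper bounds the tail integral crudely via $\|X^*-X_\tau\|\le\int_\tau^\infty\|\Omega\|\|e^{\AA t}\|\|e^{\AA\trans t}\|\,\d t$ and then invokes an external pointwise bound $\|e^{\AA t}\|^2\le\tfrac{\|X^*\|}{\lambda_{\min}(X^*)}e^{-t/\kappa}$ (cited from~\cite{mohammadi2021convergence-arxiv}), which after integrating gives $\|X^*-X_\tau\|\le\tfrac{\|\Omega\|\|X^*\|\kappa}{\lambda_{\min}(X^*)}e^{-\tau/\kappa}$, matching the prescribed $\tau$ exactly. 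By contrast, you first collapse the tail integral to the closed form $e^{\AA\tau}X^*e^{\AA\trans\tau}$ via the Lyapunov identity, then run a Loewner-order Gr\"onwall argument directly on $g(t)=e^{\AA t}X^*e^{\AA\trans t}$ to obtain the sharper bound $\|X^*-X_\tau\|\le e^{-\tau/\kappa}\|X^*\|$, and finally observe that the paper's $\tau$ overshoots what you need. Your argument is self-contained (no black-box semigroup bound from the literature) and yields a tighter constant; the paper's is shorter on the page because it outsources the decay estimate to a citation. Both give the same asymptotic $\tau=\widetilde{\OO}(\kappa)$, so the difference is purely expository and in the constants.
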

\vspace{-1em}
\begin{proof} Note that $\|X^* - X_\tau\| = \left\|\int^\infty_\tau e^{\AA t}\Omega e^{\AA\trans t}~\d t\right\| \le \int^\infty_\tau \|\Omega\|\|e^{\AA t}\|\|e^{\AA\trans t}\|~\d t$, where $\|e^{\AA t}\|$ (or $\|e^{\AA\trans t}\|$) is upper bounded by $\frac{\|X^*\|}{\lambda_{\min}(X^*)}e^{-t/\kappa}$ (see~\cite[Lemma 12]{mohammadi2021convergence-arxiv}). It follows that $\|X^* - X_\tau\| \le \frac{\|\Omega\|\|X^*\|\kappa}{\lambda_{\min}(X^*)}e^{-\tau/\kappa}$.
Therefore, an integration time as given in \eqn{tau} guarantees that $\|X^* - X_\tau\| \le \varepsilon$.
\end{proof}

\subsection{Algorithm Complexity Analysis}

The matrix $X_\tau$ can be approximated by a trapezoidal rule with $(K+1)$ quadrature node points, namely, 
    \begin{align}\label{eqn:trapezoid}
        X_\tau \approx \sum^K_{k = 0} w_k e^{\AA t_k} \Omega e^{\AA\trans t_k} = \sum^K_{k = 0} w_k \mathscr{F}(t_k),
    \end{align}
where $w_k = \frac{(2 - 1_{k=0,K})\tau}{2K}$, $t_k = \frac{k\tau}{K}$, and $\mathscr{F}(t) \coloneqq e^{\AA t}\Omega e^{\AA\trans t}$. The summation in \eqn{trapezoid} can be computed on a quantum computer by performing linear combinations of block-encoded matrices, which we will explain shortly.

\begin{definition}[Select oracle]\label{defn:select-oracle}
    Let $\AA \in \R^{n\times n}$ be a Hurwitz matrix.
    Given an integer $K > 0$ and two positive scalars $\tau, \varepsilon > 0$, we define the following unitary (named as the \textit{select oracle}):
    \begin{align}\label{eqn:select-oracle}
        \mathrm{select}(\AA,\varepsilon)\coloneqq \sum^{K}_{k = 0} \ket{k}\bra{k}\otimes U_{k},
    \end{align}
    where for each $k = 0,\dots, K$, $U_k$ is a $(\zeta t_k, \varepsilon)$-block-encoding of $e^{\AA t_k}$ with $t_k = k\tau/K$. Here, $\zeta$ denotes some parameter that only depends on $\AA$.
\end{definition}

Now, we consider two select oracles:
\begin{align}
    \mathrm{select}(A,\varepsilon)\coloneqq \sum^{K}_{k = 0} \ket{k}\bra{k}\otimes U_{k},\quad \mathrm{select}(A\trans,\varepsilon)\coloneqq \sum^{K}_{k = 0} \ket{k}\bra{k}\otimes V_k,
\end{align}
where for each $k = 0,\dots, K$, $U_{k}$ (or $V_k$) denotes a block-encoding of $e^{\AA t_k}$ (or $e^{\AA\trans t_k}$) with normalization factor $\zeta t_k$.
Let $O_\Omega$ be a $(\eta,0)$-block-encoding of $\Omega$, and we find that
\begin{align}
    \mathrm{select}(A,\varepsilon) (I\otimes O_\Omega) \mathrm{select}(A\trans,\varepsilon) = \sum^K_{k=0} \ket{k}\bra{k}\otimes W_k,
\end{align}
where $W_k \coloneqq U_k O_\Omega V_k$ is a $(\eta\zeta^2 t_k^2,2\zeta\eta\varepsilon)$-block-encoding of the matrix $\mathscr{F}(t_k)$. Denoting $\lambda_k \coloneqq w_kk^2$, then it is clear that $\sum^K_{k=0} \lambda_k W_k$ is a block-encoding of $X_\tau$. Thus we can implement a block-encoded $X_\tau$ on a quantum computer using a technique known as linear combination of unitaries (LCU) \cite[Lemma 52]{gilyen2019quantum}. The rigorous complexity of this quantum algorithm is given in the following theorem, for which a complete proof is provided in~\append{proof-lyapunov}.

\begin{theorem}\label{thm:lyapunov-equation}
    Suppose that $\AA \in \R^{n\times n}$ is Hurwitz and $\Omega\in \R^{n\times n}$ is positive-definite. Let $O_\AA$ be an $(\alpha, 0)$-block-encoding of $\AA$ and $O_\Omega$ be an $(\eta, 0)$-block-encoding of $\Omega$. Then, we can implement a $(\gamma,\varepsilon)$-block-encoding of $X^*$, the unique solution to the Lyapunov equation~\eqn{lyapunov}, using a single query to $O_\Omega$, $\widetilde{\mathcal{O}}\left(\alpha^2\sqrt{\frac{\eta}{\varepsilon}}\right)$
    queries to controlled $O_\mathcal{A}$ and its inverse, and 
    $\widetilde{\mathcal{O}}\left(\alpha^2\sqrt{\frac{\eta}{\varepsilon}}\right)$ queries to other additional elementary gates.
    Here $\gamma = \widetilde{\OO}(\alpha^2\eta)$.
\end{theorem}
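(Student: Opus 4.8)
The plan is to assemble the block-encoding of $X^\ast$ by chaining together three pieces: (i) \lem{time-integral-error}, which replaces the infinite integral $X^\ast$ by the truncated integral $X_\tau$ at the price $\varepsilon/2$, with $\tau = \widetilde{\OO}(\kappa\log(1/\varepsilon))$; (ii) the trapezoidal-rule approximation \eqn{trapezoid}, which replaces $X_\tau$ by $\sum_{k=0}^K \lambda_k \mathscr{F}(t_k)$ for a suitable node count $K$; and (iii) the LCU construction described just before the theorem, which turns the select oracles and $O_\Omega$ into a block-encoding of $\sum_k \lambda_k W_k$. The total error is split $\varepsilon/2 + \varepsilon/2$ between (i) and (ii)+(iii), and the query cost is read off from the number of node points $K$ (each node calls $\mathrm{select}(A,\cdot)$ and $\mathrm{select}(A\trans,\cdot)$ once, and each such call invokes \thm{matrix-exponential-block-encoding} at time $t_k \le \tau$).

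First I would bound the trapezoidal quadrature error. The integrand $\mathscr{F}(t) = e^{\AA t}\Omega e^{\AA\trans t}$ is smooth on $[0,\tau]$, and its second derivative is controlled via $\|\mathscr{F}''(t)\| \le C \|\Omega\| e^{-2t/\kappa}\poly(\text{spectral data of }\AA)$ using the same exponential decay bound $\|e^{\AA t}\| \le (\|X^\ast\|/\lambda_{\min}(X^\ast))e^{-t/\kappa}$ invoked in \lem{time-integral-error}. The standard composite trapezoid error estimate then gives $\|X_\tau - \sum_k \lambda_k \mathscr{F}(t_k)\| = \OO(\tau^3 K^{-2}\cdot \text{const})$, so $K = \widetilde{\OO}(\sqrt{1/\varepsilon})$ node points suffice (the $\tau$-dependence is polylogarithmic and absorbed into $\widetilde{\OO}$). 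Second, I would account for the fact that each $U_k$ is only an $\varepsilon'$-block-encoding of $e^{\AA t_k}$ rather than exact: the product $W_k = U_k O_\Omega V_k$ is a $(\eta\zeta^2 t_k^2, 2\zeta\eta\varepsilon')$-block-encoding of $\mathscr{F}(t_k)$ as already noted, and taking the LCU of these propagates the per-term error linearly in $\sum_k |\lambda_k|$, so choosing $\varepsilon'$ polynomially smaller than $\varepsilon$ (costing only $\poly\log(1/\varepsilon)$ extra queries per \thm{matrix-exponential-block-encoding}) keeps this contribution below $\varepsilon/4$. Third, I would invoke \cite[Lemma 52]{gilyen2019quantum} for the LCU: the resulting normalization factor is $\sum_k |\lambda_k| \cdot \eta\zeta^2 t_k^2 = \eta\zeta^2 \sum_k w_k t_k^2 \approx \eta\zeta^2 \int_0^\tau t^2\,dt = \widetilde{\OO}(\eta\zeta^2\tau^3)$, which with $\zeta = \OO(\alpha\rho)$ and $\tau, \rho$ polylogarithmic/condition-number factors gives $\gamma = \widetilde{\OO}(\alpha^2\eta)$ as claimed.

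For the query count: the LCU uses one call to $\mathrm{select}(A,\varepsilon')$ and one to $\mathrm{select}(A\trans,\varepsilon')$, each of which is a controlled sum over $k=0,\dots,K$ of block-encodings $U_k$; implementing all $U_k$ together costs $\sum_k \widetilde{\OO}(\alpha\rho t_k \poly\log(1/\varepsilon')) = \widetilde{\OO}(\alpha\rho\tau K)$ queries to controlled $O_\AA$ by \thm{matrix-exponential-block-encoding}. Substituting $\tau = \widetilde{\OO}(1)$ (polylog and condition-number) and $K = \widetilde{\OO}(\sqrt{1/\varepsilon})$ would give $\widetilde{\OO}(\alpha\sqrt{1/\varepsilon})$, which is off by a factor $\alpha\sqrt\eta$ from the stated $\widetilde{\OO}(\alpha^2\sqrt{\eta/\varepsilon})$; I expect the extra $\alpha$ comes from the amplitude-amplification step inside LCU needed to deflate the normalization $\gamma$ against $\|X^\ast\|$ (a factor $\sqrt{\gamma/\|X^\ast\|} = \widetilde{\OO}(\alpha\sqrt\eta)$ after the truncation/quadrature normalizations are folded in), and the single query to $O_\Omega$ is because $\Omega$ appears only once, sandwiched between the two select oracles. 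The main obstacle will be bookkeeping the three error budgets simultaneously with the normalization constants — in particular making sure the $\zeta$, $\eta$, $\tau$, and $\rho$ factors that enter the trapezoid weights $\lambda_k = w_k k^2$, the per-term block-encoding subnormalizations $\eta\zeta^2 t_k^2$, and the final amplitude-amplification overhead all combine to exactly $\gamma = \widetilde{\OO}(\alpha^2\eta)$ and $\widetilde{\OO}(\alpha^2\sqrt{\eta/\varepsilon})$ queries, rather than merely "polynomially related" quantities; the analytic content (decay of $\mathscr{F}$, smoothness for the quadrature bound) is routine given \lem{time-integral-error}.
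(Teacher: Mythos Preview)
Your three-step skeleton (truncate via \lem{time-integral-error}, discretize via trapezoid, assemble via LCU with the two select oracles sandwiching $O_\Omega$) is exactly the paper's route, and the normalization bookkeeping for $\gamma$ is right. The genuine gap is in the node count $K$ and your subsequent explanation of the missing $\alpha\sqrt{\eta}$.

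The constant in the composite-trapezoid bound $\OO(\tau^3 K^{-2}\cdot\text{const})$ is governed by $\max_t \|\mathscr{F}''(t)\|$, and since $\mathscr{F}'' = \AA^2\mathscr{F} + 2\AA\mathscr{F}\AA\trans + \mathscr{F}(\AA\trans)^2$ this constant is of order $\|\AA\|^2\|\Omega\| \le \alpha^2\eta$ (up to factors suppressed by $\widetilde{\OO}$). In this paper $\widetilde{\OO}$ hides only condition numbers and polylog factors in $n,\varepsilon$, \emph{not} $\alpha$ or $\eta$, so the $\alpha^2\eta$ survives and forces
\[
K = \OO\!\left(\frac{\tau^{3/2}\alpha\,\eta^{1/2}}{\varepsilon^{1/2}}\right) = \widetilde{\OO}\!\left(\alpha\sqrt{\eta/\varepsilon}\right),
\]
not $K = \widetilde{\OO}(\sqrt{1/\varepsilon})$. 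Plugging this into the select-oracle cost $\widetilde{\OO}(\alpha\rho\tau K)$ (\lem{select-oracle}) gives $\widetilde{\OO}(\alpha^2\sqrt{\eta/\varepsilon})$ directly. There is \emph{no} amplitude-amplification step in the construction: the LCU of \cite[Lemma 52]{gilyen2019quantum} produces the block-encoding with normalization $\gamma$ as is, and the theorem only claims a $(\gamma,\varepsilon)$-block-encoding, not a deflated one. So drop the speculative amplitude-amplification paragraph and instead track $\alpha^2\eta$ through the quadrature constant; that closes the count cleanly.
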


\subsection{Objective Function Evaluation}
As a direct consequence
of \thm{lyapunov-equation}, we can evaluate the objective function value $f(K)$ for a given $K \in \mathcal{S}_K$ in a cost that is logarithmic in the dimension parameter $n$. This result demonstrates an exponential quantum advantage for the objective function evaluation task, as any known classical algorithm for this task requires at least matrix multiplication time.

\begin{theorem}\label{thm:obj-eval}
    Assume that we have efficient procedures (as described in~\assump{sparse-oracle}) to access the problem data $A, B, Q, R$ in $\OO(\poly\log(n))$ time.
    Let $K \in \mathcal{S}_K$ be a stabilizing policy stored in a quantum-accessible data structure. 
    We can estimate the objective function $f(K)$ up to multiplicative error $\theta$ in cost $\widetilde{\OO}\left(\frac{1}{\theta^2}\right)$.
\end{theorem}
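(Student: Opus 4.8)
The object to estimate is $f(K) = \Tr[P(K)\Sigma_0]$. Comparing \eqn{PK} with \eqn{integral-representation}, $P(K)$ is exactly the solution $X^*$ of the Lyapunov equation \eqn{lyapunov} with $\AA = (A-BK)\trans$ and $\Omega = Q + K\trans R K$. The plan is therefore: (i) assemble a block-encoding of $P(K)$ by feeding block-encodings of $(A-BK)\trans$ and $Q+K\trans RK$ into \thm{lyapunov-equation}; and (ii) extract $\Tr[P(K)\Sigma_0]$ by a Hadamard-test measurement on that block-encoding, averaged over samples $\xi\sim\DD$, using the identity $f(K) = \mathbb{E}_{\xi\sim\DD}[\|\xi\|^2\,\bra{\xi} P(K)\ket{\xi}]$.

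For step (i): from \assump{sparse-oracle} the standard sparse-access construction~\cite{gilyen2019quantum} gives $\poly\log(n)$-time block-encodings of $A,B,Q,R$ with normalization $\OO(s\cdot\|\cdot\|_{\max})$, and \lem{block_encoding_data_structure} gives block-encodings of $K$ and $K\trans$ with normalization $\|K\|_F$. Applying the product and linear-combination rules for block-encodings~\cite{gilyen2019quantum}, these compose into an $(\alpha,\cdot,\cdot)$-block-encoding of $(A-BK)\trans$ and an $(\eta,\cdot,\cdot)$-block-encoding of $Q+K\trans RK$, where $\alpha,\eta$ are polynomial in $s$, $\|K\|_F$, and the data-matrix norms — all suppressed by $\widetilde\OO$. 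Since $K\in\mathcal{S}_K$, the matrix $A-BK$ and hence its transpose is Hurwitz, and $Q\succ 0$ forces $Q+K\trans RK\succ 0$, so the hypotheses of \thm{lyapunov-equation} are met; it yields a $(\gamma,\varepsilon')$-block-encoding $U_P$ of $P(K)$ with $\gamma=\widetilde\OO(\alpha^2\eta)$ using $\widetilde\OO(\alpha^2\sqrt{\eta/\varepsilon'})$ queries to the data oracles.

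For step (ii): assume access to $\DD$ as i.i.d.\ samples $\xi\sim\DD$ together with a ($\poly\log n$-time) preparation of the normalized state $\ket{\xi}=\xi/\|\xi\|$. For a fixed $\xi$, run a Hadamard test that applies $U_P$ controlled on an ancilla qubit, with the system register in $\ket{\xi}$ and the block-encoding ancillas in $\ket{0}$; the expectation of the measured observable equals $\bra{\xi} P(K)\ket{\xi}/\gamma$ up to a bias of size $\varepsilon'$, and it is non-negative because $P(K)\succ 0$. Estimating this to additive precision $\delta$ costs $\widetilde\OO(1/\delta^2)$ shots (or $\widetilde\OO(1/\delta)$ with amplitude estimation); rescaling the outcome by $\gamma\|\xi\|^2$ and averaging over $N$ fresh samples of $\xi$ produces an estimator of $f(K)$. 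Choosing $N=\widetilde\OO(1/\theta^2)$ and letting $\delta,\varepsilon'$ scale like $\theta$ times appropriate condition numbers makes the total (statistical plus bias) error at most $\theta f(K)$.

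The crux is the conversion from additive to multiplicative error, which requires a positive lower bound on $f(K)$ in terms of condition numbers and a comparison of $\gamma$ with it. From \eqn{PK} one has $P(K)\succeq\lambda_{\min}(Q+K\trans RK)\int_0^\infty e^{\AA t}e^{\AA\trans t}\,\d t\succeq\frac{\lambda_{\min}(Q)}{2\|A-BK\|}\mathbb{I}$, using $e^{\AA t}e^{\AA\trans t}\succeq e^{-2\|\AA\|t}\mathbb{I}$, whence $f(K)=\Tr[P(K)\Sigma_0]\ge\frac{\lambda_{\min}(Q)}{2\|A-BK\|}\Tr[\Sigma_0]>0$; the $\Tr[\Sigma_0]$ factor cancels against the second moment $\mathbb{E}\|\xi\|^2$, so the required precision and query counts are governed by the condition number $\alpha^2\eta/\lambda_{\min}(P(K))$, and in particular the $\varepsilon'$-dependent cost $\widetilde\OO(\alpha^2\sqrt{\eta/\varepsilon'})=\widetilde\OO(1/\sqrt\theta)$ is dominated by the shot count $\widetilde\OO(1/\theta^2)$. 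A mild tail/boundedness assumption on $\DD$ ensures $\|\xi\|^2\bra{\xi} P(K)\ket{\xi}$ has bounded relative variance, so the sample average concentrates at the $1/\theta^2$ rate by Chebyshev/Hoeffding. With these bounds in place, every factor except $1/\theta^2$ is a condition number, a matrix norm, the sparsity $s$, or a $\poly\log(n,1/\theta)$ term — all absorbed into $\widetilde\OO$ — giving the claimed $\widetilde\OO(1/\theta^2)$ cost. The main obstacle I anticipate is precisely the bookkeeping of composed errors (from the constructions of $(A-BK)\trans$ and $\Omega$, the matrix exponentials, the trapezoidal rule inside \thm{lyapunov-equation}, and the final Hadamard test) together with a clean statement of the access model for $\DD$/$\Sigma_0$, rather than any single hard estimate.
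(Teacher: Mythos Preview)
Your step (i) --- building a block-encoding of $P(K)$ by feeding block-encodings of $(A-BK)^\top$ and $Q+K^\top RK$ into \thm{lyapunov-equation} --- is exactly what the paper does. The divergence is entirely in step (ii), and there your complexity accounting does not close.

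The gap is that the quantity $\widetilde{\OO}(\alpha^2\sqrt{\eta/\varepsilon'})$ from \thm{lyapunov-equation} is the gate count of \emph{one application} of the block-encoding $U_P$; every Hadamard-test shot applies $U_P$ once, so this cost \emph{multiplies} the shot count rather than adding to it. With $N=\widetilde{\OO}(1/\theta^2)$ shots and $\varepsilon'=\widetilde{\OO}(\theta)$ (which is forced, since after rescaling by $\gamma\|\xi\|^2$ the bias in your estimator is $\widetilde{\OO}(\varepsilon')$ and must be at most $\theta f(K)$), the total is $\widetilde{\OO}(1/\theta^2)\cdot\widetilde{\OO}(1/\sqrt{\theta})=\widetilde{\OO}(1/\theta^{2.5})$. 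The sentence ``the $\varepsilon'$-dependent cost \ldots\ is dominated by the shot count'' is where the argument breaks. Replacing shots by amplitude estimation on a fixed $\xi$ does not save you either, because you still need $\widetilde{\OO}(1/\theta^2)$ independent samples of $\xi$ to control the variance over $\DD$.

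The paper sidesteps both issues. First, it takes $\Sigma_0=\mathbb{I}$ without loss of generality, so no access model for $\DD$ is needed and the target is simply $\Tr[P(K)]$. Second, it invokes a dedicated quantum trace-estimation routine (\thm{trace-estimate-PD}): block-encode $\sqrt{P(K)/\gamma}$ via QSVT using the lower bound $\lambda_{\min}(P(K))\ge\lambda_{\min}(Q)$, apply it to a maximally entangled state, and amplitude-estimate the resulting amplitude. That routine makes only $\widetilde{\OO}(\gamma^{3/2}/\theta)$ calls to $U_P$, at the price of requiring block-encoding precision $\varepsilon\le\widetilde{\OO}(\theta^2/\gamma^2)$, so each call costs $\widetilde{\OO}(1/\theta)$ --- and the product is $\widetilde{\OO}(1/\theta^2)$. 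The ingredient your Hadamard-test-plus-sampling approach is missing is precisely this $1/\theta$-query (rather than $1/\theta^2$-sample) readout of the trace, which the square-root/amplitude-estimation combination provides.
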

\vspace{-1.3em}
\begin{proof}
    Given a $K \in \mathcal{S}_K$, we can evaluate the objective function $f(K)$ via the formula $f(K) = \Tr[P(K)\Sigma_0]$. Here, without loss of generality, we assume $\Sigma_0 = \mathbb{I}$. $P(K)$ has a closed-form representation as in~\eqn{PK}, which corresponds to the unique positive solution to the Lyapunov equation
    \begin{align}\label{eqn:PK-lyapunov}
        (A-BK)\trans P + P (A-BK) + \left(Q + K\trans R K\right) = 0.
    \end{align}
    We denote $\AA = A - BK$, and $\Omega = Q + K\trans R K$. By \lem{implement-block-encoding} and \rmk{implement-block-encoding}, we can block-encode $\AA$ with normalization factor $s(\|K\|_F +1)$ and $\Omega$ with normalization factor $s(\|K\|^2_F +1)$, both in cost $\OO(\poly\log(n,s))$. Suppose that $f(K) = a$, it follows from \lem{general-estimate} that $\|K\|_F \le \OO(a)$. Therefore, by \thm{lyapunov-equation}, we can implement a $(\gamma,\varepsilon)$-block-encoding of $P(K)$ in cost $\Tilde{\OO}\left(a^3\rho \sqrt{\frac{\kappa^5}{\varepsilon}}\right)$, 
    where $\gamma \le \widetilde{\OO}(a^4\rho^2\kappa^3)$. 
    Note that $P(K)$ is a Hermitian matrix, and $\lambda_{\min}(P(K))\ge \lambda_{\min}(Q)$, by invoking \thm{trace-estimate-PD}, we can estimate $f(K) = \Tr[P(K)]$ up to a multiplicative error $\theta$ in cost $\widetilde{\OO}\left(\frac{a^3\rho}{\theta}\sqrt{\frac{\gamma^3\kappa^5}{\varepsilon}}\right) \le \widetilde{\OO}\left(\frac{a^{13}\rho^6\kappa^{10}}{\theta^2}\right)$. Here, the error parameter must be chosen so that $\varepsilon \le \widetilde{\OO}(\theta^2/\gamma^2)$.
\end{proof}

\section{Quantum Policy Gradient for Large-Scale Control}

\subsection{Quantum Gradient Estimation}
\begin{definition}\label{def:theta-robust}
    Given any $K \in \mathcal{S}_K$, we call $G \in \R^{m\times n}$ a $\theta$-\textit{robust estimate} of $\nabla f(K)$ if it approximates the gradient $\nabla f(K)$ up to a multiplicative error $\theta$, i.e., 
    \begin{align}
        \|G - \nabla f(K)\|_F \le \theta \|\nabla f(K)\|_F.
    \end{align}
\end{definition}

\vspace{-1.5mm}
Here, we utilize the close-form expression of the policy gradient $\nabla f(K)$, as shown in~\eqn{exact-grad}, to construct a quantum algorithm for gradient estimation. The complete theorem statement and the proof are in~\append{proof-approx-grad}.

\begin{theorem}[Informal version of \thm{approx-gradient}]\label{thm:informal-approx-gradient}
    Assume we have efficient procedures (as described in~\assump{sparse-oracle}) to access $A, B, Q, R$ in $\OO(\poly\log(n))$ time.
    Let $K \in \mathcal{S}_K$ be a stabilizing policy stored in a quantum-accessible data structure. Provided that $\|K - K^*\| > \varepsilon$ and $m \ll n$, we can compute a $\theta$-robust estimate of $\nabla f(K)$ in cost $\widetilde{\mathcal{O}}\left(\frac{n}{\theta^{1.5}\varepsilon^{1.5}}\right)$.
\end{theorem}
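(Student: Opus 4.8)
The plan is to build the gradient estimate $G$ by realizing each factor in the closed-form expression $\nabla f(K) = 2(RK - B^\top P(K))X(K)$ as a block-encoded matrix, multiplying the block-encodings together, and then reading out the entries classically. First I would use \lem{block_encoding_data_structure} to obtain block-encodings of $K$ and $K^\top$ from the quantum-accessible data structure, and use \assump{sparse-oracle} (via the sparse-input block-encoding construction referenced in \thm{obj-eval}) to block-encode $A, B, Q, R$ at cost $\OO(\poly\log n)$. Composing these gives block-encodings of $\AA = A - BK$ and $\Omega = Q + K^\top R K$ with normalization factors polynomial in $\|K\|_F$ and $s$, exactly as in the proof of \thm{obj-eval}. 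Then I would invoke \thm{lyapunov-equation} twice: once on the pair $(\AA^\top, Q + K^\top R K)$ to get a block-encoding of $P(K)$ (the solution of \eqn{PK-lyapunov}), and once on the pair $(\AA, \Sigma_0)$ to get a block-encoding of $X(K)$, which satisfies the Lyapunov equation $\AA X + X \AA^\top + \Sigma_0 = 0$ by differentiating \eqn{XK}. Finally I would block-encode the product $2(RK - B^\top P(K))X(K)$ using standard multiplication and linear-combination lemmas for block-encodings.

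The next step is the classical readout. Since $m \ll n$, the gradient $G \in \R^{m \times n}$ has only $mn = \widetilde{\OO}(n)$ entries. I would estimate each entry $G_{ij}$ — or, more efficiently, each of the $m$ rows viewed as a vector — by a combination of amplitude estimation and the Hadamard-test / overlap-estimation primitives applied to the block-encoding, up to the additive precision dictated by the target multiplicative error $\theta \|\nabla f(K)\|_F$. The condition $\|K - K^*\| > \varepsilon$ enters here: it provides a lower bound on $\|\nabla f(K)\|_F$ in terms of $\varepsilon$ (via the gradient domination / $\mathrm{PL}$-type inequality for the LQR landscape, cf.~\cite{mohammadi2021convergence,fazel2018global}), so that the required \emph{additive} precision is $\Omega(\theta \varepsilon)$ rather than something uncontrolled. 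This converts the multiplicative-error requirement into an absolute precision $\delta = \widetilde{\Theta}(\theta \varepsilon)$, and amplitude estimation to precision $\delta$ costs $\widetilde{\OO}(1/\delta)$ queries; combined with the dependence already present in \thm{lyapunov-equation} (which contributes a $\sqrt{1/\varepsilon'}$-type factor when the internal block-encoding error $\varepsilon'$ is pushed down to what the readout needs), I expect the exponents to assemble into the claimed $\widetilde{\OO}\!\left(n/(\theta^{1.5}\varepsilon^{1.5})\right)$ — roughly one factor of $n$ from enumerating the $mn$ entries, one power of $\theta^{-1}\varepsilon^{-1}$ from amplitude estimation, and a further $\theta^{-0.5}\varepsilon^{-0.5}$ from the square-root cost of the Lyapunov subroutine in the internal precision.

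The main obstacle I anticipate is the careful bookkeeping of normalization factors and condition numbers through the chain of block-encoding compositions. Each time we multiply block-encodings, the normalization factors multiply, and the final additive precision achievable in the readout is the target precision divided by the product of all these normalization factors; so to get a $\theta$-robust estimate we must run the Lyapunov subroutine and the amplitude estimation at an internal precision that scales inversely with (a polynomial in) $\|K\|_F$, $s$, and the various condition numbers $\kappa, \rho$. Establishing that $\|K\|_F$ stays bounded along the trajectory (which follows from $f(K)$ being bounded, as in the proof of \thm{obj-eval} via \lem{general-estimate}), and that these factors only contribute to the suppressed $\widetilde{\OO}(\cdot)$ part, is the delicate part. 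A secondary subtlety is verifying that the $X(K)$-Lyapunov equation inherits Hurwitz stability of $\AA$ (immediate from $K \in \mathcal{S}_K$) and positive-definiteness of $\Sigma_0$ (which we may take to be $\mathbb{I}$ without loss of generality, exactly as in \thm{obj-eval}), so that \thm{lyapunov-equation} applies verbatim. Once these are in hand, the cost accounting is routine and yields the stated bound.
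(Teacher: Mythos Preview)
Your plan matches the paper's proof in every structural respect: block-encode $P(K)$ and $X(K)$ by two invocations of \thm{lyapunov-equation}, combine them into a block-encoding of $\nabla f(K)$ (this is the paper's \lem{block-encoding-fK}), invoke the PL-type lower bound $\|\nabla f(K)\|_F \ge c\varepsilon$ (the paper's \lem{norm-estimation-before-converge}) to convert the multiplicative target into an absolute one, and track normalization factors via \lem{general-estimate}.

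The one place your sketch is loose enough to matter is the readout. Entry-wise Hadamard tests to additive precision $\delta$ would need $\delta \sim \theta\varepsilon/\sqrt{mn}$ to control the \emph{Frobenius} error, which costs an extra $\sqrt{mn}$ and yields $n^{1.5}$ rather than $n$; your accounting ``one factor of $n$ from enumerating the entries, one power of $(\theta\varepsilon)^{-1}$ from amplitude estimation'' is the right target but is not what that primitive delivers. The paper's readout is instead a three-part decomposition: (i) apply the block-encoding $U_G^\top$ to one half of a maximally entangled state over the $m$ row indices, which prepares the vectorized state $\ket{G_{\mathrm{vec}}}$ with amplitude $\|G\|_F/(\gamma_\nabla\sqrt{m})$ (\lem{matrix-vectorization}); (ii) run $mn$-dimensional pure-state tomography on that state to error $\varepsilon_r \sim \theta$ to recover the \emph{normalized direction} $\mathcal{G}$ (\lem{entry-estimation}); (iii) separately estimate the scalar $\|G\|_F$ by amplitude estimation to additive error $\varepsilon_a \sim \theta\varepsilon$ (\lem{norm-estimation}); the product $a_{\mathrm{est}}\mathcal{G}$ is then shown to be $\theta$-robust. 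With the block-encoding error also set to $\varepsilon_b \sim \theta\varepsilon$, the three pieces combine into the $\widetilde{\OO}(m^{1.5}n/(\theta^{1.5}\varepsilon^{1.5}))$ you anticipated.
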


\subsection{Quantum Policy Gradient}

Our main quantum algorithm for LQR is summarized in \algo{quantum-policy-grad}.

\begin{algorithm}[H]
\begin{small}
\caption{Quantum policy gradient}
\label{algo:quantum-policy-grad}
\vspace{5pt}
\textbf{Inputs:} $A, B, Q, R$ (problem data), $K_0 \in \mathcal{S}_K$ (initial guess), $\sigma > 0$ (step size/learning rate), $\theta$ (robustness parameter), $N$ (number of iterations)  \\
\textbf{Output:} an approximate solution $K_{N}$
\vspace{7pt}
\hrule
\vspace{5pt}
\begin{algorithmic}

\For{$k\in\{0, 1, \dots,N-1\}$}
    \State Compute a $\theta$-robust estimate of $\nabla f(K_k)$, denoted as $G_k$, using \thm{informal-approx-gradient}. 
    \State Update the quantum-accessible data structure using the rule: $K_{k+1} = K_k - \sigma G_k$.
\EndFor
\end{algorithmic}
\end{small}
\end{algorithm}

We can prove that the iterates in \algo{quantum-policy-grad} converges to the optimal control policy $K^*$ at a linear rate (\prop{linear-converge-rate}). It follows that our algorithm can find an $\varepsilon$-approximate optimal policy with an end-to-end cost $\widetilde{\OO}\left(\frac{n}{\varepsilon^{1.5}}\right)$.

\begin{theorem}[Informal version of \thm{policy-grad-main}]\label{thm:informal-policy-grad-main}
    Assume that we have efficient procedures (as described in~\assump{sparse-oracle}) to access the problem data $A, B, Q, R$ in $\OO(\poly\log(n))$ time.
    Let $K_0 \in \mathcal{S}_K$ be a stabilizing policy and assume that $m \ll n$. Then, \algo{quantum-policy-grad} outputs an $\varepsilon$-approximate solution to \prob{lqr} in cost $\widetilde{\OO}\left(\frac{n}{\varepsilon^{1.5}}\right)$.
\end{theorem}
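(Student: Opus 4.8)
The plan is to combine two facts established (or to be established) earlier: the linear convergence of the inexact policy-gradient iteration --- \prop{linear-converge-rate}, which extends \cite[Theorem 2]{mohammadi2021convergence} from exact gradients to the $\theta$-robust estimates used in \algo{quantum-policy-grad} --- and the per-iteration quantum cost of gradient estimation, \thm{informal-approx-gradient}. At a high level, linear convergence forces only $N=\widetilde{\OO}(\log(1/\varepsilon))$ iterations, each iteration costs at most $\widetilde{\OO}(n/\varepsilon^{1.5})$, and the $\log(1/\varepsilon)$ factor is absorbed by $\widetilde{\OO}(\cdot)$. Concretely, I would first fix the robustness parameter $\theta$ to be a sufficiently small absolute constant (depending only on the problem data through $\lambda_{\min}(Q),\lambda_{\min}(R),\lambda_{\min}(\Sigma_0),\|\Sigma_0\|$ and the landscape constants of $f$) and the step size $\sigma$ likewise a small constant, so that \prop{linear-converge-rate} applies to $K_{k+1}=K_k-\sigma G_k$ with $\|G_k-\nabla f(K_k)\|_F\le\theta\|\nabla f(K_k)\|_F$: this keeps every $K_k$ in $\mathcal{S}_K$, makes $f(K_k)$ nonincreasing (hence $\|K_k\|_F\le\OO(f(K_0))=\OO(1)$ by the norm bound used in the proof of \thm{obj-eval}, so all block-encoding normalization factors feeding \thm{informal-approx-gradient} stay uniformly bounded along the trajectory), and yields $\|K_k-K^*\|_F\le\beta^k\|K_0-K^*\|_F$ for some $\beta=\beta(\theta)\in(0,1)$; choosing $\sigma$ small additionally gives the matching one-step lower bound $\|K_{k+1}-K^*\|_F\ge(1-\sigma(1+\theta)L)\|K_k-K^*\|_F$ (using $\nabla f(K^*)=0$ and local Lipschitzness of $\nabla f$ with constant $L$), so the distance to $K^*$ cannot drop by more than a constant factor in one step.

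Next I would pin down the iteration count and a workable stopping rule. From the upper bound above, after $N=\lceil\log_{1/\beta}(D/\varepsilon)\rceil=\widetilde{\OO}(\log(1/\varepsilon))$ steps --- where $D$ is any a priori upper bound on $\|K_0-K^*\|_F$, e.g.\ $D=\|K_0\|_F+\OO(f(K_0))$ with $f(K_0)$ estimated by \thm{obj-eval} --- we have $\|K_N-K^*\|_F\le\varepsilon$, which solves \prob{lqr}. Since near $K^*$ one has $\|\nabla f(K_k)\|_F\asymp\|K_k-K^*\|_F$ (upper bound from $\nabla f(K^*)=0$, lower bound from the nondegenerate local curvature of $f$) and $\|G_k\|_F$ estimates $\|\nabla f(K_k)\|_F$ up to the factor $1\pm\theta$, one can equivalently terminate at the first iteration $k^\star$ with $\|G_{k^\star}\|_F\le c\varepsilon$ for a suitable constant $c$; the one-step lower bound guarantees this occurs only once $\|K_{k^\star}-K^*\|_F=\Theta(\varepsilon)$, and in particular every iterate at which a gradient is actually estimated satisfies $\|K_k-K^*\|_F>\varepsilon$.

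The cost accounting is then immediate. At each iteration $k<k^\star$ we have $\|K_k-K^*\|_F>\varepsilon$, so \thm{informal-approx-gradient} outputs the required $\theta$-robust estimate $G_k$ in cost $\widetilde{\OO}(n/(\theta^{1.5}\varepsilon^{1.5}))=\widetilde{\OO}(n/\varepsilon^{1.5})$ because $\theta=\Theta(1)$; writing the dense update $K_{k+1}=K_k-\sigma G_k$ into the data structure of \lem{block_encoding_data_structure} touches the same $\OO(mn)$ entries at $\OO(\log^2(mn))$ each and is dominated by the gradient-estimation cost, and the stopping test costs only $\widetilde{\OO}(\poly\log n)$ (for $\|G_k\|_F$) plus $\widetilde{\OO}(1)$ (for $f(K_k)$ via \thm{obj-eval}). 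Summing over $k^\star\le N=\widetilde{\OO}(\log(1/\varepsilon))$ iterations and absorbing the logarithmic factor into $\widetilde{\OO}(\cdot)$ gives the claimed $\widetilde{\OO}(n/\varepsilon^{1.5})$; the condition-number dependence inherited from \thm{lyapunov-equation} and from the landscape constants in \prop{linear-converge-rate} is likewise suppressed by $\widetilde{\OO}(\cdot)$.

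I expect the main obstacle to be controlling the per-iteration cost near the optimum rather than the iteration count: \thm{informal-approx-gradient}'s cost scales like $1/\|K_k-K^*\|^{1.5}$, so if the trajectory overshot to $\|K_k-K^*\|_F\ll\varepsilon$ the accounting above would break. This is precisely why the first step must also extract the one-step lower bound on $\|K_{k+1}-K^*\|_F$ and why the second step uses the cheaply computable proxy $\|G_k\|_F\asymp\|K_k-K^*\|_F$ to stop as soon as the $\varepsilon$-ball is reached, so that no gradient is ever estimated at a point asymptotically closer than $\varepsilon$ to $K^*$. A secondary, routine point is to check that all block-encoding normalization factors --- chiefly $\|K_k\|_F$ --- remain $\OO(1)$ along the trajectory, which follows from the monotonicity of $f(K_k)$ together with the norm bound from the proof of \thm{obj-eval}.
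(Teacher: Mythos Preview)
Your proposal is correct and follows essentially the same approach as the paper: combine the linear convergence of \prop{linear-converge-rate} (yielding $\widetilde{\OO}(\log(1/\varepsilon))$ iterations) with the per-iteration gradient-estimation cost from \thm{informal-approx-gradient}, and absorb the logarithmic factor into $\widetilde{\OO}(\cdot)$. Your handling of the stopping/overshooting issue is in fact more careful than the paper's own two-line proof, which simply multiplies the iteration count by the per-step cost without explicitly checking that the hypothesis $\|K_k-K^*\|_F>\varepsilon$ of \thm{informal-approx-gradient} holds at every step where a gradient is computed.
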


\subsection{Numerical Experiments}\label{sec:numerical}

\begin{figure}[!ht]
\vspace{-1em}
\centering
\begin{tabular}{@{}c@{\hspace{0.3mm}}c@{\hspace{0.3mm}}c@{}}
    \includegraphics[width=0.32\linewidth]{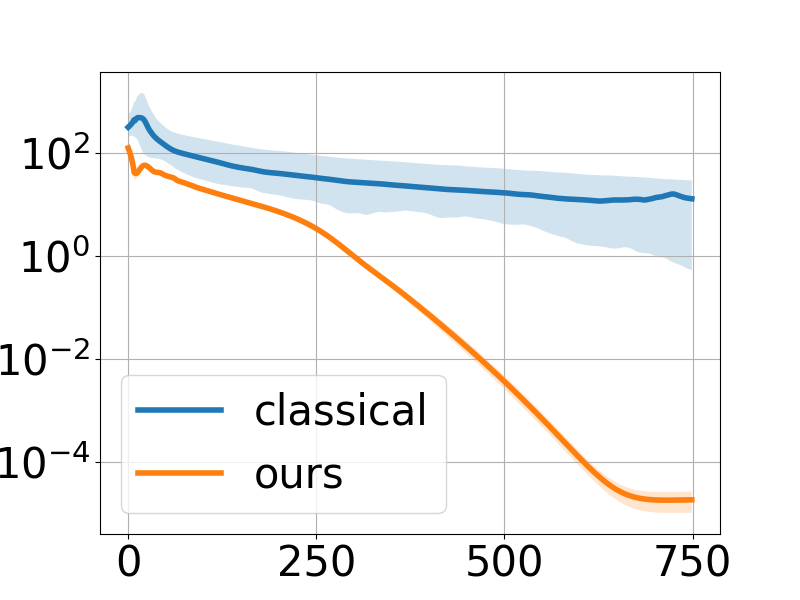} &
    \includegraphics[width=0.32\linewidth]{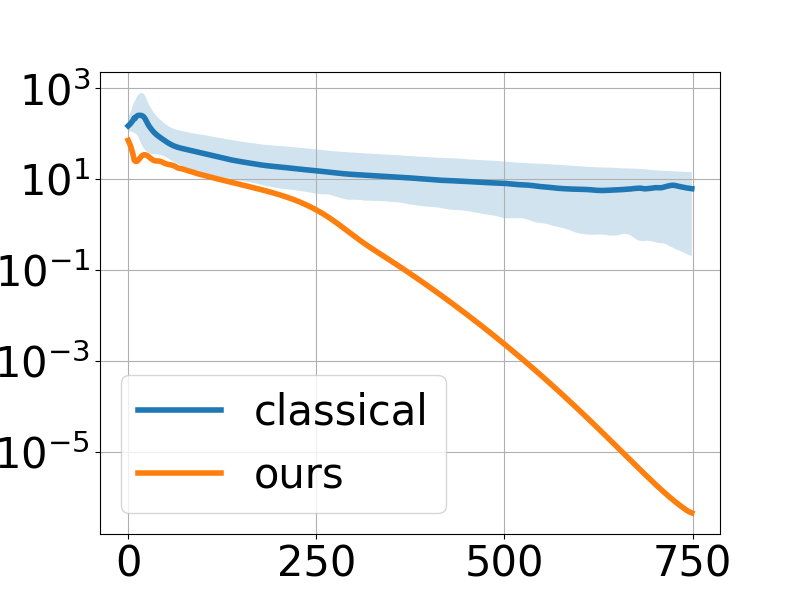}  &
    \includegraphics[width=0.32\linewidth]{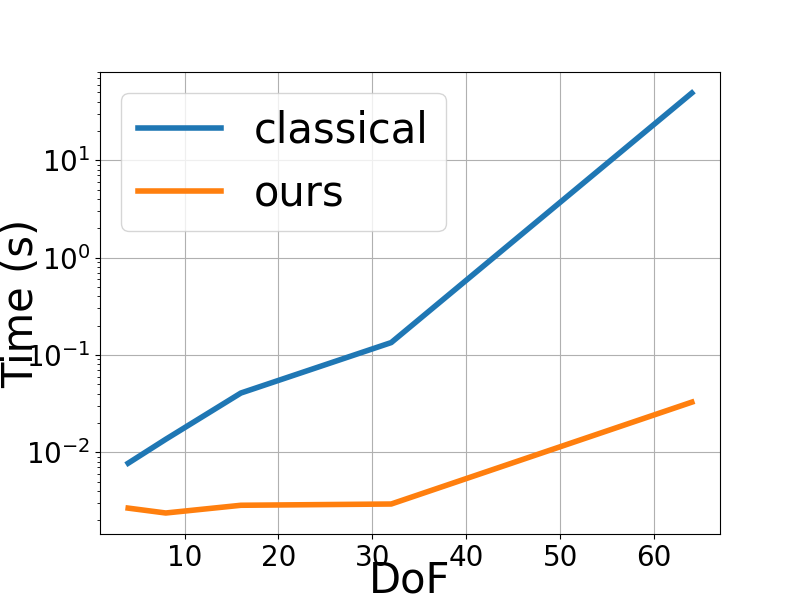}  \\
    \small (a) $J-J^*$  & \small (b) $f(K) - f(K^*)$  & \small (c) runtime  \\
\end{tabular}
\vspace{-1mm}
\caption{\textbf{Numerical Results on Convergence.} Following the mass-spring-damper setup in ~\cite{mohammadi2021convergence}, our policy gradient descent algorithm converges much faster than \cite{mohammadi2021convergence}.}
\vspace{-0.5em}
\label{fig:numerical}
\end{figure}

\paragraph{Correctness.} We conduct a numerical experiment to showcase the correctness of our quantum policy gradient algorithm. Following a similar setup as in ~\cite{mohammadi2021convergence}, a mass-spring-damper system with $g=4$ masses is used for constructing our LQR problem. The state $x=[p\trans, v\trans]\trans\in \R^{2g}$ contains positions and velocities, with dynamic and input matrices,
\begin{equation}
A = \begin{bmatrix}
        0 & I\\
        -T & -T
    \end{bmatrix},
B = \begin{bmatrix}
        0 \\
        I
    \end{bmatrix},
Q = I + 100e_1e_1\trans,
R = I + 4e_2e_2\trans,
\end{equation}
where $0, I$ are $g\times g$ zero and identity matrices, $e_i$ is the $i$th unit vector, and matrix $T$ has 2 on the main diagonal and -1 on the first super- and sub-diagonal. In~\fig{numerical}, we run our method against the classical model-free gradient-based method~\cite{mohammadi2021convergence}. It shows that our model-based policy gradient converges much faster to the ground truth ARE solution $K^*$. In the benchmark example~\cite{mohammadi2021convergence}, ours converges within 750 iterations, while the classical method takes $2\times10^4$ iterations ({\em orders of magnitude} longer). In~\fig{numerical} (c), we increase the system size by scaling $g$ from 4 to 64. Both methods run on a classical simulator with Intel i9-10980XE CPU. Our method runs much faster than \cite{mohammadi2021convergence} by {\bf nearly 3 orders of magnitude}. The code for both methods can be seen at \url{https://github.com/YilingQiao/diff_lqr}. Additional numerical results can be found in~\append{more_numeric}.

\section{Conclusion and Future Work}
In this paper, we propose the first quantum algorithm for solving linear-quadratic control problems that achieves end-to-end quantum speedups. Our quantum algorithm utilizes an exponentially faster quantum linear dynamics simulator combined with a policy gradient method. Compared to classical approaches relying on matrix factorization and iterations, our method achieves super-quadratic speedup in the large-scale regime (i.e., $m \ll n$). Moreover, the hybrid quantum-classical algorithm design makes our algorithm a promising candidate for practical quantum advantage in the near horizon. We also provide numerical evidence to demonstrate the robustness and favorable convergence behavior of our method.

\paragraph{Limitations and Future Work.}
Accelerating optimal control and reinforcement learning using quantum computers remains an emerging research topic. Our work has focused on the theoretical aspects of quantum advantage for LQR, a classic optimal control problem of fundamental importance in both theory and practice. However, for special cases~\cite{dean2020sample}, we have no guarantee that our quantum algorithm still applies, since~\lem{descent-lemma} may not hold. In the future, we aim to explore both the practical utility of quantum computing for such tasks and its potential for handling more complex optimal control scenarios, such as non-quadratic and nonlinear problems.

\section*{Acknowledgment}
We thank Kaiqing Zhang for the helpful discussions and the anonymous reviewers for their helpful feedback.
This work was supported in part by the U.S. Department of Energy Office of Science, National Quantum Information Science Research Centers, Quantum Systems Accelerator, Air Force Office of Scientific Research (Award\#FA9550-21-1-0209), the U.S. National Science Foundation Career Grant
(CCF-1942837), a Sloan research fellowship, Meta Research PhD Fellowship, National Science Foundation Graduate Research Fellowship (Grant No. DGE 2236417), the Simons Quantum Postdoctoral Fellowship, a Simons Investigator award (Grant No. 825053), and Dr. Barry Mersky \& Captial One E-Nnovate Endowed Professorships.

\clearpage

{   
    \bibliographystyle{plain}
    \bibliography{lqr}
}

\clearpage
\appendix
\noindent
{\bf\Large Appendices}

\section{LQR theory}
\begin{definition}
    Given a positive number $a > 0$, we define the sublevel set $S_K(a) \coloneqq \{K\in \R^{m\times n}\colon f(K) \le a\}$. We have the following useful bound on $K$.
\end{definition}

\begin{lemma}\label{lem:general-estimate}
    Over the sublevel set $S_K(a)$ of the LQR objective function $f(K)$, we have 
\begin{subequations}
\begin{align}
    \Tr[X(K)] &\le a / \lambda_{\min}(Q),\\
    \nu/a &\le \lambda_{\min}(X(K)),\\
    \|K\|_F &\le a/\sqrt{\nu \lambda_{\min}(R)},
\end{align}
\end{subequations}
    where the constant 
    \begin{align}\label{eqn:cons-nu}
        \nu = \frac{1}{4}\left(\frac{\|A\|_2}{\sqrt{\lambda_{\min}(Q)}} + \frac{\|B\|_2}{\sqrt{\lambda_{\min}(R)}}\right)^{-2}.
    \end{align}
\end{lemma}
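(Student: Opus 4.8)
The plan is to exploit the structure of the optimal-control cost built from the Lyapunov solution $X(K)$ and to bootstrap from one bound to the next. Recall that for $K \in \mathcal{S}_K$ the matrix $X(K)$ solves the Lyapunov equation $(A-BK)X + X(A-BK)\trans + \Sigma_0 = 0$ (with $\Sigma_0 = \mathbb{I}$ WLOG), and that $f(K) = \Tr[P(K)\Sigma_0] = \langle P(K), X(K)\rangle$-type pairings can be rearranged; concretely one has the identity $f(K) = \Tr\big[(Q + K\trans R K)X(K)\big]$, obtained by substituting the integral form of $P(K)$ into $\Tr[P(K)\Sigma_0]$ and swapping the order of integration (this is the standard ``trace trick'' for LQR). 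This identity is the workhorse: everything follows from positive-definiteness of $Q$, $R$ and positivity of $X(K)$.

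First I would prove the trace bound on $X(K)$. Using $f(K) = \Tr[(Q+K\trans RK)X(K)] \ge \Tr[Q X(K)] \ge \lambda_{\min}(Q)\,\Tr[X(K)]$ (valid since $X(K)\succeq 0$ and $Q\succeq\lambda_{\min}(Q)\mathbb{I}$, so $QX \succeq \lambda_{\min}(Q)X$ in the trace-pairing sense), and $f(K)\le a$ on $S_K(a)$, we get $\Tr[X(K)] \le a/\lambda_{\min}(Q)$. Next, for the lower bound on $\lambda_{\min}(X(K))$: integrating the Lyapunov ODE or using $X(K) = \int_0^\infty e^{(A-BK)t}e^{(A-BK)\trans t}\,\d t \succeq \int_0^\infty e^{-2\|A-BK\|t}\mathbb{I}\,\d t = \frac{1}{2\|A-BK\|}\mathbb{I}$, one obtains $\lambda_{\min}(X(K)) \ge \frac{1}{2\|A-BK\|}$. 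To turn this into the claimed $\nu/a$ bound I would bound $\|A-BK\| \le \|A\| + \|B\|\|K\|$ and control $\|K\|$ crudely in terms of $a$; combined with the definition of $\nu$ in~\eqn{cons-nu}, a short computation (absorbing constants, using that $\|K\|_F$ cannot be too large when $f(K)\le a$ — which itself comes from $\Tr[K\trans R K\, X(K)] \le a$ and the just-proved lower bound on $\lambda_{\min}(X(K))$) yields the stated inequality. Finally, for $\|K\|_F$: from $f(K)\ge \Tr[K\trans R K\, X(K)] \ge \lambda_{\min}(R)\lambda_{\min}(X(K))\,\Tr[K\trans K] = \lambda_{\min}(R)\lambda_{\min}(X(K))\,\|K\|_F^2$, together with $\lambda_{\min}(X(K)) \ge \nu/a$, we get $\|K\|_F^2 \le a^2/(\nu\lambda_{\min}(R))$, i.e. $\|K\|_F \le a/\sqrt{\nu\lambda_{\min}(R)}$.

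The main obstacle is the circularity in the middle step: the lower bound $\lambda_{\min}(X(K))\ge\nu/a$ and the upper bound on $\|K\|_F$ each appear to need the other (the $\|A-BK\|$ term involves $\|K\|$, but bounding $\|K\|$ uses $\lambda_{\min}(X(K))$). I would resolve this by first establishing a preliminary, possibly weaker, bound on $\lambda_{\min}(X(K))$ that depends only on $a$, $\lambda_{\min}(Q)$, $\lambda_{\min}(R)$ and the fixed data $\|A\|,\|B\|$ — for instance by a self-improving argument: assume $\lambda_{\min}(X(K)) \ge \beta$, deduce $\|K\|_F \le a/\sqrt{\beta\lambda_{\min}(R)}$, plug into $\|A-BK\|\le\|A\|+\|B\|a/\sqrt{\beta\lambda_{\min}(R)}$, and require consistency with $\lambda_{\min}(X(K))\ge 1/(2\|A-BK\|)$; solving the resulting scalar inequality for the largest admissible $\beta$ produces exactly the constant $\nu/a$ with $\nu$ as in~\eqn{cons-nu} (the factor $\tfrac14$ and the sum-of-two-terms structure are the fingerprints of this fixed-point computation, after bounding $1/\sqrt{\beta}$ crudely). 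The remaining steps are routine manipulations of traces of products of PSD matrices, so I would keep those terse and spend the writing budget on setting up the trace identity and the fixed-point argument cleanly.
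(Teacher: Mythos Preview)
The paper does not actually prove this lemma; it simply cites \cite[Lemma 16]{mohammadi2021convergence}. Your proposal therefore goes well beyond the paper, and the overall architecture --- the trace identity $f(K)=\Tr[(Q+K\trans RK)X(K)]$, the bound $\lambda_{\min}(X(K))\ge 1/(2\|A-BK\|)$ via $\sigma_{\min}(e^{Mt})\ge e^{-\|M\|t}$, and the Frobenius bound from $\Tr[K\trans RK\,X(K)]$ --- is the right one and matches the argument in the cited reference.

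There is, however, a genuine gap in how you propose to break the circularity. Your fixed-point inequality reads, with $\beta=\lambda_{\min}(X(K))$,
\[
2\beta\|A\|+2\|B\|\sqrt{\tfrac{a\beta}{\lambda_{\min}(R)}}\ \ge\ 1,
\]
and solving this quadratic in $\sqrt{\beta}$ gives a lower bound on $\beta$ that involves only $\|A\|$, $\|B\|$, $a$, and $\lambda_{\min}(R)$ --- the quantity $\lambda_{\min}(Q)$ never appears. But $\nu$ in \eqn{cons-nu} explicitly contains $\lambda_{\min}(Q)$, so a pure fixed-point argument cannot produce the stated constant. The missing ingredient is an \emph{upper} bound on $\beta$: from the first part of the lemma you already have $\beta\le\Tr[X(K)]\le a/\lambda_{\min}(Q)$. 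Using this to dominate the $2\beta\|A\|$ term as $2\sqrt{\beta}\cdot\sqrt{\beta}\,\|A\|\le 2\sqrt{\beta}\cdot\sqrt{a/\lambda_{\min}(Q)}\,\|A\|$, the inequality becomes linear in $\sqrt{\beta}$:
\[
\frac{1}{\sqrt{\beta}}\ \le\ 2\sqrt{a}\left(\frac{\|A\|}{\sqrt{\lambda_{\min}(Q)}}+\frac{\|B\|}{\sqrt{\lambda_{\min}(R)}}\right),
\]
which squares to exactly $\beta\ge\nu/a$. So the resolution is not a self-consistent fixed point but rather feeding the already-proved trace bound back in; once you add this step your argument is complete.
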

\begin{proof}
    See \cite[Lemma 16]{mohammadi2021convergence}.
\end{proof}

\begin{lemma}\label{lem:lower-bound-K}
    For any $K \in \mathcal{S}_K(a)$, we have
    \begin{align}\label{eqn:lower-bound-K}
        \|K - K^*\|^2_F \le \frac{a}{\nu \lambda_{\min}(R)}\left(f(K) - f(K^*)\right),
    \end{align}
    where $\nu$ is the same as in~\eqn{cons-nu}.
\end{lemma}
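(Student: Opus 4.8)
The plan is to reduce the claim to an \emph{exact cost-difference identity} that writes $f(K)-f(K^*)$ as a weighted quadratic form in $K-K^*$, and then to lower-bound that form using positive-definiteness of $R$ together with the bound on $\lambda_{\min}(X(K))$ already established in \lem{general-estimate}. Note first that membership $K\in\mathcal{S}_K(a)$ implies $f(K)<\infty$, hence $K\in\mathcal{S}_K$, so $A-BK$ is Hurwitz and both $P(K)$ and $X(K)$ are well defined; the same holds for $K^*$.

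The key step is to prove
\[
f(K)-f(K^*)=\Tr\!\left[(K-K^*)^\top R\,(K-K^*)\,X(K)\right].
\]
I would obtain this by manipulating the two Lyapunov equations at hand: \eqn{PK-lyapunov} with $\AA=A-BK$ and $\Omega=Q+K^\top RK$ for $P(K)$, and $(A-BK)X(K)+X(K)(A-BK)^\top+\Sigma_0=0$ for $X(K)$. Substituting $\Sigma_0$ from the second equation into $f(K)=\Tr[P(K)\Sigma_0]$ and using the first gives $f(K)=\Tr[(Q+K^\top RK)X(K)]$; carrying out the analogous computation for $f(K^*)$ but writing $A-BK^*=(A-BK)+B(K-K^*)$ produces cross terms that regroup into $(K-K^*)^\top(RK^*-B^\top P^*)$ plus its transpose, which vanish because $K^*=R^{-1}B^\top P^*$ forces $RK^*-B^\top P^*=0$. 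Collecting the remaining terms leaves exactly the displayed quadratic form. Equivalently, this is the continuous-time cost-difference lemma of~\cite{fazel2018global,mohammadi2021convergence}, which could simply be cited.

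Given the identity, the rest is a routine eigenvalue estimate. Writing $\Delta=K-K^*$: since $\Delta^\top R\Delta\succeq0$ and $X(K)\succeq\lambda_{\min}(X(K))\,\mathbb{I}$, and $R\succeq\lambda_{\min}(R)\,\mathbb{I}$,
\[
\Tr\!\left[\Delta^\top R\,\Delta\,X(K)\right]\;\ge\;\lambda_{\min}(X(K))\,\Tr[\Delta^\top R\,\Delta]\;\ge\;\lambda_{\min}(X(K))\,\lambda_{\min}(R)\,\|\Delta\|_F^2 ,
\]
using $\Tr[MN]\ge\lambda_{\min}(N)\Tr[M]$ for PSD $M,N$ in the first step and $\Tr[\Delta^\top R\,\Delta]=\Tr[R\,\Delta\Delta^\top]\ge\lambda_{\min}(R)\|\Delta\|_F^2$ in the second. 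Then \lem{general-estimate} supplies $\lambda_{\min}(X(K))\ge\nu/a$ on $\mathcal{S}_K(a)$, so $f(K)-f(K^*)\ge\frac{\nu}{a}\lambda_{\min}(R)\,\|K-K^*\|_F^2$, and rearranging gives \eqn{lower-bound-K}.

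I expect the only real obstacle to be the cost-difference identity: it requires a careful but standard bookkeeping through the two Lyapunov equations, keeping straight which gain ($K$ or $K^*$) each of $P$, $X$, and the residual $RK-B^\top P$ is attached to, and exploiting the optimality relation $K^*=R^{-1}B^\top P^*$ to kill the first-order terms. Everything after that is a one-line spectral bound plus the previously proven \lem{general-estimate}.
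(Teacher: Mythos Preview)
Your proposal is correct and follows essentially the same approach as the paper: invoke the cost-difference identity $f(K)-f(K^*)=\Tr[(K-K^*)^\top R(K-K^*)X(K)]$ (which the paper simply cites as \cite[Lemma 2]{mohammadi2021convergence}), apply the spectral lower bound $\Tr[\cdot]\ge\lambda_{\min}(R)\lambda_{\min}(X(K))\|K-K^*\|_F^2$, and finish with $\lambda_{\min}(X(K))\ge\nu/a$ from \lem{general-estimate}. Your additional sketch of how to derive the identity via the two Lyapunov equations and the optimality relation $RK^*=B^\top P^*$ is a welcome expansion of what the paper leaves to a citation.
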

\begin{proof}
    By \cite[Lemma 2]{mohammadi2021convergence}, we have
    $$f(K) - f(K^*) = \Tr\left[(K-K^*)\trans R (K-K^*) X(K)\right] \ge \lambda_{\min}(R)\lambda_{\min}(X(K)) \|K - K^*\|^2_F.$$
    Combining the above result with \lem{general-estimate}, we end up with \eqn{lower-bound-K}.
\end{proof}

\begin{lemma}[PL condition]\label{lem:PL-condition}
    Fix $a > 0$. For any $K \in \mathcal{S}_K(a)$, we have
    \begin{align}
        \|\nabla f(K)\|^2_F \ge 2\mu_f (f(K) - f(K^*)),
    \end{align}
    where $\mu_f>0$ is a constant that only depends on the problem data and $a$.
\end{lemma}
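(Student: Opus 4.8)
The plan is to reduce the claimed Polyak–\L{}ojasiewicz inequality to a gradient-domination argument that is by now standard in the policy-gradient-for-LQR literature, using the closed-form expression~\eqn{exact-grad} for $\nabla f(K)$ together with the cost-difference identity already invoked in the proof of \lem{lower-bound-K}. First I would recall that for any $K \in \mathcal{S}_K$ the optimal feedback gain satisfies $K^* = R^{-1}B\trans P^*$, and more generally that the stationarity condition $\nabla f(K) = 0$ on $\mathcal{S}_K$ is equivalent to $RK - B\trans P(K) = 0$ (since $X(K) \succ 0$ is invertible whenever $\Sigma_0 \succ 0$ and $K$ is stabilizing). This lets me write $RK - B\trans P(K) = R(K - \widehat{K})$ where $\widehat{K} \coloneqq R^{-1}B\trans P(K)$ is the "one-step greedy" gain associated to $P(K)$.

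Next I would use the exact cost-difference formula (the same one from \cite[Lemma~2]{mohammadi2021convergence} cited in \lem{lower-bound-K}), which upon completing the square reads
\begin{align}
    f(K) - f(K^*) &= \Tr\!\left[(K - \widehat{K})\trans R (K - \widehat{K})\, X(K)\right] \nonumber \\
    &\quad - \Tr\!\left[\text{(nonnegative term involving } K^* - \widehat{K})\right] \le \Tr\!\left[(K - \widehat{K})\trans R (K - \widehat{K})\, X(K)\right].
\end{align}
Dropping the nonnegative correction term gives the one-sided bound I need. Then, from $\nabla f(K) = 2R(K - \widehat{K})X(K)$, I bound
\begin{align}
    \|\nabla f(K)\|_F^2 = 4\,\Tr\!\left[X(K)(K-\widehat{K})\trans R^2 (K-\widehat{K}) X(K)\right] \ge 4\lambda_{\min}(R)\,\lambda_{\min}(X(K))^2\,\Tr\!\left[(K-\widehat{K})\trans R (K-\widehat{K})\right],
\end{align}
and combining with the cost-difference bound (after accounting for the extra $X(K)$ factor there) yields $\|\nabla f(K)\|_F^2 \ge 4\lambda_{\min}(R)\lambda_{\min}(X(K))\bigl(f(K)-f(K^*)\bigr)$. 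Finally, over the sublevel set $\mathcal{S}_K(a)$ I invoke \lem{general-estimate} to lower-bound $\lambda_{\min}(X(K)) \ge \nu/a$, so that $\mu_f = 2\lambda_{\min}(R)\nu/a$ works, with $\nu$ as in~\eqn{cons-nu} — a constant depending only on the problem data $A,B,Q,R$ and on $a$.

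The main obstacle is the bookkeeping of the $X(K)$ weighting: the cost-difference identity carries one factor of $X(K)$ inside the trace, whereas the squared-gradient expression carries two, so matching them requires carefully inserting $\lambda_{\min}(X(K))$ bounds on the correct side and tracking whether $\Tr[M X(K)] \ge \lambda_{\min}(X(K))\Tr[M]$ is applied to the right positive-semidefinite matrix $M = (K-\widehat{K})\trans R (K-\widehat{K})$. A second subtlety is justifying that the "nonnegative correction term'' in the completed-square cost-difference formula is genuinely $\succeq 0$ and may be discarded in the direction needed; this is where the structure $K^* = R^{-1}B\trans P^*$ and positivity of $R$ are used. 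Both points are routine given the cited lemmas, so the proof is short once the weighting is handled correctly.
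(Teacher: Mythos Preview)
The paper gives no self-contained proof here --- it simply defers to \cite[Remark~2]{mohammadi2021convergence}. Your proposal is essentially a sketch of the argument behind that citation (the gradient-dominance computation for LQR, originating with Fazel et al.\ in discrete time and adapted by Mohammadi et al.\ in continuous time), so the approach matches.

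One bookkeeping slip is worth flagging, since you already identify the $X(K)$-weighting as the delicate point. The completion-of-square step, applied to the general cost-difference identity $f(K')-f(K)=\Tr\bigl[X(K')\bigl((K'-K)\trans R(K'-K)+2(K'-K)\trans E_K\bigr)\bigr]$ with $K'=K^*$, yields
\[
f(K)-f(K^*)\;\le\;\Tr\!\bigl[(K-\widehat{K})\trans R(K-\widehat{K})\,X(K^*)\bigr],
\]
i.e.\ the weighting matrix that survives is $X(K^*)$, not $X(K)$. The identity $f(K)-f(K^*)=\Tr[(K-K^*)\trans R(K-K^*)X(K)]$ quoted in \lem{lower-bound-K} carries $X(K)$ only because the cross term vanishes at the optimum ($E_{K^*}=0$); once you swap the roles of $K$ and $K^*$ to make $E_K=R(K-\widehat{K})$ appear and then complete the square, the weight left behind is $X(K')=X(K^*)$. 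This does not break your argument: $\|X(K^*)\|$ is a fixed problem-data constant, so after bounding $\|\nabla f(K)\|_F^2\ge 4\lambda_{\min}(X(K))^2\|E_K\|_F^2$ and invoking $\lambda_{\min}(X(K))\ge\nu/a$ from \lem{general-estimate}, you still obtain a valid $\mu_f$ depending only on $A,B,Q,R,a$ --- just not the specific value $2\lambda_{\min}(R)\nu/a$ you wrote.
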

\begin{proof}
    See \cite[Remark 2]{mohammadi2021convergence}.
\end{proof}

\section{Implementation of block-encoded matrices}

\begin{lemma}\label{lem:implement-block-encoding}
    Assume that we have efficient procedures (as described in~\assump{sparse-oracle}) to access the problem data $A, B, Q, R$ in $\OO(\poly\log(n))$ time.
    For a fixed $a > 0$, suppose that $K \in \mathcal{S}_K$ is a stabilizing policy stored in a quantum-accessible data structure. Then, we can implement
    \begin{enumerate}
        \item a $(s(\|K\|_F+1),\varepsilon)$-block-encoding of $A - BK$ in cost $\OO(\poly\log(n, s/\varepsilon))$,
        \item a $(s(\|K\|^2_F+1),\varepsilon)$-block-encoding of $Q + K\trans R K$ in cost $\OO(\poly\log(n, s/\varepsilon))$.
    \end{enumerate}
\end{lemma}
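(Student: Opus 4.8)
The plan is to build the two block-encodings compositionally from the sparse-access oracles for $A,B,Q,R$ together with the data-structure block-encoding of $K$ from \lem{block_encoding_data_structure}, using the standard toolbox of block-encoding arithmetic (products and linear combinations) from \cite{gilyen2019quantum}. First I would recall that \assump{sparse-oracle} together with the standard sparse-matrix block-encoding construction (e.g.\ \cite[Lemma 48]{gilyen2019quantum}) yields an $(s,\cdot,\varepsilon')$-block-encoding of each of $A,B,Q,R$ in cost $\OO(\poly\log(n,s/\varepsilon'))$, since an $s$-sparse matrix with entries bounded appropriately admits a block-encoding with normalization factor $s$. In parallel, \lem{block_encoding_data_structure} gives a $(\|K\|_F,\cdot,\varepsilon')$-block-encoding of $K$ (and of $K\trans$) in cost $\OO(\poly\log(n,1/\varepsilon'))$.

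For part (1): I would form the product block-encoding of $BK$ by composing the $(s,\cdot)$-block-encoding of $B$ with the $(\|K\|_F,\cdot)$-block-encoding of $K$, obtaining an $(s\|K\|_F,\cdot)$-block-encoding of $BK$ via the product rule (\cite[Lemma 53]{gilyen2019quantum}). Then I would take a linear combination of the block-encoding of $A$ (normalization $s$) and that of $BK$ (normalization $s\|K\|_F$) using the LCU lemma (\cite[Lemma 52]{gilyen2019quantum}); the combined normalization factor is $s + s\|K\|_F = s(\|K\|_F+1)$, which matches the claim. Throughout, each elementary operation incurs only polylogarithmic overhead and the accumulated error is controlled by choosing the per-component error parameters as small polynomial fractions of the target $\varepsilon$, so the total cost remains $\OO(\poly\log(n,s/\varepsilon))$.

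For part (2): I would first build $K\trans R K$ as a triple product: compose the block-encoding of $K\trans$ (normalization $\|K\|_F$, from the analogous data structure in \lem{block_encoding_data_structure}), the block-encoding of $R$ (normalization $s$), and the block-encoding of $K$ (normalization $\|K\|_F$), yielding an $(s\|K\|^2_F,\cdot)$-block-encoding of $K\trans R K$ by iterating the product rule. Then a linear combination with the $(s,\cdot)$-block-encoding of $Q$ gives normalization $s + s\|K\|^2_F = s(\|K\|^2_F+1)$, as claimed, again in polylogarithmic cost.

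The main obstacle is bookkeeping rather than conceptual: I must track how the normalization factors multiply and add across the product and LCU steps so that the final factors come out exactly $s(\|K\|_F+1)$ and $s(\|K\|_F^2+1)$ (in particular using that $\|A\|,\|B\|,\|Q\|,\|R\|$ are absorbed into the uniform bound $s$ from the sparse-access normalization), and I must propagate the error budget carefully through the composition so that each intermediate block-encoding is built to accuracy $\Theta(\varepsilon/\poly\log)$ while keeping the stated $\OO(\poly\log(n,s/\varepsilon))$ cost. A minor subtlety is ensuring the ancilla registers of the various block-encodings are padded consistently before taking products and linear combinations, but this is handled by the standard conventions in \cite{gilyen2019quantum} and does not affect the asymptotic cost.
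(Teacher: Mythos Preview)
Your proposal is correct and follows essentially the same approach as the paper: the paper's proof also invokes \cite[Lemma 48]{gilyen2019quantum} for the sparse block-encodings of $A,B,Q,R$, \cite[Lemma 50]{gilyen2019quantum} (via \lem{block_encoding_data_structure}) for $K$, and then \cite[Lemmas 52, 53]{gilyen2019quantum} for the product and linear-combination steps to obtain the stated normalization factors. Your additional remarks about error propagation and ancilla padding are fine elaborations but not needed beyond what the paper sketches.
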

\begin{proof}
    By \cite[Lemma 48]{gilyen2019quantum}, we can implement a $(s, \varepsilon)$-block-encoding of $A$ (or $B, Q, R$) in cost $\OO(\poly\log(n, s/\varepsilon))$. Also, due to \cite[Lemma 50]{gilyen2019quantum}, we can implement a $(\|K\|_F, \varepsilon)$-block-encoding of $K$ in cost $\OO(\poly\log(n, 1/\varepsilon))$. Therefore, by \cite[Lemma 52, 53]{gilyen2019quantum}, a $(s(\|K\|_F+1),\varepsilon)$-block-encoding of $A - BK$ can be implement in cost $\OO(\poly\log(n, s/\varepsilon))$. 
    Similarly, we can implement a $(s(\|K\|^2_F+1),\varepsilon)$-block-encoding of $Q + K\trans R K$ in cost $\OO(\poly\log(n, s/\varepsilon))$.
\end{proof}

\begin{remark}\label{rmk:implement-block-encoding}
    We observe that the block-encodings of $A-BK$ and $Q + K\trans R K$ can be implemented with high precisions, i.e., in cost $\OO(\poly\log(1/\varepsilon))$. To simplify our technical arguments, we assume these block-encodings can be implemented with no error, i.e., we can implement a $(s(\|K\|_F+1), 0)$-block-encoding of $A-BK$ (or a $(s(\|K\|^2_F+1), 0)$-block-encoding of $Q + K\trans R K$ in cost $\OO(\poly\log(n,s))$.
\end{remark}

\section{Matrix exponential based on quantum linear system solver}
\label{append:exponential-matrix}
\subsection{Block-encoding for matrix inverse}
\begin{lemma}[Modified from \cite{low2024quantum}, Lemma 11]
\label{lem:block-encoding-matrix-inverse}
    Let $C$ be a matrix such that $C/\alpha_C$ is block-encoded by $O_C$ with some normalization factor $\alpha_C$. Then we can implement a $(\mathcal{O}(\alpha_{C^{-1}}), \varepsilon)$-block-encoding of $C^{-1}$ using
    \begin{equation}
        \mathcal{O}(\kappa_C \log(1/\varepsilon))
    \end{equation}
    queries to $O_C$. Here $\kappa_C$ is the condition number of $C$.
\end{lemma}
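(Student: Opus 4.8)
The plan is to realize $C^{-1}$ by quantum singular value transformation (QSVT) applied to $O_C$ with a bounded polynomial that approximates the function $x\mapsto 1/x$, following the now-standard ``HHL via QSVT'' recipe \cite{gilyen2019quantum,low2024quantum} and carefully tracking the normalization factors. First note that $O_C$ is, by hypothesis, a $(1,\cdot,0)$-block-encoding of $\hat C := C/\alpha_C$, so every singular value of $\hat C$ lies in $[\sigma_{\min}(C)/\alpha_C,\ \sigma_{\max}(C)/\alpha_C]\subseteq[1/\bar\kappa,\ 1]$, where $\bar\kappa := \alpha_C\,\|C^{-1}\| = \alpha_C/\sigma_{\min}(C)$ is the ``effective'' condition number of the block-encoded matrix. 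When the block-encoding is (near-)optimally normalized, i.e.\ $\alpha_C = \Theta(\|C\|)$ — which is the case for all block-encodings used in this paper — one has $\bar\kappa = \Theta(\kappa_C)$.

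Next I would invoke the standard odd-polynomial approximation to the inverse function (e.g.\ \cite[Corollary~69]{gilyen2019quantum}): for any accuracy $\delta>0$ there is an odd real polynomial $p$ of degree $d=\mathcal{O}\!\big(\bar\kappa\log(\bar\kappa/\delta)\big)$ with $|p(x)|\le 1$ on $[-1,1]$ and $\big|p(x)-\tfrac{1}{2\bar\kappa}\cdot\tfrac1x\big|\le\delta$ for all $x\in[1/\bar\kappa,1]\cup[-1,-1/\bar\kappa]$; the prefactor $\tfrac{1}{2\bar\kappa}$ is chosen exactly so that the target function maps $[1/\bar\kappa,1]$ into $[\tfrac12,1]\subseteq[-1,1]$, as required by QSVT. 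Applying QSVT for the singular value transformation of block-encoded matrices with this odd $p$ yields a unitary $U$ that is a $(1,\cdot,0)$-block-encoding of $p^{(\mathrm{SV})}(\hat C):=\sum_i p(\sigma_i)\,|u_i\rangle\langle v_i|$, using $d$ queries to controlled-$O_C$ and its inverse plus $\mathcal{O}(d)$ extra one- and two-qubit gates. Since all $\sigma_i\in[1/\bar\kappa,1]$, we get $\big\|p^{(\mathrm{SV})}(\hat C)-\tfrac{1}{2\bar\kappa}\hat C^{-1}\big\|\le\delta$, and $\tfrac{1}{2\bar\kappa}\hat C^{-1}=\tfrac{\alpha_C}{2\bar\kappa}C^{-1}$; hence $U$ is a $\big(2\bar\kappa/\alpha_C,\ \cdot,\ 2\bar\kappa\delta/\alpha_C\big)$-block-encoding of $C^{-1}$.

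Finally I would set the bookkeeping straight: choosing $\delta=\varepsilon\,\alpha_C/(2\bar\kappa)$ makes the block-encoding error $\varepsilon$; the normalization factor is $2\bar\kappa/\alpha_C = 2\|C^{-1}\| = \mathcal{O}(\alpha_{C^{-1}})$; and the query count is $d=\mathcal{O}\!\big(\bar\kappa\log(\bar\kappa/\varepsilon)\big)=\mathcal{O}\!\big(\kappa_C\log(1/\varepsilon)\big)$, folding the $\log\kappa_C$ term into the $\log(1/\varepsilon)$ factor as elsewhere in the paper.

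The steps invoking the polynomial approximation and QSVT are entirely routine; the only point that needs care — and the one I would flag as the main (albeit mild) obstacle — is the normalization bookkeeping, namely that the query complexity QSVT naturally delivers is controlled by the effective condition number $\bar\kappa=\alpha_C\|C^{-1}\|$ of the block-encoded matrix, so collapsing it to $\kappa_C$ uses $\alpha_C=\Theta(\|C\|)$, and one must also check that $p$ satisfies the parity and sub-normalization hypotheses of the QSVT theorem. I do not expect any genuinely new ideas to be required beyond this packaging.
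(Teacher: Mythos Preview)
Your proposal is correct and is precisely the standard QSVT-based argument for matrix inversion via a bounded odd polynomial approximant of $1/x$, with the normalization bookkeeping handled properly. The paper does not actually give its own proof of this lemma; it simply cites it as a modification of \cite[Lemma~11]{low2024quantum}, and your write-up is essentially the argument that underlies that cited result (and \cite[Corollary~69, Theorem~41]{gilyen2019quantum}). Your observation that the natural query bound is governed by $\bar\kappa=\alpha_C\|C^{-1}\|$ rather than $\kappa_C$, and that collapsing the two requires $\alpha_C=\Theta(\|C\|)$, is a valid caveat that the paper's statement implicitly assumes.
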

Suppose we have an $(\alpha, \varepsilon)$-block-encoding $U_M$ that block-encodes $M$.
Denote 
\begin{equation}
    \alpha \bra{0} U_M \ket{0} - M = \Lambda,
\end{equation}
we have $\|\Lambda\| \leq \varepsilon$. Then we can see $U_M$ as an $(\alpha,0)$-block-encoding of $M + \Lambda/\alpha$.
So with the lemma above, we have the theorem below:
\begin{theorem}
\label{thm:err_inv}
Suppose we have an $(\alpha, \varepsilon_1)$-block-encoding $U_M$ that block-encodes $M$, then we can implement a $(\mathcal{O}(\alpha_{M^{-1}}), \varepsilon_2 + \frac{\alpha_{M^{-1}}^2}{\alpha - \varepsilon_1 \alpha_{M^{-1}}}\varepsilon_1)$-block-encoding for $M^{-1}$ using
\begin{equation}
    \mathcal{O}(\kappa_M \log(1/\varepsilon_2))
\end{equation}
queries to $U_M$.
\end{theorem}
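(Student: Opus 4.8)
The plan is to reduce to \lem{block-encoding-matrix-inverse} applied not to $M$ itself but to the matrix that $U_M$ block-encodes \emph{exactly}, and then transport the perturbation through the inversion. As observed just before the statement, writing $\alpha\bra{0}U_M\ket{0} - M = \Lambda$ with $\|\Lambda\|\le\varepsilon_1$, the unitary $U_M$ is an $(\alpha,0)$-block-encoding of $\widetilde{M} \coloneqq M + \Lambda/\alpha$. So the first step is to invoke \lem{block-encoding-matrix-inverse} with $C = \widetilde{M}$ and $\alpha_C = \alpha$: this produces an $(\mathcal{O}(\alpha_{\widetilde{M}^{-1}}), \varepsilon_2)$-block-encoding of $\widetilde{M}^{-1}$ using $\mathcal{O}(\kappa_{\widetilde{M}}\log(1/\varepsilon_2))$ queries to $U_M$, where $\alpha_{\widetilde{M}^{-1}}$ may be taken to be any upper bound on $\|\widetilde{M}^{-1}\|$.

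The second step is a perturbation estimate relating $\widetilde{M}^{-1}$ to the target $M^{-1}$ via the resolvent identity
\begin{equation}
    \widetilde{M}^{-1} - M^{-1} = -\widetilde{M}^{-1}(\widetilde{M} - M)M^{-1} = -\tfrac{1}{\alpha}\,\widetilde{M}^{-1}\Lambda M^{-1}.
\end{equation}
Bounding $\|M^{-1}\| \le \alpha_{M^{-1}}$ and, by a Neumann-series argument, $\|\widetilde{M}^{-1}\| \le \|M^{-1}\|/(1 - \|M^{-1}\|\varepsilon_1/\alpha) \le \alpha\,\alpha_{M^{-1}}/(\alpha - \varepsilon_1\alpha_{M^{-1}})$, I obtain
\begin{equation}
    \|\widetilde{M}^{-1} - M^{-1}\| \le \frac{\alpha_{M^{-1}}^2}{\alpha - \varepsilon_1\alpha_{M^{-1}}}\,\varepsilon_1,
\end{equation}
which is exactly the additional error term in the claim. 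A triangle inequality against the block-encoding obtained in the first step then yields total error $\varepsilon_2 + \frac{\alpha_{M^{-1}}^2}{\alpha - \varepsilon_1\alpha_{M^{-1}}}\varepsilon_1$. The same Neumann bound gives $\|\widetilde{M}^{-1}\| = \mathcal{O}(\|M^{-1}\|)$, hence $\alpha_{\widetilde{M}^{-1}} = \mathcal{O}(\alpha_{M^{-1}})$ and the output normalization is $\mathcal{O}(\alpha_{M^{-1}})$; and since $\|\widetilde{M}\| \le \|M\| + \varepsilon_1/\alpha$, we also get $\kappa_{\widetilde{M}} = \|\widetilde{M}\|\|\widetilde{M}^{-1}\| = \mathcal{O}(\kappa_M)$, so the query count becomes $\mathcal{O}(\kappa_M\log(1/\varepsilon_2))$ as stated.

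There is no genuine obstacle here — it is a routine first-order perturbation analysis — but two points require care. The first is that all the estimates above implicitly need the denominator $\alpha - \varepsilon_1\alpha_{M^{-1}}$ to be positive and bounded away from zero (equivalently $\varepsilon_1\|M^{-1}\|/\alpha < 1$), which is tacitly assumed in the statement; I would flag this mild condition and note that it holds automatically in every application, where $\varepsilon_1$ is chosen polylogarithmically small. The second is bookkeeping: the $\mathcal{O}(\cdot)$ hidden in $\alpha_{\widetilde{M}^{-1}}$ from \lem{block-encoding-matrix-inverse} must compose cleanly with the $\mathcal{O}(\cdot)$ from the Neumann bound relating $\widetilde{M}^{-1}$ to $M^{-1}$, but since both conceal only absolute constants this is immediate.
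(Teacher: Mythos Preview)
Your proposal is correct and follows essentially the same approach as the paper: both recognize $U_M$ as an exact block-encoding of $\widetilde{M}=M+\Lambda/\alpha$, invoke \lem{block-encoding-matrix-inverse} on $\widetilde{M}$, and then bound $\|\widetilde{M}^{-1}-M^{-1}\|$ via a Neumann series to obtain the extra $\frac{\alpha_{M^{-1}}^2}{\alpha-\varepsilon_1\alpha_{M^{-1}}}\varepsilon_1$ term. The only cosmetic difference is that you phrase the perturbation step through the resolvent identity whereas the paper expands $(I+M^{-1}\Lambda/\alpha)^{-1}-I$ directly; your explicit check that $\kappa_{\widetilde{M}}=\mathcal{O}(\kappa_M)$ and your flagging of the implicit assumption $\varepsilon_1\alpha_{M^{-1}}<\alpha$ are in fact slightly more careful than the paper's own write-up.
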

\begin{proof}
    From the lemma above we know we can implement a $(\tilde{\alpha}_{M^{-1}}, \varepsilon_2)$-block-encoding as $\tilde{U}_{M^{-1}}$ for $(M + \Lambda/\alpha)$, where $\tilde{\alpha}_{M^{-1}} = \mathcal{O}(\alpha_{M^{-1}})$.
    To analyze the error,
    \begin{equation}
    \begin{split}
        \|\tilde{\alpha}_{M^{-1}}\tilde{U}_{M^{-1}} - M^{-1}\|
        &\leq 
        \varepsilon_2 + \|(M + \Lambda/\alpha)^{-1} - M^{-1}\|
        =
        \varepsilon_2 + \|(I + M^{-1}\Lambda/\alpha)^{-1}M^{-1} - M^{-1}\|\\
        &=
        \varepsilon_2 + \|(I + M^{-1}\Lambda/\alpha)^{-1}- I\|\|M^{-1}\|
        =
        \varepsilon_2 + \left\|\sum_{n=1}^\infty ((-M^{-1}\Lambda)/\alpha)^n\right\|\|M^{-1}\|\\
        &\leq
        \varepsilon_2 + \sum_{n=1}^\infty \left\|((-M^{-1}\Lambda)/\alpha)^n\right\|\|M^{-1}\|
        \leq \varepsilon_2 + \frac{\alpha_{M^{-1}}^2}{\alpha - \varepsilon_1 \alpha_{M^{-1}}}\varepsilon_1.
    \end{split}
    \end{equation}
\end{proof}

\subsection{Introduction to quantum linear system solver}
The first quantum algorithm for solving linear differential equations was proposed by~\cite{berry2014high}. Since this work was done, several refinements have been made. Among these, \cite{krovi2023improved} significantly loosen the requirement of performing the algorithm. We notice that this whole algorithm can be written in the form of block-encoding. Basically, for a linear differential equation
\begin{equation}
    \frac{\mathrm{d}u}{\mathrm{d}t} = A u,
\end{equation}
upon appropriate discretization method and the method of line, one may construct a big matrix $\mathbf{A}$ s.t.
\begin{equation}
    \mathbf{A}
    \begin{bmatrix}
        u(0)\\
        u(h)\\
        u(2h)\\
        \vdots\\
        u(T)\\
    \end{bmatrix}
    =
    \begin{bmatrix}
        u(0)\\
        0\\
        0\\
        \vdots\\
        0\\
    \end{bmatrix}.
\end{equation}
Then it is easy to see that the matrix inverse $\mathbf{A}^{-1}$ satisfies
\begin{equation}
    (\bra{k}\otimes I )\mathbf{A}^{-1}(\ket{0}\otimes u(0)) = u(kh) = e^{kh A}u(0).
\end{equation}
Because $u(0)$ is arbitrarily chosen, we conclude
\begin{equation}
    (\bra{k}\otimes I )\mathbf{A}^{-1}(\ket{0}\otimes I) = e^{kh A}.
\end{equation}
Suppose $\mathbf{A}^{-1}$ is block-encoded in some $U_{\mathbf{A}^{-1}}$, we have
\begin{equation}
\begin{split}
    &(\bra{k} \otimes I) ((\bra{0^a} \otimes I) U_{\mathbf{A}^{-1}} (\ket{0^a} \otimes I)) (\ket{0} \otimes I)
    =
    (\bra{k 0^a} \otimes I) U_{\mathbf{A}^{-1}} (\ket{0^{a+\ell}}\otimes I).
\end{split}
\end{equation}
Here $\ell$ satisfy $2^\ell = T/h$, where $T$ is the simulation time and $h$ is the step size.
Then it is obvious that $U_{\mathbf{A}^{-1}}$ is also a block-encoding of $e^{khA}$.
\subsection{Matrix exponential construction}

\begin{lemma}[Modified from~\cite{krovi2023improved}]
If we have a block-encoding $U_L$ that block-encodes $L$ defined as 
\begin{equation}
\label{eqn:L}
\begin{split}
    L &= I - N, \\
    N &= \sum_{i=0}^m \ket{i+1}\bra{i} \otimes M_2 (I - M_1)^{-1} + \sum_{i=m+1}^{2m}\ket{i+1}\bra{i}\otimes I,\\
    M_1 &= \sum_{j=0}^{k-1}\ket{j+1}\bra{j}\otimes \frac{Ah}{j+1},\\
    M_2 &= \sum_{j=0}^k \ket{0}\bra{j}\otimes I,
\end{split}
\end{equation}
then the $(\alpha_{L^{-1}},\varepsilon)$-block-encoding $U_{L^{-1}}$ that block-encodes $L^{-1}$ satisfies
\begin{equation}
    \left\|\alpha_{L^{-1}}(\bra{r 0^a}\otimes I) U_{L^{-1}} (\ket{0^{a+s}} \otimes I) - e^{TA}\right\| \leq \epsilon.
\end{equation}
for any $r \geq m+1$. Let $2^{s-1} = m$, thus
\begin{equation}
    \alpha_{L^{-1}}(\bra{0^{a+s}}\otimes I) \left((X \otimes I_{s-1} \otimes I_a \otimes I)U_{L^{-1}} \right)(\ket{0^{a+s}} \otimes I) \approx e^{TA},
\end{equation}
which means that $\left((X \otimes I_{s-1} \otimes I_a \otimes I)U_{L^{-1}} \right)$ is a block-encoding of $e^{TA}$ with its normalization factor as $\alpha_{L^{-1}}$ and error being at most $\varepsilon$.
\end{lemma}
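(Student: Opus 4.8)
The plan is to unfold $L^{-1}$ by hand and read off the single matrix block that matters. Two structural facts do the work: $N$ is nilpotent, so $L^{-1}=(I-N)^{-1}$ is a \emph{finite} Neumann series, and that block then collapses to a power of a truncated Taylor propagator, which one compares against $e^{TA}$; the final cosmetic step of presenting the result as a standard block-encoding costs one fixed Pauli-$X$. First I would exploit nilpotency twice. The summand $\ket{i+1}\bra{i}$ in $N$ shifts the ``step register'' $i\mapsto i+1$ on $\{0,\dots,2m+1\}$, so $N^{2m+2}=0$ and $L^{-1}=\sum_{p=0}^{2m+1}N^{p}$; similarly $M_1=\sum_{j=0}^{k-1}\ket{j+1}\bra{j}\otimes\frac{Ah}{j+1}$ shifts the ``Taylor register'' $j\mapsto j+1$ on $\{0,\dots,k\}$, so $M_1^{k}=0$, $(I-M_1)^{-1}=\sum_{p=0}^{k-1}M_1^{p}$, and $M_1^{p}(\ket 0\otimes v)=\ket p\otimes\frac{(Ah)^{p}}{p!}v$. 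Since $M_2=\sum_{j=0}^{k}\ket 0\bra j\otimes I$ collapses the Taylor register back onto $\ket 0$, the inner block equals $M_2(I-M_1)^{-1}(\ket 0\otimes v)=\ket 0\otimes\mathcal{T}_k(Ah)\,v$, where $\mathcal{T}_k(z):=\sum_{p=0}^{k-1}z^{p}/p!$ is the degree-$(k-1)$ Taylor polynomial of $e^{z}$.

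Next I would compose along the step register: the block $(\bra r\otimes\bra 0_{\mathrm{Tay}}\otimes I)\,L^{-1}\,(\ket 0_{\mathrm{step}}\otimes\ket 0_{\mathrm{Tay}}\otimes I)$ equals $(\bra r\otimes\bra 0_{\mathrm{Tay}}\otimes I)N^{r}(\ket 0\otimes\ket 0_{\mathrm{Tay}}\otimes I)$, and the path $i=0\to r$ applies the Taylor step $\mathcal{T}_k(Ah)$ on each transition in the active window $i=0,\dots,m$ and the identity on each transition in the padding window $i=m+1,\dots,2m$. Hence for every $r\ge m+1$ the block equals the fixed operator $\mathcal{T}_k(Ah)^{m+1}$ on the physical register---the identity padding merely copies the fully propagated state forward, which is precisely its purpose (it makes the useful output amplitude flat over a whole range of indices $r$, which later helps the sub-normalization). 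A standard truncation estimate, which is the content of the cited Krovi bound, then gives $\|\mathcal{T}_k(Ah)^{m+1}-e^{TA}\|\le\varepsilon$ (with $T=(m+1)h$) once the Taylor order $k$ is taken large enough---logarithmic in $1/\varepsilon$ up to the usual $\|A\|T$-type factors, the precise choice of $k$ being what sets the query complexity of the calling theorem. Combining with the $\varepsilon$-accuracy of $U_{L^{-1}}$ as a block-encoding of $L^{-1}$---restricting to a sub-block never increases the operator-norm error---yields $\bigl\|\alpha_{L^{-1}}(\bra{r0^{a}}\otimes I)U_{L^{-1}}(\ket{0^{a+s}}\otimes I)-e^{TA}\bigr\|\le\varepsilon$ for every such $r$.

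Finally, to display this as an honest block-encoding I would route the ``good'' output index to the canonical all-zero flag string: realizing the step register on $s$ qubits and taking the padding index $r$ to be a power of two (so $\ket r$ differs from $\ket{0^{s}}$ only on the leading qubit), a single $X$ on that leading qubit applied after $U_{L^{-1}}$---with the Taylor register, pinned to $\ket 0$ at both ends, absorbed into the $a$-qubit ancilla---turns the output flag into $\ket{0^{a+s}}$, so $(X\otimes I_{s-1}\otimes I_a\otimes I)\,U_{L^{-1}}$ is an $(\alpha_{L^{-1}},\varepsilon)$-block-encoding of $e^{TA}$. I expect the real work to sit in this bookkeeping rather than in any hard inequality: pinning down the interplay between the number of Taylor steps, the size of the padding window, the identification $2^{s-1}=m$, and the relation between $m$, $T$, and $h$, together with carrying the Taylor-truncation bound through with the correct dependence on $\|A\|$, $h$, and $T$ so that the downstream cost is the advertised one. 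Everything else is routine algebra with nilpotent shift operators.
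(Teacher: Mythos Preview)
Your proposal is correct in outline and actually supplies far more than the paper does: the paper's own ``proof'' of this lemma is a one-line deferral to Krovi~\cite{krovi2023improved}. What you have written is essentially the argument that lives in that reference---exploit nilpotency of $M_1$ and of $N$ to expand $(I-M_1)^{-1}$ and $L^{-1}$ as finite Neumann sums, identify $M_2(I-M_1)^{-1}$ restricted to $\ket{0}_{\mathrm{Tay}}$ with the truncated Taylor propagator $\mathcal{T}_k(Ah)$, compose along the step register to get a power of that propagator, and then route the good index to the all-zero flag with a single Pauli-$X$. You also correctly flag that the comparison $\mathcal{T}_k(Ah)^{m+1}\approx e^{TA}$ is a separate estimate requiring $k$ to be chosen large enough; in the paper this is deferred to the very next lemma (the one quoting Krovi's Theorems~3--4 with the condition $(k+1)!\ge me^3 C_A/\delta$).

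Two cosmetic slips worth fixing: first, the Taylor shift $M_1=\sum_{j=0}^{k-1}\ket{j+1}\bra{j}\otimes\tfrac{Ah}{j+1}$ satisfies $M_1^{k+1}=0$, not $M_1^{k}=0$ (since $M_1^{k}\ket{0}\propto\ket{k}\ne 0$), so $(I-M_1)^{-1}=\sum_{p=0}^{k}M_1^{p}$ and $\mathcal{T}_k$ is the degree-$k$ Taylor polynomial, not degree $k-1$. Second, with $2^{s-1}=m$ the index selected by the final $X$ is $r=2^{s-1}=m$, which sits at the boundary of the active/padding windows rather than strictly in the padding range $r\ge m+1$; this is an off-by-one present already in the lemma's own index conventions and is harmless for the asymptotics. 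Neither point affects the validity of your argument.
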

\begin{proof}
    The proof of the correctness of $L$ can be found in~\cite{krovi2023improved}.
\end{proof}

Now we turn to construct the block-encoding for $L$.
\begin{lemma}
    Assume we can query the $(\alpha_A, 0)$-block-encoding $U_A$ that encodes $A$ and $h\alpha_A = \mathcal{O}(1)$, then we can construct a 
    $(\mathcal{O}(k^{1.5}), k^{1/2}\epsilon)$-block-encoding for $L$ defined in~\eqn{L} using $\mathcal{O}(k\log(1/\epsilon))$ queries of $U_A$ and same queries to additional elementary gates.
\end{lemma}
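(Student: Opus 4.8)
The plan is to build the block-encoding of $L=I-N$ compositionally, mirroring the block structure of $N$ in \eqn{L}, and to funnel essentially all of the cost into a single matrix-inversion call applied to the nilpotent operator $M_1$. The first step is to block-encode $M_1$. The weights $\tfrac1{j+1}$ keep $M_1$ from being a plain tensor of a shift with $A$, but one still has the clean factorization $M_1 = G_k \otimes A$, where $G_k \coloneqq \sum_{j=0}^{k-1}\tfrac{h}{j+1}\ket{j+1}\bra{j}$ is a weighted truncated shift with $\|G_k\| = h$. A $(h,\delta)$-block-encoding of $G_k$ is cheap: $G_k$ is $1$-sparse with entries given by an explicit reciprocal, so $\OO(\mathrm{poly}\log(k/\delta))$ elementary gates suffice. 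Tensoring this with the given $(\alpha_A,0)$-block-encoding $U_A$ yields a $(h\alpha_A,\alpha_A\delta)$-block-encoding of $M_1$ using one query to $U_A$; since $h\alpha_A=\OO(1)$ this is an $(\OO(1),\OO(\alpha_A\delta))$-block-encoding, and an LCU over $\{\mathbb I,-M_1\}$ (\cite[Lemma 52]{gilyen2019quantum}) gives an $(\OO(1),\OO(\alpha_A\delta))$-block-encoding of $I-M_1$ with $\OO(1)$ queries to $U_A$.

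Next I would invert $I-M_1$. The quantitative input is a conditioning bound: $\|M_1\|\le h\|A\|\le h\alpha_A\le 1$, and $M_1^{k+1}=0$ by the strictly block-subdiagonal structure, so $(I-M_1)^{-1}=\sum_{\ell=0}^{k}M_1^\ell$ and hence $\|(I-M_1)^{-1}\|\le k+1$ while $\|I-M_1\|\le 2$; thus $\kappa_{I-M_1}=\OO(k)$. (A Schur-test refinement actually gives $\|(I-M_1)^{-1}\|=\OO(1)$, but the crude bound already suffices since $k$ is only polylogarithmic in the end-to-end setting.) Feeding the slightly noisy block-encoding of $I-M_1$ into \lem{block-encoding-matrix-inverse} together with its error-robust form \thm{err_inv}, and choosing the input precision $\delta=\widetilde{\OO}(\epsilon/k^2)$ so that the error-amplification factor $\alpha_{(I-M_1)^{-1}}^2/(\alpha-\varepsilon_1\alpha_{(I-M_1)^{-1}})=\OO(k^2)$ appearing in \thm{err_inv} is absorbed into $\OO(\epsilon)$, I obtain an $(\OO(k),\epsilon)$-block-encoding of $(I-M_1)^{-1}$ using $\OO(\kappa_{I-M_1}\log(1/\epsilon))=\OO(k\log(1/\epsilon))$ queries to the block-encoding of $I-M_1$, hence $\OO(k\log(1/\epsilon))$ queries to $U_A$ and the same order of additional gates.

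Finally I would assemble $N$ and then $L$. Block-encode $M_2=\sqrt{k+1}\,(\ket{0}\bra{+_{k+1}}\otimes\mathbb I)$ \emph{exactly} as a $(\sqrt{k+1},0)$-block-encoding with $\OO(\log k)$ gates, where $\ket{+_{k+1}}$ is the uniform superposition over the $(k+1)$-dimensional index register. Multiplying block-encodings (\cite[Lemma 53]{gilyen2019quantum}), $M_2(I-M_1)^{-1}$ acquires normalization $\sqrt{k+1}\cdot\OO(k)=\OO(k^{1.5})$ and — the crucial point for the error — $\sqrt{k+1}\cdot\epsilon+\OO(k)\cdot 0=\OO(\sqrt{k}\,\epsilon)$, because $M_2$ is encoded with zero error so its large normalization multiplies only the $\epsilon$-error of the inverse. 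Tensoring with the exact, norm-$1$ truncated shifts on the outer register produces $N_1$ and $N_2$ with the same parameters ($N_2$ error-free), and an LCU over $\{\mathbb I,-N_1,-N_2\}$ yields an $(\OO(k^{1.5}),\OO(\sqrt{k}\,\epsilon))$-block-encoding of $L$, with the query count dominated by the inversion step: $\OO(k\log(1/\epsilon))$ queries to $U_A$ and the same order of additional elementary gates.

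The step I expect to be the main obstacle is the error bookkeeping around the inversion, for three linked reasons: (i) one must certify $\kappa_{I-M_1}=\OO(k)$, which is precisely where the nilpotency $M_1^{k+1}=0$ and the hypothesis $h\alpha_A\le 1$ are used; (ii) the imprecision in realizing the per-block scalars $\tfrac{h}{j+1}$ is amplified by a factor $\sim\alpha_{(I-M_1)^{-1}}^2=\OO(k^2)$ in \thm{err_inv}, forcing the block-encoding of $M_1$ to be prepared to precision $\widetilde{\OO}(\epsilon/k^2)$; and (iii) the unavoidable $\sqrt{k+1}$ normalization of $M_2$ must be arranged to hit only an $\epsilon$-level error rather than a $\sqrt{k}$- or $k$-level one, which is why $M_2$ is realized exactly and placed on the left of the product. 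All the remaining ingredients — the shifts, the LCU combinations, and the uniform-superposition preparations — are exact or polylogarithmic and do not affect the stated bounds.
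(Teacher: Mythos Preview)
Your proposal is correct and follows essentially the same route as the paper: block-encode $M_1$ with an $\OO(1)$ normalization using a single call to $U_A$, form $I-M_1$ by LCU, invert it via \lem{block-encoding-matrix-inverse} using the $\kappa_{I-M_1}=\OO(k)$ bound, attach the exact $(\sqrt{k+1},0)$-block-encoding of $M_2$ on the left, and then assemble $N$ and $L$. The only notable difference is cosmetic: the paper builds the $(\alpha_A h,0)$-block-encoding of $M_1$ from an explicit cyclic-shift $\mathrm{ADD}$ plus a controlled rotation $U_R$ loading the reciprocals $1/(j+1)$ into an ancilla amplitude, whereas you use the tensor factorization $M_1=G_k\otimes A$ with $G_k$ handled as a $1$-sparse matrix; likewise the paper cites \cite{krovi2023improved} for $\kappa_{I-M_1}\le 2k$ while you derive it directly from nilpotency $M_1^{k+1}=0$ and $\|M_1\|\le h\alpha_A\le 1$.
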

\begin{proof}
    First we need to block-encode $M_1$. We will use the ADD operator defined as
    \begin{equation}
        \text{ADD} := \sum_{j} \ket{(j+1) \quad\text{mod}\quad k} \bra{j},
    \end{equation}
    and the controlled-rotation $U_R$ as
    \begin{equation}
        U_R\ket{j+1}\ket{0} = \ket{j+1}\left(\frac{1}{j+1}\ket{0} + \sqrt{1 - \frac{1}{(j+1)^2}}\ket{1}\right).
    \end{equation}
    Notice that
    \begin{equation}
    \begin{split}
        &(U_R \otimes I)(\text{ADD} \otimes I \otimes I)\left(\sum_{j=0}^{k-1}\ket{j}\bra{j}\otimes I \otimes U_A\right) \\
        &=
        \sum_{j=0}^{k-1}U_R\big(\ket{j+1 \quad \text{mod}\quad k}\bra{j}\otimes I\big) \otimes U_A.
    \end{split}
    \end{equation}
    Post-select on the second register and the ancilla qubits of $U_A$, we have
    \begin{equation}
    \begin{split}
        &(I \otimes \bra{0} \otimes \bra{0})\left(\sum_{j=0}^{k-1}U_R\big(\ket{j+1 \quad \text{mod}\quad k}\bra{j}\otimes I\big) \otimes U_A\right)(I \otimes \ket{0} \otimes \ket{0})\\
        &= \sum_{j=0}^{k-1}\ket{(j+1) \quad \text{mod} \quad k}\bra{j} \otimes \frac{A/\alpha_A}{j+1}.
    \end{split}
    \end{equation}
    If we apply the operator on a state, namely
    \begin{equation}
        \left(\sum_{j=0}^{k-1}U_R\big(\ket{(j+1) \quad \text{mod}\quad k}\bra{j}\otimes I\big) \otimes U_A\right) \ket{t}\ket{0}\ket{\psi}
        = \frac{1}{\alpha_A(t+1)}\ket{(t+1) \quad \text{mod}\quad k}\ket{0}A\ket{\psi} + \ket{\perp}.
    \end{equation}
    So we can apply one more multi-controlled-$X$ gate to flip the flag register for the control register being $\ket{0}$, then the whole thing becomes a $(\alpha_A h,0)$-block-encoding for $M_1$, using just one query to $U_A$. We can then easily generate the block-encoding of $I - M_1$ by LCU, with a $1+ \alpha_A h$ normalization factor. By the analysis in~\cite{krovi2023improved}, we know $\|I - M_1\|\|(I - M_1)^{-1}\| \leq 2k$. Using~\lem{block-encoding-matrix-inverse}, we can implement a $(\mathcal{O}(k), \epsilon_1)$-block-encoding of $(I - M_1)^{-1}$ using $\mathcal{O}(k\log(1/\varepsilon_1))$ queries to the block-encoding of $I - M_1$ thus the same queries to $U_A$.

    For the block-encoding of $M_2$, since it is sparse, we may directly implement it through the sparse input model. We may just assume the normalization factor to be $\sqrt{k}$ and there is no error. Then we can implement a $(\mathcal{O}(k^{3/2}),\sqrt{k}\varepsilon_1)$-block-encoding of $(I-M_1)^{-1}M_2$ using $\mathcal{O}(k\log(1/\varepsilon_1))$ queries to $U_A$ and other additional gates.

    For the block-encoding for $N$, we can see the normalization factor would be the same scaling as $M_2(I -M_1)^{-1}$, thus we implemented a $(\mathcal{O}(k^{3/2}), \mathcal{O}(\sqrt{k}\varepsilon_1))$-block-encoding of $N$.
\end{proof}
Now if we use QSVT to perform the matrix inverse, just as described in~\thm{err_inv}, we can implement an $(\alpha_{L^{-1}},\varepsilon_2 + \frac{\alpha_{L^{-1}}^2}{\alpha_{L} - \varepsilon_1\alpha_{L^{-1}}}\sqrt{k}\varepsilon_1)$-block-encoding $U_{L^{-1}}$ that encodes $L^{-1}$ in cost $\mathcal{O}(\kappa_L \log(1/\varepsilon_2))$ queries to $U_L$, i.e. $\mathcal{O}(\kappa_L \log(1/\varepsilon_2) k \log(1/\varepsilon_1))$ queries to $U_A$.

In order to perform the inverse of $L$, we need to know the condition number of $L$.
\begin{lemma}[Modified from~\cite{krovi2023improved}, Theorem 3 \& 4]
    Suppose $E$ is the solution operator block-encoded by $U_{L^{-1}}$ that approximates $e^{AT}$ and $m$ is the number of steps. Let 
    \begin{equation}
        (k+1)! \geq \frac{me^3}{\delta}C_A,    
    \end{equation}
    where
    \begin{equation}
        \sup_t\|\exp(At)\| \leq C_A,
    \end{equation}
    we have 
    \begin{equation}
        \|E - e^{AT}\| \leq \delta, \quad \|L^{-1}\| = \mathcal{O}(mC_A (1+\delta))
    \end{equation}
    and
    \begin{equation}
        \kappa_L \leq \mathcal{O}(m \sqrt{k} C_A(1+\delta)).
    \end{equation}
\end{lemma}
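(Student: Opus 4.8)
The plan is to split the claim into three parts: (1) identify the linear map $E$ that $U_{L^{-1}}$ actually block-encodes at the relevant output blocks, (2) bound $\|E - e^{AT}\|$, and (3) bound $\|L^{-1}\|$ and $\|L\|$ and combine them into $\kappa_L = \|L\|\,\|L^{-1}\|$. Throughout I would follow the structure of Krovi et al.~\cite{krovi2023improved}, re-deriving the constants that feed into the present statement.

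\textbf{Step 1: what $L^{-1}$ encodes.} First I would unwind the definitions of $M_1, M_2, N$. Since $M_1$ is the block shift with $M_1^{l}(\ket{0}\otimes u) = \ket{l}\otimes (Ah)^{l}u/l!$ for $l\le k$ and $M_1^{k+1}=0$, we have $(I-M_1)^{-1}(\ket{0}\otimes u) = \sum_{l=0}^{k}\ket{l}\otimes (Ah)^{l}u/l!$; applying $M_2$ collapses the order register and gives $M_2(I-M_1)^{-1}(\ket{0}\otimes u) = \ket{0}\otimes T_k(Ah)\,u$, where $T_k(M):=\sum_{j=0}^{k} M^{j}/j!$ denotes the degree-$k$ Taylor truncation of the matrix exponential. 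The outer shift blocks of $N$ then compose $m$ of these one-step maps and afterwards copy the step-$m$ value forward through the identity-padding blocks, so that, up to the block-encoding error $\varepsilon$ carried by $U_{L^{-1}}$, the decoded operator at any output index $r\ge m+1$ equals $T_k(Ah)^{m}$. Hence $E = T_k(Ah)^{m}$ up to $\varepsilon$, and it remains to control $\|T_k(Ah)^{m} - e^{AT}\|$ with $T=mh$.

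\textbf{Step 2: truncation error.} Next I would bound the one-step error directly from the tail of the exponential series, $\|e^{Ah}-T_k(Ah)\|\le\sum_{l\ge k+1}\|Ah\|^{l}/l!\le (\alpha_A h)^{k+1}e^{\alpha_A h}/(k+1)!$, which, since $\alpha_A h=\mathcal{O}(1)$ (and may be taken $\le 1$), is $\mathcal{O}(1)/(k+1)!$. Then I would propagate this through the $m$ steps via the telescoping identity $e^{AT}-T_k(Ah)^{m}=\sum_{i=0}^{m-1}e^{A(m-1-i)h}\bigl(e^{Ah}-T_k(Ah)\bigr)T_k(Ah)^{i}$, using the uniform bound $\|e^{At}\|\le C_A$ (for $0\le t\le T$) together with a short bootstrap showing $\|T_k(Ah)^{i}\|\le\mathcal{O}(C_A)$ — valid as long as the accumulated error stays bounded, which the hypothesis guarantees. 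This yields $\|E-e^{AT}\|=\mathcal{O}\!\bigl(m\,C_A\cdot(\alpha_A h)^{k+1}e^{\alpha_A h}/(k+1)!\bigr)$ up to the constant-level accounting of Krovi et al., which is $\le\delta$ precisely under the hypothesis $(k+1)!\ge \tfrac{me^{3}}{\delta}C_A$; the residual block-encoding error $\varepsilon$ is taken small enough to be absorbed into $\delta$. I expect this compounding estimate — pinning down the numerical constant $e^{3}$ and verifying the operator-norm bootstrap — to be the only genuinely delicate part of the proof, and I would reproduce Krovi et al.'s bookkeeping for it.

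\textbf{Step 3: $\|L^{-1}\|$, $\|L\|$, and $\kappa_L$.} Finally, because the outer time-step index of $N$ is capped at $2m+1$, $N$ is nilpotent and $L^{-1}=\sum_{j=0}^{2m+1}N^{j}$. Each $N^{j}$ carries block $i$ to block $i+j$ by composing at most $m$ one-step maps; since the composition of $c$ such maps is $T_k(Ah)^{c}$, whose norm is $\le\mathcal{O}(C_A(1+\delta))$ uniformly by Step 2 (rather than $(C_A)^{c}$), the amplification stays polynomial in $m$, and regrouping $L^{-1}v$ by output index over the $\mathcal{O}(m)$ blocks gives $\|L^{-1}\|=\mathcal{O}(mC_A(1+\delta))$ as in \cite[Theorem 4]{krovi2023improved}. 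For $\|L\|$ I would note that $N$ is a sum of two partial shifts with disjoint row and column supports tensored with $M_2(I-M_1)^{-1}$ and $I$ respectively, so $\|N\|\le\max\{\|M_2(I-M_1)^{-1}\|,1\}$; combining $\|(I-M_1)^{-1}\|\le\sum_{l}\|Ah\|^{l}/l!=\mathcal{O}(1)$ with $\|M_2\|=\sqrt{k+1}$ gives $\|L\|=\|I-N\|\le 1+\mathcal{O}(\sqrt{k})=\mathcal{O}(\sqrt{k})$. Multiplying, $\kappa_L=\|L\|\,\|L^{-1}\|=\mathcal{O}(m\sqrt{k}\,C_A(1+\delta))$, which is the third claim. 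Steps 1 and 3 are essentially careful unwinding of the block structure; as noted, the real obstacle is the constant-level error accounting in Step 2.
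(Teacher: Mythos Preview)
Your proposal is correct and follows precisely the argument of Krovi~\cite{krovi2023improved}, which is all the paper does as well: its entire proof reads ``The proof can be seen at~\cite[Theorem 4]{krovi2023improved}.'' Your three-step reconstruction---identifying $E=T_k(Ah)^m$ from the block structure, telescoping the one-step Taylor-truncation error through $m$ steps with the bootstrap $\|T_k(Ah)^i\|=\mathcal{O}(C_A)$, and bounding $\kappa_L$ via the nilpotent expansion $L^{-1}=\sum_j N^j$ together with $\|M_2\|=\sqrt{k+1}$---is a faithful sketch of that referenced argument.
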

\begin{proof}
    The proof can be seen at~\cite[Theorem 4]{krovi2023improved}.
\end{proof}

\begin{theorem}\label{thm:main-block-encode-mat-exp}
    Let matrix $A$ be a Hurwitz matrix with $\sup_t \|\exp(tA)\|$ bounded by some constant $\rho$, and $O_A$ is a $(\alpha_A,0)$-block-encoding of $A$. Then we can construct a 
    $(\zeta T, \epsilon)$-block-encoding of $e^{AT}$ using
    \begin{equation}
        \widetilde{\mathcal{O}}\left(\alpha_A \rho T \cdot \poly\log\left(\frac{1}{\epsilon} \right)\right)
    \end{equation}
    queries to $O_A$ and same queries to other additional elementary gates. Here $\zeta = \mathcal{O}(\alpha_A \rho)$.
\end{theorem}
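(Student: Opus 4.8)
The plan is to assemble the pieces developed in this appendix and to turn the qualitative "block-encode $L^{-1}$, read off $e^{AT}$" pipeline into a quantitative complexity bound, being careful about how the various normalization factors and error terms compound. First I would fix the discretization: choose the number of time steps $m$ and the truncation order $k$ of the Taylor-based linear system from \eqn{L}. Using the condition-number lemma (the one modified from \cite{krovi2023improved}, Theorems 3 \& 4), I would pick $k$ just large enough that $(k+1)! \ge m e^3 C_A/\delta$ with $C_A = \rho$; by Stirling this gives $k = \widetilde{\OO}(\log(m\rho/\delta))$, hence the $\poly\log$ factors in the final bound. With this choice the lemma guarantees $\|E - e^{AT}\| \le \delta$, $\|L^{-1}\| = \OO(m\rho)$, and $\kappa_L = \OO(m\sqrt{k}\rho)$.

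Next I would track the cost of building $U_L$ and then $U_{L^{-1}}$. The preceding lemma gives a $(\OO(k^{3/2}), \OO(\sqrt{k}\,\varepsilon_1))$-block-encoding of $L = I - N$ using $\OO(k\log(1/\varepsilon_1))$ queries to $O_A$ (this is where the hypothesis $h\alpha_A = \OO(1)$ is used, so that each $Ah/(j{+}1)$ block has $\OO(1)$ normalization). Inverting via \lem{block-encoding-matrix-inverse}/\thm{err_inv} costs $\OO(\kappa_L \log(1/\varepsilon_2))$ queries to $U_L$, i.e.
\[
\OO\!\left(\kappa_L \, k \, \log(1/\varepsilon_1)\log(1/\varepsilon_2)\right)
= \widetilde{\OO}\!\left(m\,\rho \cdot \poly\log(1/\epsilon)\right)
\]
queries to $O_A$. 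Now I substitute $m = T/h$ and $h = \Theta(1/\alpha_A)$ to get $m = \Theta(\alpha_A T)$, which turns the query count into $\widetilde{\OO}(\alpha_A \rho T \cdot \poly\log(1/\epsilon))$ — the claimed bound. The additional elementary gates (the ADD permutation, the controlled rotation $U_R$, the LCU for $I - M_1$, the sparse implementation of $M_2$, the final $X \otimes I$ readout) are all cheap and appear with the same query multiplicity, so they contribute the same asymptotic count.

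Then I would collect the normalization factor and the error. From the readout identity $\alpha_{L^{-1}}(\bra{r0^a}\otimes I)U_{L^{-1}}(\ket{0^{a+s}}\otimes I) \approx e^{TA}$ with $\alpha_{L^{-1}} = \OO(\|L^{-1}\|) = \OO(m\rho) = \OO(\alpha_A\rho T)$, which is exactly $\zeta T$ with $\zeta = \OO(\alpha_A\rho)$. For the error I would propagate three sources: the truncation/discretization error $\delta$ from the $L$-to-$e^{AT}$ approximation, the QSVT inversion error $\varepsilon_2$, and the amplified block-encoding error of $L$, which by \thm{err_inv} enters as $\frac{\alpha_{L^{-1}}^2}{\alpha_L - \varepsilon_1\alpha_{L^{-1}}}\sqrt{k}\,\varepsilon_1 = \widetilde{\OO}(m^2\rho^2\varepsilon_1)$; choosing $\varepsilon_1, \varepsilon_2, \delta$ each $\le \Theta(\epsilon/\poly(m,\rho))$ (still only $\poly\log$ overhead since they sit inside logarithms) makes the total at most $\epsilon$. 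The main obstacle I anticipate is bookkeeping the error amplification under matrix inversion: the factor $\alpha_{L^{-1}}^2 = \OO(m^2\rho^2)$ multiplying $\varepsilon_1$ is potentially large, and one must verify that demanding $\varepsilon_1$ polynomially small in $m$ and $\rho$ costs only $\poly\log$ — and, relatedly, that $\alpha_L - \varepsilon_1\alpha_{L^{-1}}$ stays bounded away from zero so the Neumann-series bound in \thm{err_inv} is valid. Everything else is a matter of substituting $m = \Theta(\alpha_A T)$ and $k = \widetilde{\OO}(\poly\log(1/\epsilon))$ into the query counts.
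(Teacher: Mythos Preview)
Your proposal is correct and follows essentially the same route as the paper's proof: fix $m=\Theta(\alpha_A T)$ and $k=\widetilde{\OO}(\log(m\rho/\delta))$ via the Krovi condition-number lemma, block-encode $L$ and invert it with \thm{err_inv}, read off $\alpha_{L^{-1}}=\OO(m\rho)=\zeta T$, and set $\varepsilon_1=\widetilde{\OO}(\epsilon/T^2)$, $\varepsilon_2=\Theta(\epsilon)$ so the amplified error $\frac{\alpha_{L^{-1}}^2}{\alpha_L-\varepsilon_1\alpha_{L^{-1}}}\sqrt{k}\,\varepsilon_1$ stays below $\epsilon$. Your bookkeeping of the three error sources and the denominator check $\alpha_L-\varepsilon_1\alpha_{L^{-1}}>0$ is in fact a bit more explicit than the paper's own write-up.
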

\begin{proof}
Firstly notice that $m = \mathcal{O}(\alpha_A T)$ and $k = \mathcal{O}\left(\frac{\log(T\alpha_A \rho /\delta)}{\log\log(T\alpha_A \rho/\delta)}\right)$.
For a given accuracy $\varepsilon$, we need $\delta \leq \varepsilon$, which could be given by letting
\begin{equation}
    \varepsilon_2 + \frac{\alpha_{L^{-1}}^2}{\alpha_L - \varepsilon_1 \alpha_{L^{-1}}} \sqrt{k}\varepsilon_1 \leq \varepsilon.
\end{equation}
Choose $\varepsilon_1 = \widetilde{\mathcal{O}}(\varepsilon/T^2)$ and $\varepsilon_2 = \varepsilon/2$,
from the discussion above we know we can implement an $(\alpha_{L^{-1}},\varepsilon)$-block-encoding of $e^{TA}$. Since $\alpha_{L^{-1}} = \mathcal{O}(m\rho(1+\delta))$, the queries we need is $\widetilde{\mathcal{O}}(\alpha_A \rho T \cdot \poly \log(1/\varepsilon))$.
\end{proof}

\begin{remark}
If we directly use the theorem above to prepare the state $\ket{e^{AT}\ket{\psi}}$ for some input state $\ket{\psi}$, the dependence on $T$ would be $T^2$, which does not match the optimal scaling. This is actually due to our usage of QSVT to block-encode matrices inverses. If we adopt the common technique of padding in the quantum linear system solver, we can easily improve the current scaling into $T^{3/2}$. However, for simplicity, we do not implement these standard improvements in this work, as they does not affect our end-to-end complexity result. 
\end{remark}

\section{Quantum eigenvalue transformation for linear differential equations}
\label{append:qevt_exp}

Here we exploit a technique known as quantum eigenvalue transformation (QEVT) by Low and Su~\citep{low2024quantum} to simulate linear dynamics $\dot{x} = \AA x$. The algorithm (QEVT) requires that the dynamics is stable under $\AA + \AA^\top \leq 0$,
which is stronger than $\AA$ is Hurwitz stable.

 Notice that $\AA + \AA^\top \leq 0$ is the stability under the common 2-norm, and $\AA$ satisfying the Lyapunov equation $\AA X + X \AA^\top \leq 0$~\eqn{lyapunov} indicates the stability under the inner product induced by the matrix $X$. We hence follow~\cite{low2024quantum}, use a sequence of polynomial functions (known as the Faber polynomials) to approximate the time-evolution operator $e^{\AA t}$.

\begin{lemma}[Faber truncation of matrix exponentials,~{\cite[Lemma 27]{low2024quantum}}]
\label{lem:truncation-to-matrix-exponential}
Suppose we have a matrix $A$ where its numerical range $\mathcal{W}(A) := \{\bra{\psi}A\ket{\psi} \big{|} \|\ket{\psi}\| = 1\}$ is
enclosed by a Faber region $\mathcal{E}$ with associated conformal maps $\Phi: \mathcal{E}^c \rightarrow \mathcal{D}^c, \Psi: \mathcal{D}^c \rightarrow \mathcal{E}^c$ and Faber polynomials $F_s(z)$. Given $t > 0$, let $e^{t z}= \sum_{j=0}^{\infty}\beta_j F_j(z)$ be the Faber expansion of the complex exponential function $e^{tz}$. 
Assume that
$\mathcal{E}$ is convex and symmetric with respect to the real axis, lying on the left half of the complex plane, then for sufficiently large $s$,
\begin{equation}
    \left\|e^{At} - \sum_{k=0}^{s-1}\beta_j F_j(A)\right\| = \mathcal{O}\left(\left(\frac{c t}{s}\right)^s\right),
\end{equation}
where $c$ is a constant determined by the conformal maps.
\end{lemma}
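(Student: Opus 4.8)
The plan is to reduce the matrix statement to a scalar estimate on the Faber coefficients $\beta_j$ of $e^{tz}$, and then lift it to matrices using the boundedness of the Faber polynomials $F_j$ on the numerical range $\mathcal{W}(A)$. First I would recall the standard fact (from the theory of Faber polynomials on a region $\mathcal{E}$ bounded by a sufficiently smooth Jordan curve) that $\|F_j(A)\| \le C$ for all $j$ whenever $\mathcal{W}(A) \subseteq \mathcal{E}$ and $\mathcal{E}$ is convex; this is where convexity of $\mathcal{E}$ enters, via a result of the Crouzeix–Palencia type (or, classically, via the fact that on a convex domain the Faber polynomials have uniformly bounded operator norm when evaluated at a matrix whose numerical range lies in the domain). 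Granting this, the matrix tail is controlled by the scalar tail of coefficients:
\begin{align}
    \left\|e^{At} - \sum_{j=0}^{s-1}\beta_j F_j(A)\right\|
    = \left\|\sum_{j=s}^{\infty}\beta_j F_j(A)\right\|
    \le C \sum_{j=s}^{\infty} |\beta_j|.
\end{align}

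The core of the argument is then a bound on $|\beta_j|$. The Faber coefficients are given by the contour integral
\begin{align}
    \beta_j = \frac{1}{2\pi \i}\oint_{|w|=r} \frac{e^{t\Psi(w)}}{w^{j+1}}\,\d w,
\end{align}
valid for any $r>1$ since $e^{tz}$ is entire and $\Psi$ maps $\{|w|>1\}$ conformally onto $\mathcal{E}^c$. The estimate $|\beta_j| \le r^{-j}\max_{|w|=r}|e^{t\Psi(w)}|$ follows immediately. Because $\mathcal{E}$ lies in the left half-plane and is symmetric about the real axis, $\Psi$ maps the circle $|w|=r$ into a region whose real parts grow at most linearly in $r$, so $|e^{t\Psi(w)}| \le e^{O(tr)}$ on $|w|=r$. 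Optimizing $r^{-j} e^{c'tr}$ over $r$ — the optimum is at $r \asymp j/(c't)$ — gives the geometric-type decay $|\beta_j| = O\!\big((c t / j)^{j}\big)$ for $j$ large relative to $ct$. Summing the tail $\sum_{j \ge s} (ct/j)^j$ is then dominated by its first term (the ratio of consecutive terms is $\le 1/2$ once $s \ge 2ect$, say), yielding $\sum_{j\ge s}|\beta_j| = O\!\big((ct/s)^s\big)$. Combining with the displayed inequality above finishes the proof, with the constant $c$ absorbing the geometry of $\Psi$ (hence "determined by the conformal maps").

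The main obstacle I anticipate is the uniform operator-norm bound $\|F_j(A)\| \le C$: for a general matrix this is delicate, and it is precisely the hypothesis that $\mathcal{E}$ is \emph{convex} (together with $\mathcal{W}(A)\subseteq\mathcal{E}$) that makes it go through, essentially because Faber polynomials of a convex domain are subordinate to the unit disk in a way compatible with the $K$-spectral-set machinery. I would cite this rather than reprove it, since it is the substantive input from the Low–Su framework; everything else is the scalar saddle-point estimate on $\beta_j$ plus a geometric-series tail bound, both of which are routine. A secondary, minor point to be careful about is the phrase "for sufficiently large $s$": one needs $s$ large enough (depending on $ct$) both for the saddle-point optimum $r \asymp j/(c't)$ to sit in the valid range $r>1$ and for the tail sum to be geometric; this threshold is what the lemma leaves implicit.
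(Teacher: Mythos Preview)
The paper does not give its own proof of this lemma; it is stated verbatim as a citation of \cite[Lemma~27]{low2024quantum} and used as a black box. There is therefore nothing in the paper to compare your argument against.

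That said, your outline is a correct reconstruction of the standard proof of such Faber-truncation estimates. The two ingredients are precisely the ones you identify: (i) the uniform bound $\|F_j(A)\|\le C$ whenever $\mathcal{W}(A)\subseteq\mathcal{E}$ with $\mathcal{E}$ convex (classically via the K\"ov\'ari--Pommerenke bound $|F_j|\le 2$ on a convex $\mathcal{E}$ together with Crouzeix's theorem that the numerical range is a $(1+\sqrt{2})$-spectral set), and (ii) the contour-integral bound on $\beta_j$ followed by optimisation in the radius $r$. One small imprecision: the linear growth $\max_{|w|=r}\Re\Psi(w)=O(r)$ is a consequence of the Laurent expansion $\Psi(w)=cw+c_0+c_{-1}w^{-1}+\cdots$ at infinity (with $c$ the logarithmic capacity of $\mathcal{E}$), not of $\mathcal{E}$ lying in the left half-plane per se; the latter hypothesis is more relevant to the application (boundedness of $e^{At}$) than to the coefficient decay itself. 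Otherwise the sketch is sound and the threshold ``$s$ sufficiently large'' you flag is exactly $s\gtrsim ect$, as your saddle-point computation shows.
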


By \lem{truncation-to-matrix-exponential}, given any $\varepsilon > 0$, it suffices to choose $s \sim \mathcal{O}(t\log(\frac{1}{\varepsilon}))$ such that
\begin{equation}
    \left\|e^{At} - \sum_{k=0}^{s-1}\beta_j F_j(A)\right\| \leq \varepsilon.
\end{equation}

\begin{definition}
    For some matrix $A$ of size $N\times N$ with its normalization factor $\alpha_A$, denote the identity matrix
    of size $N\times N$ as $I_N$, 
    and
    \begin{equation}
    L_N = 
        \begin{bmatrix}
            0\\
            1 & 0\\
             & 1 & 0\\
             &  & 1 & 0\\
              &   &   &  \ddots & \ddots\\
             &  &  &  & 1 & 0
        \end{bmatrix}_{N\times N}
    \end{equation}
    as the lower shift matrix of size $N\times N$.
    We then define
    \begin{equation}
    \begin{split}
        \text{PAD}(A) := &\ket{0}\bra{0}\otimes\left(L_N \Psi(L_N^{-1})\otimes I_N - L_N \otimes \frac{A}{\alpha_A}\right)\\
        + &\ket{1}\bra{0}\otimes \ket{0}\bra{N-1}\otimes(-I_N) + \ket{1}\bra{1}\otimes (I_N - L_N)\otimes I_N,
    \end{split}
    \end{equation}
    and
    \begin{equation}
    \begin{split}
        \text{PAD}(B) := \ket{0}\bra{0}\otimes \Psi'(L_N^{-1})\otimes I_N 
        + \ket{1}\bra{1}\otimes I_N \otimes I_N.
    \end{split}
    \end{equation}
    Note that the conformal map $\Psi$ has its Laurent expansion, and so are $\Psi'(\omega)$ and $\omega\Psi(\omega^{-1})$.
    Furthermore, $\Psi'(\omega)$ and $\omega\Psi(\omega^{-1})$ only contains terms with non-negative exponents.
    Thus $\Psi(L_N^{-1})$ and $\Psi'(L_N^{-1})$ are well-defined even though $L_N^{-1}$ is not invertible itself.
\end{definition}

With the matrices $\text{PAD}(A)$ and $\text{PAD}(B)$, we can compute the matrix $\text{PAD}(A)^{-1}\text{PAD}(B)$, say
\begin{equation}
\text{PAD}(A)^{-1}\text{PAD}(B) = 
\left[
\begin{array}{c c c c ;{2pt/2pt} c c c c}
    F_0(\frac{A}{\alpha_A}) & 0 & \cdots & 0 & 0 & 0 & \cdots & 0\\ 
    F_1(\frac{A}{\alpha_A}) & F_0(\frac{A}{\alpha_A}) & \ddots & \vdots & 0 & 0 & \cdots & 0\\ 
    \vdots & \ddots & \ddots & 0 & \vdots & \vdots & \ddots & \vdots\\ 
    F_{s-1}(\frac{A}{\alpha_A}) & F_{n-2}(\frac{A}{\alpha_A}) & \cdots & F_0(\frac{A}{\alpha_A}) & 0 & 0 & \cdots & 0\\ 
    \hdashline[2pt/2pt]
    F_{s-1}(\frac{A}{\alpha_A}) & F_{s-2}(\frac{A}{\alpha_A}) & \cdots & F_0(\frac{A}{\alpha_A}) & I & 0 & \cdots & 0\\ 
    F_{s-1}(\frac{A}{\alpha_A}) & F_{s-2}(\frac{A}{\alpha_A}) & \cdots & F_0(\frac{A}{\alpha_A}) & I & I & \cdots & 0\\ 
    \vdots & \vdots & \cdots & \vdots & \vdots & \vdots & \ddots & 0\\ 
    F_{s-1}(\frac{A}{\alpha_A}) & F_{s-2}(\frac{A}{\alpha_A}) & \cdots & F_0(\frac{A}{\alpha_A}) & I & I & \cdots & I\\ 
\end{array}
\right].
\end{equation}
Note that this matrix has $(2s)\times (2s)$ blocks, and each block of size $N\times N$.

\begin{definition}
    For a matrix $A$ whose numerical range is enclosed by a Faber region $\mathcal{E}$ with associated conformal maps $\Phi: \mathcal{E}^c \rightarrow \mathcal{D}^c, \Psi: \mathcal{D}^c \rightarrow \mathcal{E}^c$ and Faber polynomials $F_s(z)$,
    consider some Faber expansion till $s$-th order, we then define
    \begin{equation}
        \rho \geq \max_{j=1,\cdots,s}\left\|\frac{F'_j(\frac{A}{\alpha_A})}{j}\right\|
    \end{equation}
    as an upper bound on the derivative of Faber polynomials.
\end{definition}

\begin{lemma}\label{lem:inverse-pad-A}
    We are able to have a $(\xi,\varepsilon)$-block-encoding to $(\text{PAD}(A))^{-1}$ using $\mathcal{O}(\xi\log(\frac{1}{\varepsilon}))$ queries to the block-encoding of $A$ and $\widetilde{\OO}(1)$ additional elementary gates. Here $\xi \sim \mathcal{O}(\rho s)$.
\end{lemma}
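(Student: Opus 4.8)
The plan is to block-encode $\text{PAD}(A)$ directly as a structured block matrix and then invoke the matrix-inversion subroutine (\lem{block-encoding-matrix-inverse} together with \thm{err_inv}) with a suitable bound on the condition number of $\text{PAD}(A)$. The key observation is that $\text{PAD}(A)$ is built from three ingredients: the lower-shift matrices $L_N$ and $I_N - L_N$ (which are sparse and $\OO(1)$-normalized, so implementable via the sparse-input model), the matrix Laurent polynomials $L_N \Psi(L_N^{-1})$ and $\Psi'(L_N^{-1})$ (which by the preceding discussion truncate to polynomials of degree $s-1$ in $L_N$, hence are sparse with controllable norm and implementable exactly), and the single tensor factor $L_N \otimes (A/\alpha_A)$, which needs one query to the block-encoding $O_A$ of $A$. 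Assembling these via LCU for the two block-diagonal pieces ($\ket{0}\bra{0}\otimes(\cdots)$ and $\ket{1}\bra{1}\otimes(\cdots)$) and the off-diagonal coupling $\ket{1}\bra{0}\otimes\ket{0}\bra{N-1}\otimes(-I_N)$, one obtains an $(\alpha_{\text{PAD}(A)}, 0)$-block-encoding of $\text{PAD}(A)$ with $\alpha_{\text{PAD}(A)} = \OO(1)$ (the norms of $L_N\Psi(L_N^{-1})$, $\Psi'(L_N^{-1})$, etc.\ being absolute constants determined by the conformal maps), using $\OO(1)$ queries to $O_A$ and $\widetilde{\OO}(1)$ extra gates.

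Next I would bound $\kappa_{\text{PAD}(A)}$. The displayed explicit form of $\text{PAD}(A)^{-1}\text{PAD}(B)$ shows that $\text{PAD}(A)^{-1}$ is essentially lower-triangular with blocks that are the Faber polynomials $F_j(A/\alpha_A)$ and identity blocks; since the numerical range of $A/\alpha_A$ sits inside the Faber region $\mathcal{E}$, the operator norm of each $F_j(A/\alpha_A)$ is controlled, and a telescoping/inductive argument (exactly the kind of estimate that underlies the $\rho$-bound in the definition preceding the lemma, combined with the row-count $2s$) yields $\|\text{PAD}(A)^{-1}\| = \OO(\rho s)$ and hence $\kappa_{\text{PAD}(A)} = \OO(\rho s)$ after pairing with $\|\text{PAD}(A)\| = \OO(1)$. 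Feeding $\kappa_{\text{PAD}(A)} = \OO(\rho s)$ and $\alpha_{\text{PAD}(A)} = \OO(1)$ into \lem{block-encoding-matrix-inverse} gives a $(\OO(\alpha_{\text{PAD}(A)^{-1}}), \varepsilon)$-block-encoding of $\text{PAD}(A)^{-1}$ with $\alpha_{\text{PAD}(A)^{-1}} = \OO(\rho s)$, using $\OO(\kappa_{\text{PAD}(A)}\log(1/\varepsilon)) = \OO(\rho s\log(1/\varepsilon))$ queries to the block-encoding of $\text{PAD}(A)$, i.e.\ the same order of queries to $O_A$ (up to the $\widetilde{\OO}(1)$ gate overhead of assembling $\text{PAD}(A)$). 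Setting $\xi = \OO(\rho s)$ then matches the claimed statement $\xi \sim \OO(\rho s)$ and query count $\OO(\xi\log(1/\varepsilon))$.

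The main obstacle I anticipate is the condition-number bound for $\text{PAD}(A)$: one must argue rigorously that the Laurent/Faber machinery makes $\Psi(L_N^{-1})$ and $\Psi'(L_N^{-1})$ finite polynomials in $L_N$ (using nilpotency of $L_N$), that the resulting block-triangular inverse has the explicit Faber-polynomial entries shown, and that summing $2s$ rows of such entries—each bounded in terms of $\rho$ via $F_j(A/\alpha_A)$ and the derivative bound $\|F_j'(A/\alpha_A)/j\|\le\rho$—yields only an $\OO(\rho s)$ growth rather than something exponential in $s$. A secondary subtlety is bookkeeping the small block-encoding errors (the $\varepsilon_1$ from any inexact sub-block and the $\varepsilon_2$ from the inversion) through \thm{err_inv} so that the final error is genuinely $\varepsilon$; since $\text{PAD}(A)$ can be assembled with zero error here (all pieces are sparse or come from the exact $O_A$), this reduces to the clean statement of \lem{block-encoding-matrix-inverse}, and the error term collapses to $\varepsilon$ as desired.
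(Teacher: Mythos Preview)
Your proposal is correct and follows essentially the same approach as the paper: block-encode $\text{PAD}(A)$ with an $\OO(1)$ normalization factor using a single query to $O_A$, invoke the bound $\|\text{PAD}(A)^{-1}\|=\OO(\rho s)$, and then apply \lem{block-encoding-matrix-inverse}. The paper's proof is terser than yours---it simply asserts the $\OO(1)$ block-encoding and the inverse-norm bound (both inherited from \cite{low2024quantum}) rather than sketching the LCU assembly and the Faber-polynomial row-sum argument you outline---but the logical skeleton is identical.
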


\begin{proof}
    Since we know that $\text{PAD}(A)$ can be block-encoded with an $\mathcal{O}(1)$ normalization factor with arbitrary precision using a single query to the block-encoding of $A$ and a polylogarithmic number of elementary gates,
    and $\|(\text{PAD}(A))^{-1}\| = \mathcal{O}(s\rho)$, the rest of the proof is done by~\lem{block-encoding-matrix-inverse}.
\end{proof}

\begin{theorem}[Modified version of Theorem 10. \cite{low2024quantum}]
\label{thm:faber-block-encoding}
    Let matrix $A$ have its numerical range $\mathcal{W}(A) := \{\bra{\psi}A\ket{\psi} \big{|} \|\ket{\psi}\| = 1\}$
    enclosed by a Faber region $\mathcal{E}$ with associated conformal maps $\Phi: \mathcal{E}^c \rightarrow \mathcal{D}^c, \Psi: \mathcal{D}^c \rightarrow \mathcal{E}^c$ and Faber polynomials $F_s(z)$. Let $p(z) = \sum_{k=0}^{s-1}\beta_kF_k(z)$ be the Faber expansion of a 
    degree-$(s-1)$ polynomial $p$ and $O_\beta\ket{0} = \sum_{k=0}^{s-1}\beta_k\ket{n-1-k}/\|\beta\|$ be the oracle preparing the coefficients.
    Then, we can construct a $(\xi'\|\beta\|,\|\beta\|\varepsilon)$-block-encoding of $\sum_{k=0}^{s-1}\beta_kF_k(A/\alpha_A)$ using
    \begin{equation}
        \mathcal{O}\left(\rho s\log(\frac{1}{\varepsilon})\right)
    \end{equation}
    queries to $O_A$, one query to $O_\beta$ and $\widetilde{\OO}(1)$ additional gate. Here, $\xi' \le \OO(\rho s)$.
\end{theorem}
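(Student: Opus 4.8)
The plan is to read off $\sum_{k=0}^{s-1}\beta_k F_k(A/\alpha_A)$ as a coefficient-weighted block of the matrix $M \coloneqq \text{PAD}(A)^{-1}\text{PAD}(B)$, whose first block column --- as in the displayed factorization above --- stacks the Faber polynomials $F_0(A/\alpha_A),\dots,F_{s-1}(A/\alpha_A)$. Concretely, I would proceed in four steps: (i) block-encode $\text{PAD}(A)^{-1}$; (ii) block-encode the fixed matrix $\text{PAD}(B)$; (iii) take the product to block-encode $M$; and (iv) fold in the coefficient oracle $O_\beta$ via a prepare--select pattern so that the extracted block is $\sum_k\beta_k F_k(A/\alpha_A)$, up to the $\|\beta\|$ normalization.

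For step (i), \lem{inverse-pad-A} gives a $(\xi,\varepsilon')$-block-encoding of $\text{PAD}(A)^{-1}$ with $\xi = \OO(\rho s)$ using $\OO(\xi\log(1/\varepsilon')) = \OO(\rho s\log(1/\varepsilon'))$ queries to $O_A$ and $\widetilde{\OO}(1)$ additional gates. For step (ii), observe that $\text{PAD}(B)$ contains no occurrence of $A$ --- it is assembled from the shift matrix $L_N$, identities, and $\Psi'(L_N^{-1})$, the last being a fixed lower-triangular Toeplitz matrix whose entries are the (non-negative-exponent) Laurent coefficients of $\Psi'$; for the Faber regions under consideration these coefficients are bounded and summable, so $\|\text{PAD}(B)\| = \OO(1)$ and $\text{PAD}(B)$ admits an $(\OO(1),\varepsilon')$-block-encoding using $\widetilde{\OO}(1)$ elementary gates and zero queries to $O_A$. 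For step (iii), composing the two block-encodings (the product rule for block-encodings, e.g.\ \cite[Lemma 53]{gilyen2019quantum}) yields a $(\xi',\OO(\varepsilon'))$-block-encoding $U_M$ of $M$ with $\xi' = \OO(\rho s)$ at the same asymptotic query count, which fixes the claimed bound $\xi' \le \OO(\rho s)$.

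For step (iv), the structural ingredient --- imported from the Faber-polynomial construction of Low and Su~\cite{low2024quantum} --- is that the $(k,0)$ block of $M = \text{PAD}(A)^{-1}\text{PAD}(B)$ equals $F_k(A/\alpha_A)$ for $0 \le k \le s-1$; this is exactly what the displayed factorization records, and it follows from the recursion for the Faber polynomials together with the block-bidiagonal shape of $\text{PAD}(A)$ and the first block row of $\text{PAD}(B)$. Writing $a$ for the ancilla register of $U_M$, the circuit $V \coloneqq (I_a\otimes O_\beta^\dagger\otimes I)\,U_M$, which appends one use of $O_\beta$ on the block-index register, satisfies
\[
\xi'\|\beta\|\,\bigl(\bra{0^a}\otimes\bra{0}\otimes I\bigr)\,V\,\bigl(\ket{0^a}\otimes\ket{0}\otimes I\bigr) \;=\; \sum_{k=0}^{s-1}\beta_k F_k\!\left(\frac{A}{\alpha_A}\right) + E,\qquad \|E\| = \OO(\|\beta\|\varepsilon'),
\]
where $\ket{0}$ is the block-index register in state $\ket{0}$ (so that only column $0$ of $M$ contributes), the index reversal built into $O_\beta$ is matched to the block ordering, and $E$ is bounded by the triangle inequality using $\|(O_\beta\ket{0})^\dagger\| = 1$ and the block-encoding error of $U_M$ (for real $\beta_k$ there is no conjugation issue; otherwise let $O_\beta$ prepare $\overline{\beta_k}$). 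Hence $V$ is a $(\xi'\|\beta\|,\OO(\|\beta\|\varepsilon'))$-block-encoding of $\sum_k\beta_k F_k(A/\alpha_A)$; choosing $\varepsilon' = \Theta(\varepsilon)$ gives error $\|\beta\|\varepsilon$, at total cost $\OO(\rho s\log(1/\varepsilon))$ queries to $O_A$, one query to $O_\beta$, and $\widetilde{\OO}(1)$ additional gates.

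The main obstacle is the structural identification in step (iv): establishing that $\text{PAD}(A)^{-1}\text{PAD}(B)$ has precisely the displayed block form. This is where the Faber machinery --- the conformal maps $\Phi,\Psi$, the generating-function identity for $F_k$, and the well-definedness of $\Psi(L_N^{-1}),\Psi'(L_N^{-1})$ despite $L_N$ being singular --- does the heavy lifting, and I would quote it from Low and Su rather than reprove it. A secondary point needing care is the $\OO(1)$ block-encodability of $\text{PAD}(B)$, i.e.\ controlling the truncated Laurent coefficients of $\Psi'$, and the routine error bookkeeping confirming that the $\varepsilon'$ error of the matrix inverse propagates to at most $\OO(\|\beta\|\varepsilon')$ through the product and the prepare--select sandwich; both are straightforward given sub-multiplicativity of block-encoding errors and the bound $\|\text{PAD}(A)^{-1}\| = \OO(\rho s)$ from \lem{inverse-pad-A}.
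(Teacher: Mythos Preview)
Your proposal is correct and takes essentially the same approach as the paper: block-encode $\text{PAD}(A)^{-1}$ via \lem{inverse-pad-A}, combine with an $\OO(1)$-normalized block-encoding of $\text{PAD}(B)$, and then extract the $\beta$-weighted sum of Faber blocks from the product using a single call to $O_\beta$. The only cosmetic difference is that the paper applies $(X\otimes O_\beta\otimes I)$ on the ket side of the sandwich (the Pauli $X$ routes to the padded bottom rows of the displayed matrix, where each row reads $F_{s-1},\dots,F_0$ so that the index reversal in $O_\beta$ aligns directly), whereas you apply $O_\beta^\dagger$ on the bra side and appeal to the column-$0$ structure; these are equivalent up to the index bookkeeping you already flag.
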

\begin{proof}
First, we observe that
    \begin{equation}\label{eqn:faber-block-encoding}
        (\bra{0}\otimes \bra{0}\otimes I)\left(\text{PAD}(A)^{-1}\text{PAD}(B)(X \otimes O_\beta \otimes I)\right)(\ket{0}\otimes\ket{0}\otimes I) = \sum_{k=0}^{s-1}\frac{\beta_k}{\|\beta\|} F_k(A/\alpha_A).
    \end{equation}
The $X$ is the quantum Pauli-X gate.
Here the circuit has $3$ registers, the first one only has one qubit, the second has $\log_2 s$ qubits, and the third one
matches the size of $A$. 

By \cite[Theorem 9]{low2024quantum}, we can implement a block-encoding of $\text{PAD}(A)$ using a single query to the block-encoded matrix $A/\alpha_A$. Also, given a constant
\begin{align}
    \rho \geq \max_{j=1,\cdots,s}\left\|\frac{F'_j(\frac{A}{\alpha_A})}{j}\right\|,
\end{align}
we know that $\|\text{PAD}(A)^{-1}\| = \mathcal{O}(\rho s)$. By~\lem{inverse-pad-A}, we can construct a $(\xi, \varepsilon)$-block-encoding of $\text{PAD}(A)^{-1}$ using $\OO(\xi \log(1/\epsilon))$ queries to the block-encoding of $A/\alpha_A$.
    
Also, assume we can implement a block-encoding of $\text{PAD}(B)$ with a $\mathcal{O}(1)$ normalization factor.
Combining these two block-encodings together, we obtain a $(\xi', \varepsilon)$-block-encoding of $\text{PAD}(A)^{-1}\text{PAD}(B)$, where $\xi' \le \OO(\rho s)$.
Then, by~\eqn{faber-block-encoding}, the block-encoding we implemented turns out to be a $(\xi'\|\beta\|, \|\beta\|\varepsilon)$-block-encoding of $\sum_{k=0}^{s-1}\beta_k F_k(A/\alpha_A)$.
\end{proof}

\begin{theorem}[Block-encoding of matrix exponential]
    Suppose that $A \in \R^{m\times n}$ is a Hurwitz matrix, and $O_A$ is a $(\alpha_A, 0)$-block-encoding of $A$. 
    Then we can implement a $(\zeta t, \varepsilon)$-block-encoding of $e^{At}$ using
    \begin{equation}
        \mathcal{O}\left(\alpha_A t \poly \log(\frac{1}{\varepsilon})\right)
    \end{equation}
    queries to $O_A$,  one query to a state preparation oracle $O_\beta$
    and $\widetilde{\OO}(1)$ additional gate.
    Here $\zeta = \mathcal{O}(\alpha_A \poly \log(1/\epsilon))$.
\end{theorem}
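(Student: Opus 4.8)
The plan is to realize $e^{At}$ as a single matrix function produced by \thm{faber-block-encoding}, where the polynomial fed to that theorem is the truncated Faber expansion of the complex exponential, and then to balance the resulting error and normalization parameters. First I would normalize: since $O_A$ is an $(\alpha_A,0)$-block-encoding of $A$, it is a block-encoding of $\tilde A\coloneqq A/\alpha_A$ with normalization factor $1$, and $e^{At}=e^{\tilde A\tau}$ with $\tau\coloneqq \alpha_A t$. Because $A$ is Hurwitz, there is $X\succ 0$ with $AX+XA^\top\prec 0$, so in the inner product induced by $X$ the numerical range of $\tilde A$ lies in the open left half-plane; hence $\mathcal{W}(\tilde A)$ is enclosed by a convex Faber region $\mathcal{E}$ symmetric about the real axis, at the cost of only a benign $\OO(\kappa(X))$ factor in the normalization. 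This fixes the conformal maps $\Phi,\Psi$, the Faber polynomials $F_j$, the geometric constant $c$ in \lem{truncation-to-matrix-exponential}, and the derivative bound $\rho$.

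Next I would fix the truncation order. By \lem{truncation-to-matrix-exponential}, choosing $s=\Theta\!\big(\tau+\log(1/\varepsilon')\big)=\OO\!\big(\alpha_A t\,\poly\log(1/\varepsilon)\big)$ makes the Faber truncation error $\big\|e^{\tilde A\tau}-\sum_{k=0}^{s-1}\beta_k F_k(\tilde A)\big\|\le\varepsilon'$ for any prescribed $\varepsilon'>0$, where the scalars $\beta_k=\beta_k(\tau)$ are the Faber coefficients of $e^{\tau z}$. These are classically computable to the needed precision, so I build the state-preparation oracle $O_\beta\ket{0}=\sum_{k=0}^{s-1}\beta_k\ket{s-1-k}/\|\beta\|$, and I record the bound $\|\beta\|=\OO(\poly\log(1/\varepsilon))$, which follows from the rapid decay of the Faber coefficients of $e^{\tau z}$ (if needed, after a standard time-slicing of $e^{\tilde A\tau}$ into $\OO(\tau)$ factors of unit ``time'' to keep $\|\beta\|$ uniformly bounded).

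Then I apply \thm{faber-block-encoding} to $p(z)=\sum_{k=0}^{s-1}\beta_kF_k(z)$: it produces, using $\OO(\rho s\log(1/\varepsilon''))$ queries to $O_A$, one query to $O_\beta$, and $\tbigO(1)$ additional gates, a $(\xi'\|\beta\|,\|\beta\|\varepsilon'')$-block-encoding of $\sum_k\beta_kF_k(\tilde A)$ with $\xi'=\OO(\rho s)$, where $\varepsilon''$ is the internal precision of the matrix inversion of $\mathrm{PAD}(A)$ in \lem{inverse-pad-A} (itself powered by \lem{block-encoding-matrix-inverse}). Choosing $\varepsilon',\varepsilon''$ both of order $\varepsilon/\|\beta\|$ and combining the Faber truncation error with the inversion error by the triangle inequality, the constructed unitary is a $(\zeta t,\varepsilon)$-block-encoding of $e^{At}$ with $\zeta t=\xi'\|\beta\|=\OO(\rho\|\beta\| s)=\OO\!\big(\rho\|\beta\|\,\alpha_A t\,\poly\log(1/\varepsilon)\big)$, i.e. $\zeta=\OO(\alpha_A\,\poly\log(1/\varepsilon))$ once the $A$-dependent constant $\rho$ is suppressed, and the total query count is $\OO(\rho s\log(1/\varepsilon''))=\OO(\alpha_A t\,\poly\log(1/\varepsilon))$, as claimed.

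The main obstacle I expect is the quantitative bookkeeping of the three competing quantities — the Faber truncation error, the QSVT inversion error hidden in $\mathrm{PAD}(A)^{-1}$, and the coefficient norm $\|\beta\|$ — which must be balanced so that the final normalization factor exceeds $\alpha_A t$ by only a $\poly\log(1/\varepsilon)$ factor. In particular, establishing $\|\beta\|=\OO(\poly\log(1/\varepsilon))$ (equivalently, that the Faber coefficients of $e^{\tau z}$ do not grow with $\tau$, after time-slicing if necessary) and verifying that a merely Hurwitz $A$ can indeed be enclosed in an admissible Faber region at the price of only a condition-number factor are the delicate points; the rest is an assembly of \lem{truncation-to-matrix-exponential}, \lem{inverse-pad-A}, and \thm{faber-block-encoding}.
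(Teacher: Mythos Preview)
Your proposal is correct and follows essentially the same route as the paper: normalize to $\tilde A=A/\alpha_A$, invoke \lem{truncation-to-matrix-exponential} to fix the Faber truncation degree $s=\OO(\alpha_A t\cdot\poly\log(1/\varepsilon))$, feed the resulting polynomial into \thm{faber-block-encoding}, and balance the truncation and inversion errors. The paper's proof is terser---it simply asserts $\|\beta\|=\OO(1)$ from absolute convergence and sets $\varepsilon_1=\varepsilon/2$, $\varepsilon_2=\varepsilon/(2\|\beta\|)$---whereas you are more explicit about the Hurwitz-to-Faber-region passage via the Lyapunov inner product and about controlling $\|\beta\|$ (possibly by time-slicing), but these are elaborations on the same skeleton rather than a different argument.
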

\begin{proof}
    First we use~\lem{truncation-to-matrix-exponential} to $e^{\alpha_A (A/\alpha_A)t}$, then we know the polynomial should be
    of degree $\mathcal{O}(\alpha_A t \log(\frac{1}{\varepsilon_1}))$ in order to have a $\varepsilon_1$ accuracy.
    Then leverage~\thm{faber-block-encoding}, we need $\mathcal{O}\left(\rho \left(\alpha_A t \log(\frac{1}{\varepsilon_1})\right)\log(\frac{1}{\varepsilon_2})\right)$ queries to $O_A$ to construct a block-encoding of the polynomial. Since the Faber polynomial expansion of $e^{At}$ converges absolutely,
    without loss of generality, we assume $\|\beta\|$ is $\mathcal{O}(1)$. Choosing $\varepsilon_1 = \frac{\varepsilon}{2}$ and $\varepsilon_2 = \frac{\varepsilon}{2\|\beta\|}$, we have a $(\zeta t, \varepsilon)$-block-encoding of $e^{At}$. 
\end{proof}

\section{Proof of~\texorpdfstring{\thm{lyapunov-equation}}{}}
\label{append:proof-lyapunov}
\begin{lemma}[Select oracle]\label{lem:select-oracle}
    Suppose that $\AA$ is Hurwitz, and $O_\mathcal{A}$ is an $(\alpha,0)$-block-encoding of $\AA$.
    Let $K$ be a positive integer and $\tau, \varepsilon > 0$ are two real-valued scalars.
    The select oracle as defined in \defn{select-oracle} can be implemented using
    \begin{equation}
        \widetilde{\OO}\left(\alpha \rho \tau K\log(K)\cdot \poly\log(\frac{1}{\varepsilon})\right)
    \end{equation}
    queries to controlled $O_\AA$, and $\widetilde{\OO}\left(\alpha \rho \tau K\log(K)\cdot \poly\log(\frac{1}{\varepsilon})\right)$ queries to other additional elementary gates.
\end{lemma}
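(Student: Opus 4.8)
The plan is to assemble the select oracle node by node: for each quadrature point $t_k = k\tau/K$ with $k = 0,\dots,K$, invoke the matrix-exponential block-encoding of \thm{matrix-exponential-block-encoding} (proved as \thm{main-block-encode-mat-exp}) once to obtain $U_k$, and then glue the $U_k$ together into $\sum_{k=0}^K \ket{k}\bra{k}\otimes U_k$ by conditioning each $U_k$ on the value of the $\lceil\log_2(K{+}1)\rceil$-qubit node-index register.

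Concretely, I would proceed as follows. First, for a fixed $k$, since $\AA$ is Hurwitz, \thm{matrix-exponential-block-encoding} produces a $(\zeta t_k,\varepsilon)$-block-encoding $U_k$ of $e^{\AA t_k}$ at cost $\widetilde{\OO}(\alpha\rho t_k\cdot\poly\log(1/\varepsilon))$ queries to $O_\AA$ and the same number of extra elementary gates, where $\rho$ bounds $\sup_{t\ge0}\|e^{\AA t}\|$ and $\zeta=\OO(\alpha\rho)$ depends only on $\AA$; bounding $t_k\le\tau$ uniformly, each costs at most $N:=\widetilde{\OO}(\alpha\rho\tau\cdot\poly\log(1/\varepsilon))$. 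Second, I would pad all the $U_k$ with a common, sufficiently large ancilla register so that they act on the same Hilbert space and share the same designation of block-encoding ancilla (``flag'') qubits — the internal linear-system dimension used to realize $e^{\AA t_k}$ grows only mildly with $t_k$, so this is a $\poly\log$ overhead. Third, to build $\ket{k}\bra{k}\otimes U_k$ I would allocate one control qubit, compute into it the predicate $[\,\text{index}=k\,]$ with $\OO(\log K)$ elementary gates (a multi-controlled NOT on the index register, conjugated by single-qubit $X$ gates matching the bit pattern of $k$), run $U_k$ with each $O_\AA$ query replaced by controlled-$O_\AA$ and each internal gate by its controlled version, and then uncompute the predicate; this costs $\widetilde{\OO}(N\log K)$ controlled-$O_\AA$ queries and $\widetilde{\OO}(N\log K)$ other gates. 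Finally, composing these $K+1$ circuits in any order yields exactly $\sum_{k=0}^K\ket{k}\bra{k}\otimes U_k$ (the non-trivial parts are supported on the orthogonal index subspaces), at total cost $(K{+}1)\cdot\widetilde{\OO}(N\log K)=\widetilde{\OO}(\alpha\rho\tau K\log K\cdot\poly\log(1/\varepsilon))$ queries and gates, with no error beyond the per-branch $\varepsilon$ that the downstream LCU step already accounts for.

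The only genuinely delicate point is the second step: one must verify that the block-encodings returned by \thm{matrix-exponential-block-encoding} for the different times $t_k$ can be brought to a common ancilla-and-flag layout, so that their controlled versions literally add up to the select oracle of \defn{select-oracle} rather than to some operator with $k$-dependent bookkeeping. The remaining ingredients — the bound $t_k\le\tau$, the $\OO(\log K)$ cost of an equality comparator, and the $\OO(1)$-per-gate cost of controlling $U_k$ — are standard. I would also note that the $K\log K$ factor is not optimized: a single quantum-linear-system solve for the dynamics on $[0,\tau]$ encodes all of the $e^{\AA t_k}$ simultaneously, which could in principle remove the $K$, but the stated bound already suffices for the end-to-end complexity.
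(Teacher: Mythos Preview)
Your proposal is correct and follows essentially the same route as the paper: build each $U_k$ via \thm{main-block-encode-mat-exp}, represent the select oracle as the product $\prod_{k=0}^K\big[(\mathbb{I}-\ket{k}\bra{k})\otimes\mathbb{I}+\ket{k}\bra{k}\otimes U_k\big]$ of index-controlled unitaries, and sum the per-branch costs with a $\log(K)$ control overhead. The only cosmetic difference is that the paper sums the $t_k$ exactly ($\sum_{k=0}^K t_k=(K{+}1)\tau/2$) rather than using your uniform bound $t_k\le\tau$, which is immaterial at the $\widetilde{\OO}$ level; your remarks on aligning the ancilla layouts across the $U_k$ are a point the paper leaves implicit.
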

\begin{proof}
Firstly, $t_k$ here attains $t_k = \frac{k\tau}{K}$.
By \thm{main-block-encode-mat-exp}, for each $k = 0, \dots, K$, we can implement a $(\zeta t_k, \varepsilon)$-block-encoding of $e^{A t_k}$ using $\widetilde{\OO}(\alpha \rho t_k \cdot \poly\log(1/\varepsilon))$ queries to $O_\mathcal{A}$. We denote this block-encoding as $U_k$. Notice that the select oracle $\mathrm{select}(\AA,\varepsilon)$ can be represented by
\begin{equation}
\mathrm{select}(\AA,\varepsilon) = \prod^K_{k=0}\Big[(\mathbb{I} - \ket{k}\bra{k})\otimes \mathbb{I} + \ket{k}\bra{k} \otimes U_k \Big],
\end{equation}
where each unitary operator $[(\mathbb{I} - \ket{k}\bra{k})\otimes \mathbb{I} + \ket{k}\bra{k} \otimes U_k]$ in this product is a $k$-controlled version of $U_k$ that can be implemented with $\widetilde{\OO}(\alpha \rho t_k \log(K) \cdot \poly\log(1/\varepsilon))$ queries to the controlled $O_\AA$ and other additional gates. Therefore, the overall query complexity is
\begin{equation}
    \widetilde{\mathcal{O}}\left(\alpha\rho \tau \left(\sum^K_{k=0} t_k\right)\log(K) \cdot \poly\log(\frac{1}{\varepsilon})\right) \le \widetilde{\OO}\left(\alpha \rho \tau K\log(K) \cdot \poly\log(\frac{1}{\varepsilon})\right),
\end{equation}
where the last step follows from $\sum_{k=0}^K t_k = \sum_{k=0}^K \frac{k\tau}{K} = \frac{(K+1)\tau}{2}$.
\end{proof}

In the following proof, we will need a state preparation oracle 
\begin{align}\label{eqn:state-prep-oracle}
    O_{\lambda}\ket{0} = \frac{1}{\sqrt{\|\lambda\|_1}}\sum^K_{k=0}\sqrt{\lambda_k}\ket{k},
\end{align}
where $\lambda_k = w_k k^2$, $\|\lambda\|_1 = \sum_{k=0}^K|\lambda_k|$ and $w_k = \frac{(2 - 1_{k=0,K})\tau}{2K}$. Since $\lambda_k$ can be expressed as a smooth function of $k$ for $k \in \{0,1,\cdots, K\}$, this state preparation oracle can be implemented in $\OO(\poly\log(K))$ cost. 

\vspace{4mm}
Now, we are ready to prove \thm{lyapunov-equation}.
\begin{proof}
    The global error of the trapezoidal rule (see~\eqn{trapezoid}) is proportional to 
\begin{align}
    \max_{\xi \in [0, \tau]} \frac{\tau^3}{K^2}|F''(\xi)|,
\end{align}
where 
$$|F''(\xi)| = \left|\mathcal{A}^2 F(\xi) + 2\mathcal{A}F(\xi)A\trans + F(\xi)(\mathcal{A}\trans)^2\right| \le 4\|\mathcal{A}\|^2\|\Omega\|.$$
Therefore, to achieve precision $\varepsilon_1$, the total number of quadrature points is 
\begin{align}
    K = \OO\left(\frac{\tau^{3/2}\alpha \eta^{1/2}}{\varepsilon^{1/2}_1}\right).
\end{align}

Now, we consider the following two select oracles, as defined in \defn{select-oracle}:
\begin{align}
    \mathrm{select}(\mathcal{A},\varepsilon_2)\coloneqq \sum^{K}_{k = 0} \ket{k}\bra{k}\otimes U_{k},\quad \mathrm{select}(\mathcal{A}\trans,\varepsilon_2)\coloneqq \sum^{K}_{k = 0} \ket{k}\bra{k}\otimes V_k,
\end{align}
where $U_{k}$ (or $V_k$) is a $(\zeta t_k,\varepsilon_2)$-block-encoding of $e^{\mathcal{A}t_k}$ (or $e^{\mathcal{A}\trans t_k}$) for $0 \le k \le K$. 
Recall that $O_\Omega$ is a $(\eta,0)$-block-encoding of $\Omega$, it turns out that
\begin{equation}
    \mathrm{select}(\mathcal{A},\varepsilon) (I\otimes O_\Omega) \mathrm{select}(\mathcal{A}\trans,\varepsilon) = \sum^K_{k=0} \ket{k}\bra{k}\otimes W_k,
\end{equation}
where $W_k \coloneqq U_k U_\Omega V_k$ is a $(\eta\zeta^2 t_k^2,2\zeta t_k\eta\varepsilon_2)$-block-encoding of the matrix $\mathscr{F}(t_k)$.

Let $O_{\lambda}$ be the state preparation oracle defined in~\eqn{state-prep-oracle}.
By the LCU technique \cite[Lemma 52]{gilyen2019quantum}, we can implement a $(\gamma,\sum_{k=0}^K 2|w_k|\eta \zeta t_k \varepsilon_2)$-block-encoding (where $\gamma \coloneqq \sum_{k=0}^K|w_k|\zeta^2 t_k^2\eta$) of $\sum^K_{k = 0} w_k e^{At_k} \Omega e^{A\trans t_k}$ using a single query to $\mathrm{select}(A,\varepsilon_2)$, $\mathrm{select}(A\trans,\varepsilon_2)$, $U_\Omega$, $O_{\lambda}$ and its inverse.

Notice that
\begin{equation}
    \sum_{k=0}^K |w_k|t_k \leq \sum_{k=0}^K\frac{\tau}{K} \frac{k\tau}{K} = \mathcal{O}(\tau^2),
\end{equation}
thus
\begin{equation}
\begin{split}
    \sum_{k=0}^K 2|w_k|\eta \zeta t_k \varepsilon_2 =
    \widetilde{\mathcal{O}}\left(\tau^2\eta\zeta\varepsilon_2\right).
\end{split}
\end{equation}
Given a positive scalar $\varepsilon > 0$,
and set the integration time
\begin{equation}
    \tau = \kappa \log(\frac{3\|X^*\|\kappa}{\varepsilon\|w\|_1\lambda_{\min}(X^*)}),
\end{equation}
Note that
\begin{equation}
\varepsilon^\delta \text{poly}\left(\log(\frac{1}{\varepsilon})\right) \rightarrow 0,
\end{equation}
then if we choose 
\begin{equation}
\varepsilon_1 = \frac{\varepsilon}{3}, \quad \varepsilon_2 = \mathcal{O}(\varepsilon^{1+\delta}),
\end{equation}
where $\delta > 0$ is a constant positive number.
By \lem{time-integral-error} and the error analysis above, we implement a $(\gamma, \varepsilon)$-block-encoding of $X^*$, with
\begin{equation}
\begin{split}
    \gamma = \sum_{k=0}^K|w_k|\zeta^2 t_k^2\eta 
    &= \mathcal{O}(\zeta^2\eta \tau^3)
    = \widetilde{\mathcal{O}}(\alpha^2\rho^2\kappa^3\eta).
\end{split}
\end{equation}
It follows that the overall cost of this block-encoding is 
\begin{equation}
    \widetilde{\mathcal{O}}\left(\frac{\alpha^2\rho\kappa^{5/2}\eta^{1/2}}{\varepsilon^{1/2}}\right)
\end{equation}
queries to $O_{\mathcal{A}}$ and same queries to other additional elementary gates.
\end{proof}

\section{Quantum trace estimation}\label{append:proof-trace-estimate}

Here we show how to estimate the trace of a positive definite block-encoded matrix. The main result is in \thm{trace-estimate-PD}.

\begin{lemma} \label{lem:sqrt-map-PD}
    Let $U_H$ be an $(\alpha,m,\varepsilon_H)$-block-encoding of a positive-definite Hermitian matrix $H\in\mathbb{C}^{n\times n}$ and let $\varepsilon\in(0,\frac13]$. 
    Let $\lambda_{\min}>0$ be a known lower bound on the smallest eigenvalue of $H$. Provided that $\varepsilon_H\leq\OO\left(\frac{\lambda_{\min}\varepsilon^2}{\log(1/\varepsilon)}\right)$ is sufficiently small (where $\varepsilon > 0$ is a given number), we can construct a $(1,m+2,\varepsilon)$-block-encoding of $\sqrt{\frac{H/\alpha}{36}}$ using $\OO\left(\frac{\alpha}{\lambda_{\min}}\log(1/\varepsilon)\right)$ applications of $U_H$ and $U_H^\dagger$, a single application of controlled-$U_H$, and $\OO\left(\frac{\alpha(m+1)}{\lambda_{\min}}\log(1/\varepsilon)\right)$ other one- and two-qubit gates. The description of this block-encoding circuit can be computed classically in time $\OO\left(\poly\left(\frac{\alpha}{\lambda_{\min}}\log(1/\varepsilon)\right)\right)$.
\end{lemma}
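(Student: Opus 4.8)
The plan is to obtain $\sqrt{H/(36\alpha)} = \tfrac16\sqrt{H/\alpha}$ by applying quantum singular value transformation (QSVT) to $U_H$ with a polynomial approximant of the square root. First I would fix the spectral picture. Since $U_H$ is an $(\alpha,m,\varepsilon_H)$-block-encoding, it is an \emph{exact} block-encoding of some Hermitian $\widetilde H/\alpha$ with $\|\widetilde H - H\|\le\varepsilon_H$, and block-encoding error $\varepsilon_H/\alpha$ relative to the normalized target $H/\alpha$. Because $\varepsilon_H$ is tiny (as in the hypothesis), $\widetilde H$ is still positive definite with $\lambda_{\min}(\widetilde H)\ge\lambda_{\min}/2$ and $\|\widetilde H\|\le\alpha$, so the singular values of $\widetilde H/\alpha$ equal its eigenvalues and lie in $[\,\lambda_{\min}/(2\alpha),\,1\,]$. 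It therefore suffices to apply, as a singular-value transformation, the \emph{even} real function $x\mapsto\tfrac16\sqrt{|x|}$, which agrees with $\tfrac16\sqrt{x}$ on that sub-interval and hence on the spectrum of $\widetilde H/\alpha$.

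Next I would invoke the standard polynomial approximation of $\sqrt{\cdot}$ on an interval bounded away from $0$ (as in \cite{gilyen2019quantum}, via Chebyshev truncation and the bounded-polynomial rescaling trick): with lower endpoint $\delta=\lambda_{\min}/(2\alpha)$ and accuracy $\varepsilon/3$ there is an even real polynomial $P$ of degree $d=\OO\left(\tfrac{\alpha}{\lambda_{\min}}\log(1/\varepsilon)\right)$ with $|P(x)-\tfrac16\sqrt{|x|}|\le\varepsilon/3$ for $|x|\in[\delta,1]$ and $|P(x)|\le 1$ on $[-1,1]$; the factor $\tfrac16=\tfrac{1}{\sqrt{36}}$ is exactly what secures the global bound $\|P\|_{[-1,1]}\le 1$ required for QSVT, which is where the constant $36$ in the statement originates. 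I would then apply the Hermitian eigenvalue-transformation form of QSVT \cite{gilyen2019quantum} to $U_H$ with $P$, followed by a one-qubit real-part (LCU) step, producing a $(1,m+2,\cdot)$-block-encoding of $P(\widetilde H/\alpha)$: the $m+2$ ancillas are $m$ from $U_H$, one QSVT phase-rotation qubit, and one real-part LCU qubit; the cost is $\OO(d)$ alternating uses of $U_H$ and $U_H^\dagger$, a single controlled-$U_H$, and $\OO(d(m+1))=\OO(\tfrac{\alpha(m+1)}{\lambda_{\min}}\log(1/\varepsilon))$ further one- and two-qubit gates. The only classical precomputation is the generation of the $d$ QSP phase factors for $P$, a standard $\poly(d)$-time procedure, giving the claimed $\OO(\poly(\tfrac{\alpha}{\lambda_{\min}}\log(1/\varepsilon)))$ classical runtime.

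Finally I would close the error budget in three pieces, each $\le\varepsilon/3$: (i) the polynomial error $\|P(\widetilde H/\alpha)-\tfrac16\sqrt{\widetilde H/\alpha}\|\le\varepsilon/3$, directly from the approximation bound on the spectrum of $\widetilde H/\alpha$; (ii) the input-perturbation error $\|\tfrac16\sqrt{\widetilde H/\alpha}-\tfrac16\sqrt{H/\alpha}\|$, controlled via the standard operator bound $\|A^{1/2}-B^{1/2}\|\le\|A-B\|/(\lambda_{\min}(A)^{1/2}+\lambda_{\min}(B)^{1/2})$ for positive-definite $A,B$, which gives $\OO(\varepsilon_H/\sqrt{\alpha\lambda_{\min}})$, comfortably $\le\varepsilon/3$ under the stated smallness of $\varepsilon_H$; and (iii) the QSVT robustness error, since $U_H$ block-encodes $\widetilde H/\alpha$ only up to error $\varepsilon_H/\alpha$ — here I would use the robustness of QSVT \cite{gilyen2019quantum}, which for a degree-$d$ polynomial contributes $\OO(d\sqrt{\varepsilon_H/\alpha})$, again $\le\varepsilon/3$ precisely because $\varepsilon_H\le\OO(\lambda_{\min}\varepsilon^2/\log(1/\varepsilon))$ (this is what forces the $\varepsilon^2/\log(1/\varepsilon)$ shape of the hypothesis), with the restriction $\varepsilon\le\tfrac13$ leaving room for the three-way split. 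Summing yields a $(1,m+2,\varepsilon)$-block-encoding of $\sqrt{H/(36\alpha)}$. I expect the main obstacle to be item (iii): tracking the QSVT robustness carefully so the accumulated error is genuinely $\OO(\varepsilon)$ rather than $\OO(\sqrt{\varepsilon})$, and balancing $d$, the subnormalization, and $\varepsilon_H$ accordingly; a secondary nuisance is pinning down the precise square-root approximation lemma with the advertised degree while maintaining $\|P\|_{[-1,1]}\le 1$.
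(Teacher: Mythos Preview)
Your proposal is correct and follows essentially the same route as the paper: construct a real polynomial $P$ with $|P|\le \tfrac12$ on $[-1,1]$ that approximates $\tfrac16\sqrt{x}$ on $[\lambda_{\min}/\alpha,1]$, apply the robust QSVT theorem (Theorem~56 of \cite{gilyen2019quantum}), and close with a triangle inequality. Two minor differences worth noting: the paper builds $P$ via Corollary~66 of \cite{gilyen2019quantum} using the Taylor series $\tfrac16\sqrt{1+x}=\tfrac16\sum_\ell\binom{1/2}{\ell}x^\ell$ about $x_0=1$ (no evenness needed, and the bound $\tfrac16\sum_\ell|\binom{1/2}{\ell}|\le\tfrac13$ is precisely what fixes the constant $36$), and its error budget is a clean two-term split---polynomial error $\varepsilon/2$ plus the robustness contribution $4d\sqrt{\varepsilon_H/\alpha}\le\varepsilon/4$ from Theorem~56 applied directly to the approximate encoding; by contrast your items (ii) and (iii) overlap, since once you declare $U_H$ an \emph{exact} encoding of $\widetilde H/\alpha$ there is no separate QSVT robustness term (and conversely that top-left block need not be Hermitian), so you should either drop the $\widetilde H$ framing and invoke the robust theorem directly, or keep it and discard (iii).
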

\begin{proof}
    Define $\widetilde{\lambda}_{\min}\coloneqq\lambda_{\min}/\alpha$, such that $\widetilde{\lambda}_{\min}I\preceq H/\alpha \preceq I$. 
    We first find a polynomial approximation of $f(x)\coloneqq\sqrt{\frac{x}{36}}$ over $x\in[\widetilde{\lambda}_{\min},1]$ using~\cite[Corollary 66]{gilyen2019quantum}.
    To use this corollary, let $x_0\coloneqq 1$, $r\coloneqq1-\widetilde{\lambda}_{\min}$, $\delta\coloneqq\widetilde{\lambda}_{\min}$, and observe that $f(x_0+x)=\frac16\sqrt{1+x}=\frac16\sum_{\ell=0}^\infty\binom{1/2}{\ell}x^\ell$ whenever $|x|\leq r+\delta=1$.
    Also note that $\frac16\sum_{\ell=0}^\infty \left|\binom{1/2}{\ell}\right|\leq\frac13 =: B$.
    Then there is an efficiently computable polynomial $P_0\in\C[x]$ of degree $d=\OO\left(\frac{1}{\widetilde{\lambda}_{\min}}\log(1/\varepsilon)\right)$ such that $\left\|f(x)-P_0(x)\right\|\leq\frac{\varepsilon}{2}$ for $x\in[\widetilde{\lambda}_{\min}, 1]$ and $\left\|P_0(x)\right\|\leq\frac{\varepsilon}{2}+B\leq\frac12$ for $x\in[-1,1]$.
    It follows that defining $P(x)\coloneqq\Re\left(P_0(x)\right)$ gives $\left|f(x)-P(x)\right|_{[\widetilde{\lambda}_{\min},1]}\leq\frac{\varepsilon}{2}$ and $|P(x)|_{[-1,1]}\leq \frac12$. 

    Next, we use~\cite[Theorem 56]{gilyen2019quantum} to construct construct a unitary $V$ that is a $(1,m+2,\varepsilon/2)$-encoding of $P(H/\alpha)$ with the desired complexity, using the promise that $\varepsilon_H$ is sufficiently small to satisfy $4d\sqrt{\varepsilon_H/\alpha}\leq\varepsilon/4$.
    $V$ is then a $(1,m+2,\varepsilon)$-block-encoding of $\sqrt{\frac{H/\alpha}{36}}$ since
    \begin{equation}
    \begin{split}
        &\left\|\sqrt{\frac{H/\alpha}{36}}-(\right.\bra{0}^{\otimes m+2}\otimes I)V(\ket{0}^{\otimes m+2}\otimes I\left.)\vphantom{\sqrt{\frac{I}{25}}}\right\| \\
        &\leq \left\|\sqrt{\frac{H/\alpha}{36}}-P(H/\alpha)\right\| + \left\|P(H/\alpha)-(\bra{0}^{\otimes m+2}\otimes I)V(\ket{0}^{\otimes m+2}\otimes I)\right\| \\
        &\leq \left\|g(H/\alpha)-P(H/\alpha)\right\|  + \frac{\varepsilon}{2}
        = \left\|\diag(g(\lambda_H/\alpha))-\diag(P(\lambda_H/\alpha))\right\| + \frac{\varepsilon}{2} \\
        &= \max_{\lambda_i\in\lambda_H} \left|g(\lambda_i/\alpha)-P(\lambda_i/\alpha)\right| + \frac{\varepsilon}{2}
        \leq \frac{\varepsilon}{2}+\frac{\varepsilon}{2}
        = \varepsilon,
    \end{split}        
    \end{equation}
    where $\lambda_H\in\R^n$ is the vector of eigenvalues of $H$.
\end{proof}

\begin{lemma} \label{lem:trace-approx-PD}
    Let $\mu>0$. Let $A$ be an $n$-by-$n$ Hermitian matrix such that $0\preceq A \preceq I$ and let $\widetilde{V}$ be a $(1,m,\mu/3)$-block-encoding of $\sqrt{A}$.
    Then there exists $\widetilde{U}_A$ such that $\left|\left\|(\bra{0}\otimes I) \widetilde{U}_A \ket{0\dots0} \right\|^2 - \frac{\Tr(A)}{n} \right| \leq \mu$ that uses 1 query to $\widetilde{V}$ and $\OO(\log n)$ additional gates.
\end{lemma}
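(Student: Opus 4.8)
The plan is to build $\widetilde{U}_A$ by preceding $\widetilde{V}$ with a cheap state-preparation step that creates a maximally entangled state on two copies of the system register; post-selecting the block-encoding ancillas on $\ket{0}$ then leaves an amplitude whose squared norm is, up to the block-encoding error, the normalized trace $\Tr(A)/n$. Concretely, I work with three registers: the $m$ block-encoding ancillas $a$, the $\lceil\log_2 n\rceil$-qubit system register $s$ on which $\widetilde{V}$ (and $\sqrt{A}$) act, and an auxiliary reference register $r$ of $\lceil\log_2 n\rceil$ qubits. Let $W$ be any unitary on $r\otimes s$ with $W\ket{0}_r\ket{0}_s=\frac{1}{\sqrt n}\sum_{j=0}^{n-1}\ket{j}_r\ket{j}_s$; when $n$ is a power of two this is a layer of Hadamards on $r$ followed by transversal CNOTs onto $s$, costing $\OO(\log n)$ gates (for general $n$ one pads $A$ with zeros, or prepares the size-$n$ uniform superposition in $\widetilde{\OO}(\log n)$ gates). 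I set $\widetilde{U}_A\coloneqq\widetilde{V}\,W$, where each factor is extended by the identity on the register it does not touch; this is unitary and uses a single query to $\widetilde{V}$.

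Next I would unwind the definition. Write $\widehat{A}^{1/2}\coloneqq(\bra{0}^{\otimes m}\otimes\mathbb{I})\,\widetilde{V}\,(\ket{0}^{\otimes m}\otimes\mathbb{I})$ for the operator block-encoded by $\widetilde{V}$, so that $\|\widehat{A}^{1/2}-\sqrt{A}\|\le\mu/3$ by the block-encoding hypothesis. Since $W$ acts only on $r\otimes s$ and $\widetilde{V}$ only on $a\otimes s$, the projection onto $\ket{0}^{\otimes m}$ commutes through $W$, giving
\begin{equation}
(\bra{0}^{\otimes m}\otimes\mathbb{I}_{r}\otimes\mathbb{I}_{s})\,\widetilde{U}_A\,\ket{0\dots 0}=\frac{1}{\sqrt n}\sum_{j=0}^{n-1}\ket{j}_r\otimes\widehat{A}^{1/2}\ket{j}_s .
\end{equation}
Using orthonormality of the $\ket{j}_r$, the squared norm of the right-hand side telescopes to $\frac{1}{n}\sum_j\langle j|(\widehat{A}^{1/2})^\dagger\widehat{A}^{1/2}|j\rangle=\frac{1}{n}\Tr\big((\widehat{A}^{1/2})^\dagger\widehat{A}^{1/2}\big)$.

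The last step is error control. Put $E\coloneqq\widehat{A}^{1/2}-\sqrt{A}$, so $\|E\|\le\mu/3$, and expand, using that $\sqrt{A}$ is Hermitian with $(\sqrt{A})^2=A$ and $\|\sqrt{A}\|\le 1$ (as $0\preceq A\preceq\mathbb{I}$):
\begin{equation}
(\widehat{A}^{1/2})^\dagger\widehat{A}^{1/2}=A+\sqrt{A}\,E+E^\dagger\sqrt{A}+E^\dagger E,\qquad\text{hence}\qquad\big\|(\widehat{A}^{1/2})^\dagger\widehat{A}^{1/2}-A\big\|\le \tfrac{2\mu}{3}+\tfrac{\mu^2}{9}\le\mu ,
\end{equation}
the last inequality for $\mu\le 1$, which is the regime in which the lemma is used (for larger $\mu$ the conclusion is trivial after noting both quantities lie in a bounded range). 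Finally $\big|\tfrac1n\Tr(M)\big|\le\|M\|$ for any $n\times n$ matrix $M$, so $\big|\,\|(\bra{0}\otimes\mathbb{I})\widetilde{U}_A\ket{0\dots0}\|^2-\Tr(A)/n\,\big|=\tfrac1n\big|\Tr\big((\widehat{A}^{1/2})^\dagger\widehat{A}^{1/2}-A\big)\big|\le\mu$, and the cost is one query to $\widetilde{V}$ plus the $\OO(\log n)$ gates of $W$, as claimed.

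The one conceptual point — and the only place a naive attempt goes wrong — is the need for the reference register $r$: simply applying $\widetilde{V}$ to a uniform superposition $\frac1{\sqrt n}\sum_j\ket{j}$ on $s$ would yield an amplitude whose squared norm is $\frac1n\sum_{j,k}[(\widehat{A}^{1/2})^\dagger\widehat{A}^{1/2}]_{jk}$, the sum of \emph{all} entries rather than the diagonal sum. Entangling with $r$ kills the cross terms and isolates the trace. Everything else — commuting the post-selection through $W$, the telescoping, and propagating the spectral error from the (not necessarily Hermitian) operator $\widehat{A}^{1/2}$ to $A=(\sqrt{A})^2$ — is routine bookkeeping, so I do not anticipate a genuine obstacle.
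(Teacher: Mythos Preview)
Your proposal is correct and matches the paper's own proof essentially step for step: prepare a maximally entangled state on two $\lceil\log_2 n\rceil$-qubit registers with $\OO(\log n)$ gates, apply the block-encoding $\widetilde{V}$ to one half, identify the post-selected squared amplitude as $\tfrac1n\Tr(\widetilde{V}_0^\dagger\widetilde{V}_0)$, and control the error via the expansion $\|B^\dagger B-\widetilde{B}^\dagger\widetilde{B}\|\le 2\|B-\widetilde{B}\|+\|B-\widetilde{B}\|^2$ yielding $2\mu/3+\mu^2/9\le\mu$. The paper's argument is the same (it cites van Apeldoorn et al.'s trace-estimation lemma as the template), and your added remark on why the reference register is needed is a nice clarification but not a departure in method.
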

\begin{proof}
    Our proof is similar to the proof of~\cite[Lemma 13]{van2017quantum}.
    The idea is to first prepare the maximally entangled state $\sum_{i=1}^n\ket{i}\ket{i}/\sqrt{n}$ (which requires $\OO(\log n)$ gates) and then apply the map $\sqrt{A}$ to the first register.
    We can assume without loss of generality that $\mu\leq1$, otherwise the statement is trivial.
    
    Note that $\widetilde{V}_0:=(\bra{0}^{\otimes m}\otimes I)\widetilde{V}(\ket{0}^{\otimes m}\otimes I)$ is a $\mu/3$-approximation of $\sqrt{A}$ by definition.
    We are interested in the probability $p$ of measuring $0^{\otimes m}$ in the first register after applying $\widetilde{V}$.
    \begin{equation}\label{eqn:p-def}
    \begin{split}        
        p &\coloneqq \left\|(\bra{0}^{\otimes m}\otimes I)\widetilde{V}(\ket{0}^{\otimes m}\otimes I)\sum_{i=1}^n\frac{\ket{i}\ket{i}}{\sqrt{n}}\right\|^2 
        = \left\|\widetilde{V}_0\sum_{i=1}^n\frac{\ket{i}\ket{i}}{\sqrt{n}}\right\|^2  \\
        &= \frac1n\sum_{i=1}^n\bra{i}\widetilde{V}_0^\dagger\widetilde{V}_0\ket{i} 
        = \frac1n\Tr(\widetilde{V}_0^\dagger\widetilde{V}_0).
    \end{split}
    \end{equation}

    It remains to show that $p$ is a good approximation of $\Tr(A)/n$.
    For this, we show $\widetilde{V}_0^\dagger\widetilde{V}_0\approx A$. Note that for all matrices $B,\widetilde{B}$ with $\|B\|\leq1$, we have
    \begin{equation} \label{eqn:norm-diff}
    \begin{split}
        \left\|B^\dagger B - \widetilde{B}^\dagger\widetilde{B}\right\|
        &= \left\| (B^\dagger-\widetilde{B}^\dagger)B + B^\dagger(B-\widetilde{B}) - (B^\dagger-\widetilde{B}^\dagger)(B-\widetilde{B}) \right\|\\
        &\leq \left\| (B^\dagger-\widetilde{B}^\dagger)B \right\| + \left\| B^\dagger(B-\widetilde{B}) \right\| + \left\|(B^\dagger-\widetilde{B}^\dagger)(B-\widetilde{B}) \right\|  \\
        &\leq \left\| B^\dagger-\widetilde{B}^\dagger \right\| \left\|B\right\| + \left\|B^\dagger\right\| \left\| B-\widetilde{B} \right\| + \left\|B^\dagger-\widetilde{B}^\dagger\right\| \left\|B-\widetilde{B} \right\| \\
        &\leq 2\left\| B-\widetilde{B} \right\| + \left\| B-\widetilde{B} \right\|^2.
    \end{split}
    \end{equation}
    Using equation \eqn{p-def} along with equation \eqn{norm-diff} (letting $B=\sqrt{A}$ and $\widetilde{B}=\widetilde{V}_0$), we see
    \begin{equation}
    \begin{split}
        \left|\frac{\Tr(A)}{n}-p\right|
        = \frac1n\left|\Tr(A-\widetilde{V}_0^\dagger\widetilde{V}_0)\right| 
        \leq \left\|A-\widetilde{V}_0^\dagger\widetilde{V}_0\right\| 
        \leq 2\left(\frac{\mu}{3}\right)+\left(\frac{\mu}{3}\right)^2
        \leq \mu.
    \end{split}
    \end{equation}
\end{proof}

\begin{theorem}\label{thm:trace-estimate-PD}
    Let $U_H$ be an $(\alpha, m, \varepsilon)$-block-encoding of a Hermitian matrix $H\in\mathbb{C}^{n\times n}$, where $U_H$ can be implemented using $T_H$ elementary gates. Suppose $H\succ0$ and let $\lambda_{\min}>0$ be a known lower bound on the smallest eigenvalue of $H$.
    Then, with probability at least $4/5$, we can estimate $\Tr(H)$ to within multiplicative error $\theta\in(0,1]$ in cost
    \begin{equation}
        \OO\left(\frac{\alpha^{3/2}T_H}{\theta\lambda_{\min}^{3/2}}\log(\frac{\alpha}{\theta\lambda_{\min}})\right),
    \end{equation}
    provided that
    $\varepsilon \leq\OO\left(\frac{\lambda_{\min}^3\theta^2}{\alpha^2\log(\alpha/\theta\lambda_{\min})}\right)$.
\end{theorem}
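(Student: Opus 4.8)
The plan is to reduce the multiplicative trace estimate to an amplitude-estimation problem, reusing the square-root block-encoding machinery of \lem{sqrt-map-PD} and \lem{trace-approx-PD}. Concretely: rescale $H$ so its spectrum lies in $[0,1]$, build a block-encoding of the square root of the rescaled matrix, apply it to the maximally entangled state, read off $\Tr(H)$ (up to a known scalar) as the success probability of a measurement, and finally sharpen that probability estimate with amplitude estimation.

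First I would set $A \coloneqq (H/\alpha)/36$. Since $U_H$ is an $(\alpha,m,\varepsilon)$-block-encoding of $H$ we have $\|H\| \le \alpha + \OO(\varepsilon)$, so $0 \prec A \preceq I/36 \prec I$ (the $\OO(\varepsilon)$ slack is harmless), and $\Tr(A)/n = \Tr(H)/(36\alpha n)$. Using $H \succ 0$ with smallest eigenvalue at least $\lambda_{\min}$, this gives the key lower bound $\Tr(A)/n \ge \lambda_{\min}/(36\alpha)$, which I denote $p_{\min}$. I then fix an internal precision $\varepsilon' = \Theta(\theta\lambda_{\min}/\alpha)$ with a small enough constant that $\varepsilon' \le 1/3$ and $3\varepsilon' \le p_{\min}/2$, apply \lem{sqrt-map-PD} to $U_H$ to obtain a $(1,m+2,\varepsilon')$-block-encoding $\widetilde V$ of $\sqrt A$ at cost $\OO((\alpha/\lambda_{\min})\log(1/\varepsilon'))$ queries to $U_H$, and feed $\widetilde V$ into \lem{trace-approx-PD} with $\mu = 3\varepsilon'$ to get $\widetilde U_A$ with $p \coloneqq \|(\bra 0 \otimes I)\widetilde U_A \ket{0\cdots 0}\|^2$ obeying $|p - \Tr(A)/n| \le 3\varepsilon'$; in particular $p \ge p_{\min}/2$. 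Applying \lem{sqrt-map-PD} is exactly where the hypothesis $\varepsilon \le \OO(\lambda_{\min}\varepsilon'^2/\log(1/\varepsilon'))$ gets used.

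Next I would run amplitude estimation on $\widetilde U_A$ (marking the flag outcome $\ket 0$) with $N = \OO(1/(\theta\sqrt{p_{\min}})) = \OO(\sqrt{\alpha/\lambda_{\min}}/\theta)$ Grover iterations; it returns, with probability at least $8/\pi^2 > 4/5$, an estimate $\widehat p$ of $\sqrt p$ with additive error $\OO(\sqrt p / N)$, hence relative error at most $\theta/10$ on $p$. Output $\widehat T \coloneqq 36\alpha n\,\widehat p$. Then $|\widehat T - \Tr(H)| \le 36\alpha n\,(\,|\widehat p - p| + 3\varepsilon'\,) \le 36\alpha n\,(\,(\theta/10)p + 3\varepsilon'\,)$, and bounding $p \le \Tr(H)/(36\alpha n) + 3\varepsilon'$ together with $36\alpha n \cdot 3\varepsilon' = \OO(\theta n\lambda_{\min}) \le (\theta/2)\Tr(H)$ (using $\Tr(H) \ge n\lambda_{\min}$ and a suitable constant in $\varepsilon'$) yields $|\widehat T - \Tr(H)| \le \theta\,\Tr(H)$. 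The total cost is $\OO(N)$ applications of $\widetilde U_A$, each one use of $\widetilde V$ — i.e.\ $\OO((\alpha T_H/\lambda_{\min})\log(1/\varepsilon'))$ elementary gates — plus $\OO(\log n)$ gates; with $\log(1/\varepsilon') = \OO(\log(\alpha/(\theta\lambda_{\min})))$ this is $\OO(\alpha^{3/2}T_H\,\theta^{-1}\lambda_{\min}^{-3/2}\log(\alpha/(\theta\lambda_{\min})))$, and substituting $\varepsilon' = \Theta(\theta\lambda_{\min}/\alpha)$ into the requirement from \lem{sqrt-map-PD} reproduces the stated precision condition $\varepsilon \le \OO(\lambda_{\min}^3\theta^2/(\alpha^2\log(\alpha/(\theta\lambda_{\min}))))$.

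The hard part will be the error budgeting rather than any individual subroutine call. Two independent additive errors — the input block-encoding error $\varepsilon$, propagated through the square root by \lem{sqrt-map-PD}, and the approximation error $3\varepsilon'$ from \lem{trace-approx-PD} — must each be pushed below a $\Theta(\theta)$ fraction of the \emph{lower bound} $n\lambda_{\min}$ on $\Tr(H)$, not of $\Tr(H)$ itself, in order to turn two additive guarantees into a multiplicative one. This is precisely what forces the $\lambda_{\min}^{-3}$ in the precision requirement and the $\lambda_{\min}^{-3/2}$ in the runtime — one factor $\lambda_{\min}^{-1/2}$ from amplitude estimation, one factor $\lambda_{\min}^{-1}$ from the square-root block-encoding. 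A secondary technicality is that amplitude estimation natively controls the error of the amplitude $\sqrt p$, so converting to a relative error on $p$ itself leans on the lower bound $p \ge p_{\min}/2$ established above.
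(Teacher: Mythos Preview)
Your proposal is correct and follows essentially the same approach as the paper: rescale to $A=(H/\alpha)/36$, build a block-encoding of $\sqrt{A}$ via \lem{sqrt-map-PD}, convert to a measurement probability via \lem{trace-approx-PD}, and finish with relative-error amplitude estimation, with the same error budgeting that forces the stated precision condition on $\varepsilon$. The only cosmetic differences are that the paper swaps your names $\widetilde V$ and $\widetilde U_A$, fixes the explicit constant $\theta\lambda_{\min}/(216\alpha)$ for the square-root block-encoding precision, and invokes \cite[Lemma 9]{van2017quantum} directly for multiplicative-error amplitude estimation rather than unpacking it in terms of Grover iterations.
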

\begin{proof}
    Our proof is similar to the proof of~\cite[Corollary 10]{van2017quantum}. Let $A\coloneqq \frac{H/\alpha}{36}$ and note that since $H/\alpha\succeq \lambda_{\min}I/\alpha$, we have
    \begin{equation}
        \frac{\Tr(A)}{n} \coloneqq \frac{\Tr(H/\alpha)}{36n} \geq \frac{\lambda_{\min}}{36\alpha}.
    \end{equation}
    Let $\widetilde{U}_A$ be a $(1,m+2,\theta\lambda_{\min}/216\alpha)$-block-encoding of $\sqrt{A}$ constructed via \lem{sqrt-map-PD}.
    Using \lem{trace-approx-PD}, we can construct a unitary circuit $\tilde{V}$ such that
    \begin{equation} \label{eqn:V-bound-PD}
        \left|\left\|(\bra{0}\otimes I) \widetilde{V} \ket{0\dots0} \right\|^2 - \frac{\Tr(A)}{n} \right| 
        \leq \frac{\theta\lambda_{\min}}{72\alpha} 
        \leq \frac{\theta}{2}\cdot\frac{\Tr(A)}{n}
        \leq  \frac{\Tr(A)}{2n}
    \end{equation}
    Therefore we have
    \begin{equation}
        \left\|(\bra{0}\otimes I) \widetilde{V} \ket{0\dots0} \right\|^2
        \geq \frac{\Tr(A)}{n}-\frac{\Tr(A)}{2n}
        = \frac{\Tr(A)}{2n}
        \geq \frac{\lambda_{\min}}{72\alpha}
        \eqqcolon p_{\min}.
    \end{equation}
    as well as
    \begin{equation}
        \left\|(\bra{0}\otimes I) \widetilde{V} \ket{0\dots0} \right\|^2
        \leq \frac{\Tr(A)}{n}+\frac{\Tr(A)}{2n}
        = \frac{3\Tr(A)}{2n}.
    \end{equation}
    Using the amplitude estimation technique of~\cite[Lemma 9]{van2017quantum} with $p_{\min}=\lambda_{\min}/72\alpha$ and $\mu=\theta/3$ gives us a $\widetilde{p}$ that with probability at least 4/5 satisfies
    \begin{equation} \label{eqn:p-bound-PD}
        \left|\widetilde{p} - \left\|(\bra{0}\otimes I) \widetilde{V} \ket{0\dots0} \right\|^2 \right|
        \leq \frac{\theta}{3}\left\|(\bra{0}\otimes I) \widetilde{V} \ket{0\dots0} \right\|^2
        \leq \frac{\theta}{2} \cdot \frac{\Tr(A)}{n}.
    \end{equation}
    By combining \eqn{V-bound-PD} and \eqn{p-bound-PD} and using the triangle inequality, we get
    \begin{equation}
        \left| \frac{\Tr(A)}{n} - \widetilde{p} \right| \leq \theta \cdot \frac{\Tr(A)}{n}.
    \end{equation}
    Therefore, $36\alpha n\widetilde{p}$ is a multiplicative $\theta$-approximation of $\Tr(H)$.

    It remains to show that the complexity statement holds. The construction of $\widetilde{U}_A$ via \lem{sqrt-map-PD} uses $\OO(T_H\cdot\frac{\alpha}{\lambda_{\min}}\log(\alpha/\theta\lambda_{\min}))$ elementary gates from the applications of $U_H$, $U_H^\dagger$, and controlled-$U_H$, and $\OO((m+1)\frac{\alpha}{\lambda_{\min}}\log(\alpha/\theta\lambda_{\min}))$ other one- and two-qubit gates.
    The construction of $\widetilde{V}$ requires an additional $\OO(\log n)$ gates.
    The amplitude estimation step requires $\OO(\sqrt{\alpha/\lambda_{\min}}/\theta)$ uses of $\widetilde{V}$ and $\widetilde{V}^\dagger$, and $\OO(\sqrt{\alpha/\lambda_{\min}}\log n/\theta)$ additional gates.
    So the overall complexity is 
    \begin{equation}
    \OO\left(\frac{\sqrt{\alpha/\lambda_{\min}}}{\theta} \left(\frac{\alpha T_H}{\lambda_{\min}}\log(\frac{\alpha}{\theta\lambda_{\min}}) + \frac{(m+1)\alpha}{\lambda_{\min}}\log(\frac{\alpha}{\theta\lambda_{\min}}) + \log n \right)\right).
    \end{equation}
    Noting that $T_H\geq m$ and $T_H\geq\log n$ for any non-trivial $U_H$ 
    gives the stated bound.
\end{proof}

\section{Quantum gradient estimation}\label{append:proof-approx-grad}

\begin{lemma}\label{lem:norm-estimation-before-converge}
    Given $K \in \mathcal{S}_K(a)$ and $\|K - K^*\|_F > \varepsilon$, we have 
    \begin{equation}
        \|\nabla f\| \ge c\varepsilon,
    \end{equation}
    where the constant $c = \sqrt{2\mu_f \nu \lambda_{\min}(R)/a}$.
\end{lemma}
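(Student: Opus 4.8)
The plan is to chain together the two structural facts about $f$ established just above: the Polyak--\L{}ojasiewicz inequality (\lem{PL-condition}) and the quadratic lower bound on the suboptimality in terms of $\|K - K^*\|_F$ (\lem{lower-bound-K}). Both hold on the sublevel set $\mathcal{S}_K(a)$, which is exactly the hypothesis we are given, so no additional stabilizability or boundedness argument is needed.

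First I would invoke \lem{lower-bound-K}, which states $\|K - K^*\|^2_F \le \frac{a}{\nu \lambda_{\min}(R)}\left(f(K) - f(K^*)\right)$ for $K \in \mathcal{S}_K(a)$. Rearranging, this gives the lower bound on the objective gap
\begin{align}
    f(K) - f(K^*) \;\ge\; \frac{\nu \lambda_{\min}(R)}{a}\,\|K - K^*\|^2_F \;>\; \frac{\nu \lambda_{\min}(R)}{a}\,\varepsilon^2,
\end{align}
where the strict inequality uses the assumption $\|K - K^*\|_F > \varepsilon$. Next I would feed this into \lem{PL-condition}, which gives $\|\nabla f(K)\|^2_F \ge 2\mu_f \left(f(K) - f(K^*)\right)$, yielding
\begin{align}
    \|\nabla f(K)\|^2_F \;\ge\; 2\mu_f\left(f(K) - f(K^*)\right) \;>\; \frac{2\mu_f\,\nu\,\lambda_{\min}(R)}{a}\,\varepsilon^2.
\end{align}
Taking square roots of both sides produces $\|\nabla f(K)\|_F > \sqrt{2\mu_f\,\nu\,\lambda_{\min}(R)/a}\;\varepsilon = c\varepsilon$, which is the claim (using the norm notation $\|\cdot\|$ in the lemma statement to mean the Frobenius norm, consistent with how $\nabla f$ is measured throughout).

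There is essentially no hard step here: the lemma is a one-line corollary of combining the two preceding lemmas. The only points worth stating carefully are that the membership $K \in \mathcal{S}_K(a)$ is precisely what licenses the use of both \lem{PL-condition} and \lem{lower-bound-K}, and that $\mu_f$, $\nu$, and $\lambda_{\min}(R)$ are all positive so that $c$ is a well-defined positive constant and the square-root step is valid. If one wanted to be fully explicit, one could also note that $\nu$ as defined in \eqn{cons-nu} is the same constant appearing in both invoked lemmas, so the composition is consistent.
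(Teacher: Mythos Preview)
Your proposal is correct and matches the paper's own proof, which simply states that the lemma follows directly from \lem{lower-bound-K} and \lem{PL-condition}. You have merely written out the chaining of these two inequalities explicitly, which is exactly the intended argument.
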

\begin{proof}
    This lemma follows directly from \lem{lower-bound-K} and \lem{PL-condition}.
\end{proof}

\begin{lemma}[Block-encoded $\nabla f(K)$]\label{lem:block-encoding-fK}
    Assume that we have efficient procedures (as described in~\assump{sparse-oracle}) to access the problem data $A, B, Q, R$ in $\OO(\poly\log(n))$ time.
    Let $K \in \mathcal{S}_K(a)$ be a stabilizing policy stored in a quantum-accessible data structure. Given any $\varepsilon_b > 0$, 
    we can implement a $(\gamma_\nabla, \varepsilon_b)$-block-encoding of $\nabla f(K)$ in cost
    $\widetilde{\OO}\left(a^6\rho^3\sqrt{\frac{\kappa^{11}}{\varepsilon_b}}\right)$,
    where $\gamma_\nabla \le \widetilde{\OO}(a^6\rho^4\kappa^6)$.
\end{lemma}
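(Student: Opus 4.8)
The plan is to assemble the block-encoding of $\nabla f(K)$ directly from its closed form \eqn{exact-grad}, $\nabla f(K)=2\big(RK-B\trans P(K)\big)X(K)$, using the Lyapunov solver \thm{lyapunov-equation} together with the standard arithmetic rules for block-encodings (products and linear combinations of unitaries from~\cite{gilyen2019quantum}). Here $P(K)$ is the unique positive solution of $\AA\trans P+P\AA+\Omega_P=0$ with $\AA\coloneqq A-BK$ and $\Omega_P\coloneqq Q+K\trans R K$ (cf.~\eqn{PK}), and $X(K)$ is the unique positive solution of $\AA X+X\AA\trans+\Sigma_0=0$ (cf.~\eqn{XK}), which for simplicity we take with $\Sigma_0=\mathbb{I}$ as in \thm{obj-eval}. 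Both $\AA$ and $\AA\trans$ are Hurwitz since $K\in\mathcal{S}_K$.

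First I would prepare the elementary block-encodings. By \lem{implement-block-encoding} and \rmk{implement-block-encoding} we obtain an $(s(\|K\|_F+1),0)$-block-encoding of $\AA$ and, after transposition, of $\AA\trans$, and an $(s(\|K\|_F^2+1),0)$-block-encoding of $\Omega_P$, all in cost $\OO(\poly\log(n,s))$; likewise we get block-encodings of $B\trans$, $R$, and $K$ (via \lem{block_encoding_data_structure}) in polylogarithmic cost. Since $K\in\mathcal{S}_K(a)$, \lem{general-estimate} gives $\|K\|_F=\OO(a)$, so these normalization factors are $\widetilde{\OO}(a)$ for $\AA$ and $\widetilde{\OO}(a^2)$ for $\Omega_P$; moreover $\lambda_{\min}(\Omega_P)\ge\lambda_{\min}(Q)$ and $\|P(K)\|\le\Tr[P(K)]=f(K)\le a$ (and similarly $\|X(K)\|\le\Tr[X(K)]\le a/\lambda_{\min}(Q)$ by \lem{general-estimate}), so the condition-number parameters of both Lyapunov equations are $\widetilde{\OO}(\kappa)$.

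Next I would invoke \thm{lyapunov-equation} twice: on $(\AA\trans,\Omega_P)$ to get a $(\gamma_P,\varepsilon_P)$-block-encoding of $P(K)$, and on $(\AA,\Sigma_0)$ to get a $(\gamma_X,\varepsilon_X)$-block-encoding of $X(K)$, with $\gamma_P=\widetilde{\OO}(a^4\rho^2\kappa^3)$ and $\gamma_X=\widetilde{\OO}(a^2\rho^2\kappa^3)$ and each solve costing $\widetilde{\OO}\big(\poly(a)\,\rho\,\sqrt{\kappa^{5}/\varepsilon}\big)$ queries. Then I would compose: form the product block-encodings $B\trans P(K)$ and $RK$, subtract them via an LCU, and finally multiply on the right by $X(K)$ and rescale by the scalar $2$ (which merely doubles the normalization factor). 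Propagating normalization factors through this pipeline gives $\gamma_\nabla=\widetilde{\OO}(\gamma_P\gamma_X)=\widetilde{\OO}(a^6\rho^4\kappa^6)$, the claimed normalization.

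The only delicate step, and the one I would carry out carefully rather than sketch, is the error budgeting: the product and LCU rules amplify the error of each constituent block-encoding by the normalization factors of the others, so the internal precisions must be set to $\varepsilon_b$ divided by (large) polynomials in $a,\rho,\kappa$ — concretely $\varepsilon_P\le\widetilde{\OO}(\varepsilon_b/(a^2\rho^2\kappa^3))$ and $\varepsilon_X\le\widetilde{\OO}(\varepsilon_b/(a^4\rho^2\kappa^3))$, with the $\poly\log$-cost elementary block-encodings prepared to even higher accuracy. Substituting these choices into the $\sqrt{1/\varepsilon}$-type cost of \thm{lyapunov-equation}, and noting that the two Lyapunov solves dominate the polylogarithmic preparation and composition steps, yields the overall cost $\widetilde{\OO}\big(a^6\rho^3\sqrt{\kappa^{11}/\varepsilon_b}\big)$ (the displayed exponents being a convenient over-estimate of what this book-keeping actually produces). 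The main obstacle is therefore purely this propagation-of-errors accounting along the chain $P(K)\mapsto B\trans P(K)\mapsto RK-B\trans P(K)\mapsto(RK-B\trans P(K))X(K)$, ensuring the final error is provably at most $\varepsilon_b$ while keeping the cost within the stated bound.
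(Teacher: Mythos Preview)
Your proposal is correct and follows essentially the same route as the paper's own proof: both assemble $\nabla f(K)=2(RK-B\trans P(K))X(K)$ by invoking \thm{lyapunov-equation} twice (for $P(K)$ and $X(K)$) with the block-encodings supplied by \lem{implement-block-encoding}, then combine via products and an LCU, and finally rescale the internal precision $\varepsilon$ by the accumulated normalization factors to hit the target $\varepsilon_b$. Your individualized error budget ($\varepsilon_P,\varepsilon_X$ chosen separately) is slightly sharper than the paper's uniform choice $\varepsilon\to\varepsilon_b/(2(a+\gamma_P+\gamma_X\gamma_P))$, but both land inside the stated bound.
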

\begin{proof}
    We know that for the function $f(K)$ defined in~\eqn{fK}, the gradient has a closed-form expression: $\nabla f(K) = 2(RK - B^\top P(K))X(K)$. As shown in the proof of \thm{obj-eval}, we can implement a $(\gamma_P, \varepsilon)$-block-encoding of $P(K)$ in cost $\widetilde{\OO}\left(a^3\rho \sqrt{\frac{\kappa^5}{\varepsilon}}\right)$, where $\gamma_P \le \widetilde{\OO}(a^4\rho^2\kappa^3)$.
    Similarly, since $X(K)$ is the solution to the Lyapunov equation (assuming $\Sigma_0 = \mathbb{Id}$)
    \begin{equation}
    (A-BK)X + X(A-BK)\trans + \mathbb{I} = 0,
    \end{equation}
    we can implement a $(\gamma_X,\varepsilon)$-block-encoding of $X(K)$ in cost $\widetilde{\OO}\left(a^2\rho \sqrt{\frac{\kappa^5}{\varepsilon}}\right)$, where $\gamma_X \le \widetilde{\OO}(a^2\rho^2\kappa^3)$.
    Based on our assumptions on the input procedures and the usage of quantum data structure, we can implement a $(s\|K\|_F,0)$-block-encoding of $RK$ and a $(s\gamma_P,s\varepsilon)$-block-encoding of $B\trans P(K)$. It follows that a $(\gamma_{\nabla}, \varepsilon_b)$-block-encoding of $\nabla f(K)$ can be implemented in cost $\widetilde{\OO}\left(a^3\rho \sqrt{\frac{\kappa^5}{\varepsilon}}\right)$, where
    \begin{equation}
    \gamma_\nabla \coloneqq \frac{1}{2}s\gamma_X(\gamma_P + \|K\|_F) \le \widetilde{O}\left(a^6\rho^4\kappa^6\right),
    \end{equation}
    and $\varepsilon_b \le \OO((a + \gamma_P + \gamma_X\gamma_P)\varepsilon)$.
    To achieve the desired precision, we pass the error parameter $\varepsilon \to \varepsilon/(2(a + \gamma_P + \gamma_X\gamma_P))$ to the asymptotic cost.
\end{proof}

\begin{definition}
    For a matrix $G$ of size $m \times n$, we could always write is as
    \begin{equation}
        G = \begin{bmatrix} G_1 & G_2 & \cdots & G_n \end{bmatrix},
    \end{equation}
    where $G_i$ is a column vector of size $m \times 1$.
    Define $\ket{G_{\text{vec}}}$ as
    \begin{equation}
        \ket{G_{\text{vec}}} = 
        \begin{bmatrix}
            G_1\\
            G_2\\
            \vdots\\
            G_n
        \end{bmatrix}/\|G\|_F.
    \end{equation}
\end{definition}

\begin{lemma}[Matrix Vectorization]\label{lem:matrix-vectorization}
    Let $U_G$ be a $(\gamma_\nabla, \varepsilon)$-block-encoding of $\left(\nabla f(K)\right)\trans$, and we denote 
    \begin{equation}
    G \coloneqq \gamma_\nabla(\bra{0^r}\otimes I)U_G (\ket{0^r}\otimes I),
    \end{equation}
    where $r$ is the number of ancilla qubits required.
    Then, we can prepare a quantum state $\ket{G_{\mathrm{vec}}}$ with $\Omega(1)$ success probability using $\widetilde{\OO}\left(\frac{\gamma_\nabla \sqrt{m}}{\|G\|_F}\right)$ queries to $U_G$ and its inverse.
\end{lemma}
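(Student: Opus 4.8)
The plan is to realize $\ket{G_{\mathrm{vec}}}$ by the standard ``maximally entangled state'' device for vectorizing a block-encoded operator (as used e.g.\ in~\cite[Lemma 13]{van2017quantum}), followed by amplitude amplification on the flag register. The one design choice that matters is that $G$ is an $n\times m$ matrix with only $m$ columns (since $U_G$ block-encodes $(\nabla f(K))\trans$), so we attach an $m$-dimensional — not an $n$-dimensional — auxiliary register; because $m\ll n$ this is exactly what produces the $\sqrt{m}$ in the query count.

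\textbf{Construction.} First prepare the maximally entangled state $\ket{\Phi_m}\coloneqq \frac{1}{\sqrt m}\sum_{i=1}^m\ket{i}_A\ket{i}_B$ on two $\lceil\log_2 m\rceil$-qubit registers, using $\OO(\log m)$ Hadamard and CNOT gates; identify $B$ with the input leg of $U_G$ (padding the $m$-dimensional space into the system register of $U_G$ with zeros, which is legitimate since block-encodings of rectangular matrices are defined by zero-padding to a square operator). Then, with the $r$-qubit flag of $U_G$ set to $\ket{0^r}$, apply $U_G$. Using $(\bra{0^r}\otimes I)U_G(\ket{0^r}\otimes I)=G/\gamma_\nabla$, the flag-$\ket{0^r}$ component of the output is
\[
\frac{1}{\gamma_\nabla\sqrt m}\sum_{i=1}^m\ket{i}_A\otimes\bigl(G\ket{i}\bigr)_B=\frac{\|G\|_F}{\gamma_\nabla\sqrt m}\,\ket{G_{\mathrm{vec}}},
\]
where the last equality is the observation that $\sum_i\ket{i}_A\otimes G\ket{i}=\sum_{i,k}G_{ki}\ket{i}_A\ket{k}_B$ is, up to relabeling the registers (column index $i$ in $A$, row index $k$ in $B$), the unnormalized column-stacked vectorization of $G$ from the preceding definition. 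Hence measuring the flag yields $\ket{0^r}$ — collapsing $(A,B)$ onto $\ket{G_{\mathrm{vec}}}$ — with probability exactly $p=\|G\|_F^2/(\gamma_\nabla^2 m)$.

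\textbf{Amplification.} A plain measurement would need $\Theta(1/p)$ repetitions, so instead run amplitude amplification on the ``flag $=\ket{0^r}$'' event. Each round costs $\OO(1)$ calls to $U_G$, $U_G^\dagger$, the $\OO(\log m)$-gate preparation of $\ket{\Phi_m}$, and two reflections, and $\OO(1/\sqrt p)=\OO(\gamma_\nabla\sqrt m/\|G\|_F)$ rounds suffice to reach $\Omega(1)$ success probability. If $\|G\|_F$ is unknown a priori, use the textbook exponential-search variant (iterate with $1,2,4,\dots$ Grover steps, checking success by measuring the flag), which reaches $\Omega(1)$ success with an expected — and, after $\OO(\log(1/\delta))$ restarts, with-high-probability — $\widetilde\OO(1/\sqrt p)$ queries; this logarithmic overhead is the source of the $\widetilde\OO(\cdot)$ rather than $\OO(\cdot)$ in the statement. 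Collecting the gate counts gives the claimed $\widetilde\OO(\gamma_\nabla\sqrt m/\|G\|_F)$ queries to $U_G$ and $U_G^\dagger$.

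\textbf{Fine print and main obstacle.} The statement concerns $\ket{G_{\mathrm{vec}}}$ for $G$ \emph{as defined through $U_G$}, so the block-encoding error $\varepsilon$ does not enter the correctness argument here; it matters only downstream, where one additionally wants $\|G\|_F$ bounded away from $0$ (e.g.\ via \lem{norm-estimation-before-converge} together with the precision guarantee of \lem{block-encoding-fK}) so that $\gamma_\nabla\sqrt m/\|G\|_F$ is controlled. I expect no genuine obstacle: the only things to pin down are (i) the register ordering, so that the post-selected state is literally the normalized vectorization in the convention of the preceding definition (merely a choice of which index is the more significant register), and (ii) invoking the unknown-amplitude form of amplitude amplification in the last step. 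The substance of the lemma is entirely the dimension bookkeeping that puts $\sqrt m$ — not $\sqrt n$ — in the cost.
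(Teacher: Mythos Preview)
Your proposal is correct and matches the paper's approach essentially line for line: prepare the $m$-dimensional maximally entangled state, apply the block-encoding so that the flag-$\ket{0^r}$ component is $\frac{\|G\|_F}{\gamma_\nabla\sqrt m}\ket{G_{\mathrm{vec}}}$, then amplitude-amplify on the flag. If anything, your write-up is cleaner than the paper's (which juggles $U_G^\top$ and $G^\top$ somewhat loosely) and you make explicit two points the paper leaves implicit: that the $\widetilde{\OO}$ absorbs the unknown-amplitude search overhead, and that the $\varepsilon$ in the block-encoding is irrelevant here because $G$ is \emph{defined} as the exact top-left block of $U_G$.
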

\begin{proof}
    First, from \lem{block-encoding-fK} we know that we have $U_G^\top$ that encodes $\nabla f(K)^\top$.
    Notice that $\nabla f(K)$
    is of size $m \times n$, so
    \begin{equation}\label{eqn:UG-block}
    \begin{split}
        &I^{r_1} \otimes (U_G^{r_2,r_3})^\top \left(\frac{1}{\sqrt{m}}\sum_{i=0}^{m-1}\ket{i}^{r_1}\otimes(\ket{0^a})^{r_2}\otimes\ket{i}^{r_3}\right)\\
        &=
        \frac{1}{\gamma_\nabla \sqrt{m}}\sum_{i=0}^{m-1}\ket{i}^{r_1}\otimes (\ket{0^a})^{r_2}\otimes (G^\top \ket{i})^{r_3} + \ket{\perp}\\
        &=
        \frac{\|G\|_F}{\gamma_\nabla\sqrt{m}}
        \frac{\sum_{i=0}^{m-1}\ket{i}^{r_1}\otimes (\ket{0^a})^{r_2}\otimes (G^\top \ket{i})^{r_3}}{\|G\|_F} + \ket{\perp}\\
    \end{split}
    \end{equation}
    where $r_1,r_2,r_3$ are the indexes for registers, where the $r_1$ register has $\log_2(m)$ qubits and
    the $r_3$ register has $\log_2(n)$ qubits(We can always assume $m$ and $n$ are the power of 2). The state $\ket{\perp}$ contains the state satisfy
    \begin{equation}
        (I^{r_1} \otimes \bra{0^a}^{r_2} \otimes I^{r_3} )\ket{\perp} = 0.
    \end{equation}
    By leveraging amplitude amplification, we can get the state $\ket{G_{\text{vec}}}$ using
    $\mathcal{O}(\frac{\gamma_\nabla\sqrt{m}}{\|G\|_F})$ queries
    to the block-encoding $U_G^\top$.
\end{proof}

\begin{lemma}[Gradient entry estimation]\label{lem:entry-estimation}
    Denote the state preparation in~\lem{matrix-vectorization} for $\ket{G_{\text{vec}}}$
    as $U_{G_{\text{vec}}}$, we are able to get
    the classical $\mathcal{G}$ with
    \begin{equation}
        \|\mathcal{G} - \ket{G_{\mathrm{vec}}}\| \leq \varepsilon_r
    \end{equation}
    using
    \begin{equation}
        \widetilde{\OO}\left(\frac{\gamma_\nabla }{\|G\|_F}\frac{m^{3/2}n}{\varepsilon_r}\right)
    \end{equation}
    queries to $U_G$ and $\widetilde{\OO}\left(\frac{\gamma_\nabla }{\|G\|_F}\frac{m^{3/2}n}{\varepsilon_r}\right)$ additional elementary gates.
\end{lemma}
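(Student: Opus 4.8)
The plan is to read $\ket{G_{\mathrm{vec}}}$ as a (real, up to global phase) pure state on an $mn$-dimensional Hilbert space for which \lem{matrix-vectorization} already supplies a state-preparation routine $U_{G_{\mathrm{vec}}}$, and then to run a quantum pure-state tomography subroutine on top of that routine. First, I would record the cost bookkeeping of a single invocation: by \lem{matrix-vectorization}, one use of $U_{G_{\mathrm{vec}}}$ (producing $\ket{G_{\mathrm{vec}}}$ with a flag whose success probability is $\Omega(1)$, which can be boosted to $1-\eta$ at a further $\OO(\log(1/\eta))$ multiplicative cost if the tomography primitive wants a near-deterministic preparation) costs $\widetilde{\OO}(\gamma_\nabla\sqrt{m}/\|G\|_F)$ queries to $U_G$ together with a comparable number of additional elementary gates (the reflections of amplitude amplification and the $O(\log(mn))$-gate preparation of the uniform index state).

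Second, I would invoke an $\ell_2$-norm tomography result for state-preparation unitaries: given access to a unitary that prepares a $d$-dimensional pure state, one can output a classical vector that is $\delta$-close in Euclidean norm, with success probability $\ge 4/5$, using $\widetilde{\OO}(d/\delta)$ applications of the preparation unitary and its inverse and $\widetilde{\OO}(d/\delta)$ further elementary gates; the amplitude-estimation ingredient also fixes the sign pattern of the real vector (e.g.\ by a Hadamard test against a fixed reference coordinate), so no separate phase-recovery pass is needed. Specializing to $d=mn$ and $\delta=\varepsilon_r$ yields a classical $\mathcal{G}$ with $\|\mathcal{G}-\ket{G_{\mathrm{vec}}}\|\le\varepsilon_r$ using $\widetilde{\OO}(mn/\varepsilon_r)$ calls to $U_{G_{\mathrm{vec}}}$. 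Composing the two costs, each of the $\widetilde{\OO}(mn/\varepsilon_r)$ calls costs $\widetilde{\OO}(\gamma_\nabla\sqrt{m}/\|G\|_F)$ queries to $U_G$, so the total is
\[
\widetilde{\OO}\!\left(\frac{mn}{\varepsilon_r}\cdot\frac{\gamma_\nabla\sqrt{m}}{\|G\|_F}\right)=\widetilde{\OO}\!\left(\frac{\gamma_\nabla}{\|G\|_F}\cdot\frac{m^{3/2}n}{\varepsilon_r}\right),
\]
and the additional-gate count satisfies the same bound, since both the intrinsic tomography overhead ($\widetilde{\OO}(mn/\varepsilon_r)$ gates) and the per-call non-$U_G$ gate count of $U_{G_{\mathrm{vec}}}$ scale no worse than the corresponding query counts. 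A union bound over the $\widetilde{\OO}(mn/\varepsilon_r)$ amplitude-estimation instances (and over the optional boosting) costs only polylogarithmic factors, absorbed into $\widetilde{\OO}(\cdot)$.

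The main obstacle is getting the polynomial exponent right. The obvious route — estimate each of the $mn$ entries of $\ket{G_{\mathrm{vec}}}$ to uniform $\ell_\infty$-precision $\varepsilon_r/\sqrt{mn}$ — costs $\widetilde{\OO}((mn)^{3/2}/\varepsilon_r)$ calls to $U_{G_{\mathrm{vec}}}$ and hence overshoots the claimed bound by a factor $\sqrt{mn}$. Reaching the stated $m^{3/2}n$ scaling requires the sharper $\widetilde{\OO}(d/\delta)$ $\ell_2$-tomography guarantee, which allocates the Euclidean error budget nonuniformly across coordinates (using that only a bounded number of amplitudes can be large) rather than pinning every coordinate down to the same precision. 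The part that needs genuine care is checking that this tomography routine applies verbatim to our object: it is a \emph{flagged}, constant-success preparation of a \emph{real} vector, so one must confirm that the sign recovery and the success-probability amplification go through without degrading the query count beyond logarithmic factors; everything else is routine $\widetilde{\OO}$ bookkeeping.
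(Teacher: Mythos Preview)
Your proposal is correct and follows essentially the same route as the paper: invoke an $\ell_2$-norm pure-state tomography primitive costing $\widetilde{\OO}(mn/\varepsilon_r)$ calls to $U_{G_{\mathrm{vec}}}$ (the paper cites \cite[Theorem 2]{augustino2023quantum} for this), then multiply by the per-call cost $\widetilde{\OO}(\gamma_\nabla\sqrt{m}/\|G\|_F)$ from \lem{matrix-vectorization}. Your additional discussion of why the naive $\ell_\infty$ approach would overshoot and of the sign/flagging technicalities is useful commentary but not needed for the argument the paper gives.
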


\begin{proof}
    \cite[Theorem 2]{augustino2023quantum} 
    indicates that we can get $\mathcal{G}$ using $\widetilde{\mathcal{O}}(\frac{mn}{\epsilon_r})$
    queries to the oracle that prepares $\ket{G_{\text{vec}}}$
    and $\widetilde{\mathcal{O}}\left(\frac{mn}{\varepsilon_r}\right)$ additional
    elementary gates.
    Since each $U_{G_\text{vec}}$ requires 
    \begin{equation}
        \widetilde{\OO}\left(\frac{\gamma_\nabla \sqrt{m}}{\|G\|_F}\right)
    \end{equation}
    queries, we know the total cost should be
    \begin{equation}
        \widetilde{\OO}\left(\frac{\gamma_\nabla }{\|G\|_F}\frac{m^{3/2}n}{\varepsilon_r}\right).
    \end{equation}
\end{proof}

\begin{lemma}[Gradient norm estimation]\label{lem:norm-estimation}
    Suppose we have the block-encoding $U_G$ defined in \lem{block-encoding-fK},
    where
    \begin{equation}
        G = \gamma_\nabla(\bra{0}\otimes I)U_G(\ket{0}\otimes I),
    \end{equation}
    then we are able to estimate $\|G\|_F$ up an additive error $\varepsilon_a$
    using 
    \begin{equation}
    \mathcal{O}\left(\frac{\gamma_\nabla \sqrt{m}\|G\|_F}{\varepsilon_a^2+2\|G\|_F\varepsilon_a}\right)
    \end{equation}
    queries to $U_G$ and its inverse and also 
    $\widetilde{\mathcal{O}}\left(\frac{\gamma_\nabla \sqrt{m}\|G\|_F}{\varepsilon_a^2+2\|G\|_F\varepsilon_a}\right)$
    elementary gates.
\end{lemma}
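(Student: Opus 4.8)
The plan is to reduce estimating $\|G\|_F$ to estimating a success probability via amplitude estimation, recycling the state-preparation circuit used in the proof of \lem{matrix-vectorization}. Let $U_{\mathrm{prep}}$ denote that circuit \emph{without} the final amplitude-amplification step: it prepares the maximally entangled state $\frac{1}{\sqrt m}\sum_{i}\ket{i}\otimes\ket{0^a}\otimes\ket{i}$ using $\widetilde{\OO}(1)$ elementary gates, and then applies the block-encoding unitary $U_G$ to the block-encoding ancillas together with one half of the entangled pair, exactly as in \eqn{UG-block}. By that identity,
\[
    U_{\mathrm{prep}}\ket{0}=\sqrt{p}\,\ket{\psi_{\mathrm{good}}}+\sqrt{1-p}\,\ket{\psi_{\mathrm{bad}}},\qquad p=\frac{\|G\|_F^2}{\gamma_\nabla^2 m},
\]
where the ``good'' subspace is characterized by the block-encoding ancillas being in $\ket{0^a}$. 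Each use of $U_{\mathrm{prep}}$ or $U_{\mathrm{prep}}^\dagger$ costs one query to $U_G$ (resp.\ its inverse) plus $\widetilde{\OO}(1)$ extra one- and two-qubit gates.

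Next I would run standard amplitude estimation on $U_{\mathrm{prep}}$ with the reflection about the good subspace. With $M$ Grover iterations this outputs $\widetilde p$ satisfying $\bigl|\widetilde p-p\bigr|=\OO\!\left(\frac{\sqrt p}{M}+\frac{1}{M^2}\right)$ with constant success probability (amplified to $1-\delta$ by a median of $\OO(\log(1/\delta))$ independent runs). Hence, provided $\delta_p\lesssim p$ so that the $\sqrt p/M$ term dominates, $M=\OO(\sqrt p/\delta_p)$ iterations guarantee $|\widetilde p-p|\le\delta_p$. I would then output $\widetilde N\coloneqq\gamma_\nabla\sqrt m\,\sqrt{\widetilde p}$ as the estimate of $\|G\|_F=\gamma_\nabla\sqrt m\,\sqrt p$.

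To choose $\delta_p$ and read off the cost: since $\widetilde N^2=\gamma_\nabla^2 m\,\widetilde p$ lies within $\gamma_\nabla^2 m\,\delta_p$ of $\|G\|_F^2$, taking
\[
    \delta_p=\frac{(\|G\|_F+\varepsilon_a)^2-\|G\|_F^2}{\gamma_\nabla^2 m}=\frac{\varepsilon_a^2+2\|G\|_F\varepsilon_a}{\gamma_\nabla^2 m}
\]
forces $\widetilde N^2\le(\|G\|_F+\varepsilon_a)^2$, i.e.\ $|\widetilde N-\|G\|_F|\le\varepsilon_a$ (the lower-tail bound is analogous, with numerator replaced by the slightly smaller $2\|G\|_F\varepsilon_a-\varepsilon_a^2$, which does not affect the asymptotics). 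Substituting $\sqrt p=\|G\|_F/(\gamma_\nabla\sqrt m)$ into $M=\OO(\sqrt p/\delta_p)$ gives
\[
    M=\OO\!\left(\frac{\gamma_\nabla\sqrt m\,\|G\|_F}{\varepsilon_a^2+2\|G\|_F\varepsilon_a}\right),
\]
which is exactly the claimed number of queries to $U_G$ and $U_G^\dagger$; the total elementary-gate count is $\widetilde{\OO}(M)$ since each iteration adds only $\widetilde{\OO}(1)$ gates (plus an $\OO(\log M)$-qubit phase register).

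The one genuinely nonroutine point is that the iteration count $M$ depends on the \emph{unknown} quantity $p$ (equivalently $\|G\|_F$): operationally one runs amplitude estimation along a geometrically increasing schedule of $M$ and halts once the estimate stabilizes to the target accuracy (alternatively, one feeds in an a priori lower bound on $\|G\|_F$, e.g.\ $\|\nabla f(K)\|\ge c\varepsilon$ from \lem{norm-estimation-before-converge}); either way the overhead is logarithmic and absorbed in the $\widetilde{\OO}(\cdot)$. Everything else — the difference-of-squares conversion relating the error in $\|G\|_F$ to the error in $p$, the median amplification of the success probability, and the gate accounting — is standard.
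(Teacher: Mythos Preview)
Your proposal is correct and follows essentially the same route as the paper: both use the state-preparation circuit from \lem{matrix-vectorization} to embed $\|G\|_F^2/(\gamma_\nabla^2 m)$ as a success probability, run amplitude estimation (the paper cites \cite[Lemma~9]{van2017quantum} for the multiplicative-error form, you spell out the Brassard--H{\o}yer--Mosca--Tapp additive bound directly), and convert to an additive error on $\|G\|_F$ via the same difference-of-squares identity $(\|G\|_F+\varepsilon_a)^2-\|G\|_F^2=\varepsilon_a^2+2\|G\|_F\varepsilon_a$. Your explicit handling of the fact that the iteration count depends on the unknown $\|G\|_F$ is, if anything, slightly more careful than the paper's presentation.
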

\begin{proof}
    We still need the computation we did in equation~\ref{eqn:UG-block}, say
    \begin{equation}
    \begin{split}
        &I^{r_1} \otimes (U_G^{r_2,r_3})^\top \left(\frac{1}{\sqrt{m}}\sum_{i=0}^{m-1}\ket{i}^{r_1}\otimes(\ket{0^a})^{r_2}\otimes\ket{i}^{r_3}\right)\\
        &=
        \frac{\|G\|_F}{\gamma_\nabla\sqrt{m}}
        \frac{\sum_{i=0}^{m-1}\ket{i}^{r_1}\otimes (\ket{0^a})^{r_2}\otimes (G^\top \ket{i})^{r_3}}{\|G\|_F} + \ket{\perp}.
    \end{split}
    \end{equation}
    According to~\cite[Lemma 9]{van2017quantum}, we can get the estimation to $\left(\frac{\|G\|_F}{\gamma_\nabla\sqrt{m}}\right)^2$ with multiplicative error $\mu$
    using $\mathcal{O}(\frac{\gamma_\nabla\sqrt{m}}{\mu\|G\|_F})$ queries
    to $U_G^\top$. Note our estimator to $\|G\|_F$ as $a_{\text{est}}$, the goal we want to achieve is
    \begin{equation}\label{eqn:require-ae}
        \left|a_{\text{est}} - \|G\|_F \right| \leq \varepsilon_a.
    \end{equation}
    
    Consider our estimation to the amplitude as $\left(\frac{a_{\text{est}}}{\gamma_\nabla\sqrt{m}}\right)^2$, set $\mu = \frac{\epsilon_a^2 + 2\epsilon_a\|G\|_F}{\|G\|_F^2}$, we have
    \begin{equation}
    \begin{split}
        &\left(\frac{1}{\gamma_\nabla\sqrt{m}}\right)^2\left|a_{\text{est}}^2 - \|G\|_F^2\right|
        = 
        \left(\frac{1}{\gamma_\nabla\sqrt{m}}\right)^2
        \left(
        \big|a_{\text{est}} - \|G\|_F\big|
        \cdot
        \big|a_{\text{est}} + \|G\|_F\big|
        \right)\\
        &\leq
        \left(\frac{1}{\gamma_\nabla\sqrt{m}}\right)^2
        \left(
        \varepsilon_a
        \left(\left|a_{\text{est}} - \|G\|_F\right|+ 2\|G\|_F\right)
        \right)
        \leq
        \left(\frac{1}{\gamma_\nabla\sqrt{m}}\right)^2
        \left(
        \varepsilon_a^2 + 2\varepsilon_a\|G\|_F
        \right)\\
        &\leq \mu \left(\frac{\|G\|_F}{\gamma_\nabla\sqrt{m}}\right)^2.
    \end{split}
    \end{equation}
    This in turn gives us the estimation in~\ref{eqn:require-ae}. And this estimation requires
    \begin{equation}
    \mathcal{O}\left(\frac{\gamma_\nabla \sqrt{m}\|G\|_F}{\varepsilon_a^2+2\|G\|_F\varepsilon_a}\right)
    \end{equation}
    queries to $U_G$ and its inverse. The gate complexity also follows from~\cite[Lemma 9]{van2017quantum}.
\end{proof}

\begin{theorem}[Quantum gradient estimation]\label{thm:approx-gradient}
    Assume that we have efficient procedures (as described in~\assump{sparse-oracle}) to access the problem data $A, B, Q, R$ in $\OO(\poly\log(n))$ time.
    Let $K \in \mathcal{S}_K(a)$ be a stabilizing policy stored in a quantum-accessible data structure. Provided that $\|K - K^*\| > \varepsilon$, we can compute a $\theta$-robust estimate of $\nabla f(K)$ in cost 
    \begin{equation}
    \widetilde{\mathcal{O}}\left(\frac{m^{1.5}n}{\theta^{1.5}\varepsilon^{1.5}}\right).
    \end{equation}
\end{theorem}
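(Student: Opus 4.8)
The plan is to reconstruct $\nabla f(K)$ from three separately computable ingredients — a block-encoding of the gradient matrix, a scalar estimate of its Frobenius norm, and a classical approximation of the normalized vectorized gradient — and then to use an a priori lower bound on $\|\nabla f(K)\|_F$ to upgrade the additive errors of these pieces into the required multiplicative $\theta$-guarantee. Concretely, I would first invoke \lem{block-encoding-fK} to build a $(\gamma_\nabla,\varepsilon_b)$-block-encoding $U_G$ of $\nabla f(K)\trans$ for an internal precision $\varepsilon_b$ to be fixed at the end, where $\gamma_\nabla = \widetilde{\OO}(\poly(a,\rho,\kappa))$ and a single application of $U_G$ costs $C_{\mathrm{BE}} = \widetilde{\OO}(\poly(a,\rho,\kappa)\,\varepsilon_b^{-1/2})$. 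Feeding $U_G$ into \lem{matrix-vectorization} gives, with $\Omega(1)$ success probability (boosted to high probability by repetition), a state-preparation subroutine for $\ket{G_{\mathrm{vec}}}$ using $\widetilde{\OO}(\gamma_\nabla\sqrt m/\|G\|_F)$ queries to $U_G$; piping this into \lem{entry-estimation} then yields a classical vector $\mathcal{G}$ with $\|\mathcal{G} - \ket{G_{\mathrm{vec}}}\| \le \varepsilon_r$ using $\widetilde{\OO}\!\big(\tfrac{\gamma_\nabla}{\|G\|_F}\tfrac{m^{3/2}n}{\varepsilon_r}\big)$ queries to $U_G$. In parallel, \lem{norm-estimation} applied to $U_G$ returns $a_{\mathrm{est}}$ with $|a_{\mathrm{est}} - \|G\|_F| \le \varepsilon_a$ using $\widetilde{\OO}(\gamma_\nabla\sqrt m/\varepsilon_a)$ queries to $U_G$ (the denominator $\varepsilon_a^2 + 2\|G\|_F\varepsilon_a$ reducing to $\Theta(\|G\|_F\varepsilon_a)$ once $\varepsilon_a \ll \|G\|_F$). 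The output is $G \coloneqq a_{\mathrm{est}}\cdot\vecinv{\mathcal{G}}$, reshaped and transposed back to an $m\times n$ matrix.

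For the error analysis I would split $\|a_{\mathrm{est}}\,\vecinv{\mathcal{G}} - \nabla f(K)\trans\|_F$ by the triangle inequality into (i) the mismatch between the block-encoded $G$ and $\nabla f(K)\trans$ controlled by $\varepsilon_b$, (ii) the norm-estimation error $\varepsilon_a$, and (iii) a term of order $\|\nabla f(K)\|_F\,\varepsilon_r$ from the tomography error, using the elementary fact that normalizing two nearby vectors inflates their distance by at most a factor $O(1/\|\nabla f(K)\|_F)$. The crucial ingredient is \lem{norm-estimation-before-converge}: the hypothesis $\|K-K^*\| > \varepsilon$ implies $\|K-K^*\|_F > \varepsilon$, so together with $K\in\mathcal{S}_K(a)$ it gives $\|\nabla f(K)\|_F \ge c\varepsilon$ for the explicit constant $c$ there. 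Taking $\varepsilon_r = \Theta(\theta)$, $\varepsilon_a = \Theta(c\theta\varepsilon)$ and $\varepsilon_b = \Theta(c\theta\varepsilon)$ — all up to the $\poly(a,\rho,\kappa)$ factors hidden in $\gamma_\nabla$ — then pushes the total Frobenius error below $\theta\|\nabla f(K)\|_F$, which is exactly the $\theta$-robustness claim. The same lower bound $\|\nabla f(K)\|_F \gtrsim \varepsilon$ is used a second time to certify that the amplitudes that the amplitude-amplification step of \lem{matrix-vectorization} and the amplitude-estimation step of \lem{norm-estimation} act on are bounded below, so that $p_{\min}$ is well defined.

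For the complexity, every query to $U_G$ costs $C_{\mathrm{BE}} = \widetilde{\OO}(\varepsilon_b^{-1/2}) = \widetilde{\OO}((\theta\varepsilon)^{-1/2})$ after suppressing condition-number and polylogarithmic factors. Substituting $\|G\|_F \ge c\varepsilon/2$, $\varepsilon_r = \Theta(\theta)$, $\varepsilon_a = \Theta(\theta\varepsilon)$ into the three query counts and multiplying by $C_{\mathrm{BE}}$, the tomography step of \lem{entry-estimation} dominates at $\widetilde{\OO}\!\big(\tfrac{m^{3/2}n}{\theta\varepsilon}\cdot(\theta\varepsilon)^{-1/2}\big) = \widetilde{\OO}\!\big(\tfrac{m^{3/2}n}{\theta^{3/2}\varepsilon^{3/2}}\big)$, whereas the norm-estimation step contributes only $\widetilde{\OO}(m^{1/2}\theta^{-3/2}\varepsilon^{-3/2})$ and building $U_G$ once costs $\widetilde{\OO}((\theta\varepsilon)^{-1/2})$; summing yields the stated $\widetilde{\OO}(m^{1.5}n\,\theta^{-1.5}\varepsilon^{-1.5})$.

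I expect the main obstacle to be precisely the error bookkeeping of the second paragraph: arranging that three heterogeneous error sources — a spectral-norm bound from the block-encoding, an additive bound from amplitude estimation, and an $\ell_2$ bound from the tomography of \lem{entry-estimation} — compose into a single \emph{multiplicative} Frobenius-norm guarantee, while the a priori estimate $\|\nabla f(K)\|_F \ge c\varepsilon$ is simultaneously doing double duty (making each additive error relative, and lower-bounding the amplitudes that amplitude estimation needs). A secondary but unavoidable chore is promoting the $\Omega(1)$ success probabilities of the amplitude-amplification and amplitude-estimation subroutines to an overall high-probability guarantee via a standard median-of-repetitions argument, which costs only an extra polylogarithmic factor.
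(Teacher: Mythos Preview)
Your proposal is correct and follows essentially the same route as the paper's proof: block-encode $\nabla f(K)$ via \lem{block-encoding-fK}, vectorize and tomograph via \lem{matrix-vectorization} and \lem{entry-estimation}, estimate the norm via \lem{norm-estimation}, and assemble using the lower bound $\|\nabla f(K)\|_F\ge c\varepsilon$ from \lem{norm-estimation-before-converge}; the paper's triangle-inequality decomposition and parameter choices $\varepsilon_a,\varepsilon_b=\Theta(c\theta\varepsilon)$, $\varepsilon_r=\Theta(\theta)$ match yours, and the dominant cost is indeed the tomography step at $\widetilde{\OO}(m^{3/2}n\,\theta^{-3/2}\varepsilon^{-3/2})$.
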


\begin{proof}
    We need an estimation to $\nabla f(K)$ up to a multiplicative error $\theta$.
    This requires a few steps as below:
    
    \textbf{Block-Encoding.} From \lem{block-encoding-fK}, we know we are able to construct a $
    (\gamma_\nabla, \varepsilon_b)$-block-encoding to $\nabla f(K)$ using
    \begin{equation}
    \widetilde{\OO}\left(a^6\rho^3\sqrt{\frac{\kappa^{11}}{\varepsilon_b}}\right)
    \end{equation}
    elementary gates. And we also have the estimation that $\gamma_\nabla \le \widetilde{\OO}(a^6\rho^4\kappa^6)$.

    \textbf{Entry Retrieving.} The next step is to retrieve the entries in the estimation to $\nabla f(K)$. We firstly use the technique introduced in~\lem{matrix-vectorization} to get a state of the estimation
    to $\ket{G_{\text{vec}}}$, and this step requires
    \begin{equation} 
        \mathcal{O}\left(\frac{\gamma_\nabla\sqrt{m}}{\|G\|_F}\right)
    \end{equation}
    queries. Now we can use \lem{entry-estimation} to get the entries in $\ket{G_{\text{vec}}}$ up
    to an additive error $\epsilon_r$, using
    \begin{equation}
        \widetilde{\OO}\left(\frac{\gamma_\nabla }{\|G\|_F}\frac{m^{3/2}n}{\varepsilon_r}\right)
    \end{equation}
    queries to $U_G$.
    
    \textbf{Norm Estimation.} The final step is to estimate the norm of $G$. \lem{norm-estimation} tells us that we can achieve an additive error $\varepsilon_a$ to $\|G\|_F$ using
    \begin{equation}
    \mathcal{O}\left(\frac{\gamma_\nabla \sqrt{m}\|G\|_F}{\varepsilon_a^2+2\|G\|_F\varepsilon_a}\right)
    \end{equation}
    queries to $U_G^\top$.

    The rest of the theorem is just to assemble all these steps. Note the requirement is
    \begin{equation}
        \|G - \nabla f(K)\|_F \leq \theta \|f(K)\|_F,
    \end{equation}
    and according to ~\lem{norm-estimation-before-converge}, $\|\nabla f(K)\|_F \geq c\epsilon$ before convergence.

    Denote the entries retrieved in the Entry Retrieving step form a matrix $\mathcal{G}$ where $\|\mathcal{G}\|_F= 1$, the norm estimated in the
    Norm Estimation step as $a_{\text{est}}$, our estimator is then $a_{\text{est}}\mathcal{G}$. Choose
    \begin{equation}
        \varepsilon_b = \frac{c\theta\varepsilon}{3}, \quad \varepsilon_a = \frac{c\theta\varepsilon}{3}, \quad \varepsilon_r = \frac{1}{3+c\theta},
    \end{equation}
    we then know that $\|G - \nabla f(K)\| \leq \frac{c\theta\varepsilon}{3}$, $\nabla f(K) \geq c\varepsilon$, then 
    \begin{equation}
    \mathcal{O}(\varepsilon)\leq \bigg{|}\|\nabla f(K)\|_F - \frac{c\theta\varepsilon}{3}\bigg{|} \leq \|G\|_F \leq \|\nabla f(K)\|_F + \frac{c\theta\varepsilon}{3}.
    \end{equation}
    Thus it is clear to see
    \begin{equation}
    \begin{split}
        &\|a_{\text{est}}\mathcal{G} - \nabla f(K)\|_F\\ 
        &= 
        \bigg{\|}a_{\text{est}}\mathcal{G} - \|G\|_F\mathcal{G} +
        \|G\|_F\mathcal{G} - G +
        G - \nabla f(K)\bigg{\|}_F\\
        &\leq
        \bigg{\|}a_{\text{est}}\mathcal{G} - \|G\|_F\mathcal{G}\bigg{\|}_F +
        \bigg{\|}\|G\|_F\mathcal{G} - G\bigg{\|}_F +
        \bigg{\|}G - \nabla f(K)\bigg{\|}_F\\
        &\leq \varepsilon_a + \|G\|_F \varepsilon_r + \varepsilon_b 
        \leq \varepsilon_a + (\|\nabla f(K)\|_F + \frac{c\theta \varepsilon}{3}) \varepsilon_r + \varepsilon_b \\
        &\leq \theta \|f(K)\|_F.
    \end{split}
    \end{equation}
    And the total gate complexity should be
    \begin{equation}
    \widetilde{\mathcal{O}}\left(\frac{a^{12}\rho^7\kappa^{11.5}m^{1.5}n}{\theta^{1.5}\varepsilon^{1.5}}\right).
    \end{equation}
\end{proof}

\section{Convergence analysis}

\begin{lemma}
\label{lem:linear-convergence-rate}
    Given any $K \in \mathcal{S}_K$, let $G$ be a $\theta$-robust estimate of $\nabla f(K)$. Then, we have
    \begin{equation}
    \begin{split}
        \langle G, \nabla f(K) \rangle &\geq (1 - \theta) \|\nabla f(K)\|_F^2,\\
        \|G\|_F^2 &\leq (1 + \theta)^2 \|\nabla f(K)\|_F^2.
    \end{split}
    \end{equation}
\end{lemma}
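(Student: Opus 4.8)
The plan is to derive both inequalities directly from the defining property of a $\theta$-robust estimate (cf.\ Definition~\ref{def:theta-robust}), namely $\|G - \nabla f(K)\|_F \le \theta\,\|\nabla f(K)\|_F$, together with two elementary facts about the Frobenius inner product $\langle\cdot,\cdot\rangle$ on $\R^{m\times n}$: the Cauchy--Schwarz inequality and the triangle inequality. It is convenient to write $E \coloneqq G - \nabla f(K)$ for the estimation error, so the robustness hypothesis reads $\|E\|_F \le \theta\,\|\nabla f(K)\|_F$, and $G = \nabla f(K) + E$.

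For the first bound I would expand
\[
\langle G, \nabla f(K)\rangle = \langle \nabla f(K) + E,\ \nabla f(K)\rangle = \|\nabla f(K)\|_F^2 + \langle E, \nabla f(K)\rangle,
\]
and then lower-bound the cross term using Cauchy--Schwarz (for the Hilbert--Schmidt inner product): $\langle E, \nabla f(K)\rangle \ge -\|E\|_F\,\|\nabla f(K)\|_F \ge -\theta\,\|\nabla f(K)\|_F^2$. Adding the two contributions yields $\langle G, \nabla f(K)\rangle \ge (1-\theta)\|\nabla f(K)\|_F^2$, which is the first claim. For the second bound I would simply apply the triangle inequality in the Frobenius norm, $\|G\|_F = \|\nabla f(K) + E\|_F \le \|\nabla f(K)\|_F + \|E\|_F \le (1+\theta)\|\nabla f(K)\|_F$, and square both sides to get $\|G\|_F^2 \le (1+\theta)^2\|\nabla f(K)\|_F^2$.

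There is no genuine obstacle here: the statement is a two-line consequence of the definition. The only point that warrants a moment of care is bookkeeping — noting that the relevant norm and inner product are the Frobenius ($\ell^2$ on the $mn$ entries) ones, so that Cauchy--Schwarz and the triangle inequality apply verbatim; with that observed, the estimates above are immediate and the proof is complete.
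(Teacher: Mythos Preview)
Your proof is correct and follows essentially the same approach as the paper: expand the inner product via $G = \nabla f(K) + (G - \nabla f(K))$, bound the cross term with Cauchy--Schwarz, and use the triangle inequality (then square) for the norm bound. The only cosmetic difference is that you name the error $E$ explicitly, while the paper writes out $G - \nabla f(K)$ each time.
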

\begin{proof}
    It is straightforward to verify that
    \begin{equation}
    \begin{split}
        \langle G, \nabla f(K) \rangle &= \langle G - \nabla f(K) + \nabla f(K), \nabla f(K) \rangle
        = \langle G - \nabla f(K), \nabla f(K) \rangle + \|\nabla f(K)\|_F^2\\
        &\geq -\|G - \nabla f(K)\|_F \|\nabla f(K)\|_F + \|\nabla f(K)\|_F^2
        \geq (1 - \theta)\|\nabla f(K)\|_F^2.
    \end{split}
    \end{equation}

    \begin{equation}
    \begin{split}
        \|G\|_F^2 &= \|G - \nabla f(K) + \nabla f(K)\|_F^2 \leq 
        \left(\|G - \nabla f(K)\|_F + \|\nabla f(K)\|_F\right)^2\\
        &\leq (1 + \theta)^2 \|\nabla f(K)\|_F^2.
    \end{split}
    \end{equation}
\end{proof}

\begin{lemma}[Descent lemma]\label{lem:descent-lemma}
    Given $K \in \mathcal{S}_K(a)$, and let $G$ be a $\theta$-robust estimate of $\nabla f(K)$. Then, there exists a positive $\sigma_m$ such that for all $\sigma \in [0, \sigma_m]$, we have
    \begin{align}
        f(K - \sigma G) - f(K^*) \le (1 - \sigma/\mu) \left(f(K) - f(K^*)\right),
    \end{align}
    where $\mu$, $\sigma_m$ only depends on $A, B, R, Q$, and $a$, and $\mu > \sigma_m$.
\end{lemma}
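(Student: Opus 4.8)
The plan is to derive the claim from three ingredients already available: the local Lipschitz smoothness of $f$ on a sublevel set, the robust-gradient inequalities of \lem{linear-convergence-rate}, and the Polyak--\L ojasiewicz bound of \lem{PL-condition}. Fix an enlarged level $a' := 2a$. Combining the closed-form expression \eqn{exact-grad} with the uniform a priori bounds on $\|K\|_F$, $\|P(K)\|$, $\|X(K)\|$ and $\lambda_{\min}(X(K))$ over $\mathcal{S}_K(a')$ that follow from \lem{general-estimate} (as in the proof of \thm{obj-eval}), the gradient map $K\mapsto\nabla f(K)$ is $L$-Lipschitz on $\mathcal{S}_K(a')$ with a constant $L$ depending only on $A,B,Q,R$ and $a$; the same bounds also give $\|\nabla f(K)\|_F\le D$ for every $K\in\mathcal{S}_K(a)$, with $D$ likewise a function of the problem data and $a$ only (this local smoothness is standard, see \cite{mohammadi2021convergence}). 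Since $\mathcal{S}_K(a)$ is compact and contained in the open set $\{f<a'\}\subseteq\mathcal{S}_K(a')$, there is a radius $r>0$ such that the $r$-neighborhood of $\mathcal{S}_K(a)$ stays inside $\mathcal{S}_K(a')$. Using $\|G\|_F\le(1+\theta)\|\nabla f(K)\|_F\le(1+\theta)D$ from \lem{linear-convergence-rate}, the segment $\{K-t\sigma G:t\in[0,1]\}$ therefore remains in $\mathcal{S}_K(a')$ for all $\sigma\in[0,\sigma_1]$, where $\sigma_1:=r/((1+\theta)D)$.

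On this segment $L$-smoothness yields the descent inequality
\begin{align*}
 f(K-\sigma G)\le f(K)-\sigma\langle\nabla f(K),G\rangle+\tfrac{L\sigma^2}{2}\|G\|_F^2 .
\end{align*}
Substituting $\langle G,\nabla f(K)\rangle\ge(1-\theta)\|\nabla f(K)\|_F^2$ and $\|G\|_F^2\le(1+\theta)^2\|\nabla f(K)\|_F^2$ from \lem{linear-convergence-rate} gives
\begin{align*}
 f(K-\sigma G)\le f(K)-\sigma\Big((1-\theta)-\tfrac{L(1+\theta)^2}{2}\,\sigma\Big)\|\nabla f(K)\|_F^2 ,
\end{align*}
and whenever $\sigma\le\frac{1-\theta}{L(1+\theta)^2}$ the parenthetical factor is at least $\frac{1-\theta}{2}$, so that $f(K-\sigma G)\le f(K)-\frac{(1-\theta)\sigma}{2}\|\nabla f(K)\|_F^2$. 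Applying the PL inequality $\|\nabla f(K)\|_F^2\ge 2\mu_f\,(f(K)-f(K^*))$ of \lem{PL-condition} and subtracting $f(K^*)$ from both sides yields
\begin{align*}
 f(K-\sigma G)-f(K^*)\le\big(1-(1-\theta)\mu_f\,\sigma\big)\big(f(K)-f(K^*)\big).
\end{align*}
Hence the lemma holds with $\mu:=\tfrac{1}{(1-\theta)\mu_f}$ and $\sigma_m:=\min\{\sigma_1,\ \tfrac{1-\theta}{L(1+\theta)^2}\}$, both depending only on $A,B,Q,R$ and $a$ (with $\theta$ regarded as fixed). Finally, since a PL constant cannot exceed the corresponding smoothness constant ($\mu_f\le L$) and $0<\theta<1$, we have $\mu/\sigma_m\ge\tfrac{L(1+\theta)^2}{\mu_f(1-\theta)^2}>1$, that is, $\mu>\sigma_m$.

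The step I expect to be the real obstacle is the first one: certifying that the gradient step $K-\sigma G$ never leaves the region where $f$ is smooth. Because $f$ blows up at the boundary of $\mathcal{S}_K$, the smoothness constant is only local, and one must combine the uniform bound $\|G\|_F\le(1+\theta)D$ with compactness of $\mathcal{S}_K(a)$ to produce a single step-size threshold valid for every starting point in $\mathcal{S}_K(a)$; this is precisely where the argument can break down for degenerate instances, consistent with the Limitations remark that \lem{descent-lemma} may fail in special cases. Everything afterwards is the textbook ``$L$-smooth plus PL implies linear convergence'' computation, modified only to absorb the multiplicative gradient error through \lem{linear-convergence-rate}.
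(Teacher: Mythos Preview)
Your approach is essentially the same as the paper's: the paper's proof is a one-line citation of \lem{linear-convergence-rate} together with \cite[Proposition 6]{mohammadi2021convergence}, and your argument is precisely an unpacking of that proposition---local $L$-smoothness on a sublevel set, the quadratic descent inequality, the robust-gradient bounds, then the PL inequality.

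The one point worth flagging is that the lemma asserts $\mu,\sigma_m$ depend only on $A,B,Q,R,a$, whereas your $\mu=1/((1-\theta)\mu_f)$ and $\sigma_m=\min\{\sigma_1,(1-\theta)/(L(1+\theta)^2)\}$ carry explicit $\theta$-dependence, which you acknowledge parenthetically rather than remove. The paper handles this by invoking \lem{linear-convergence-rate} \emph{with $\theta<1/2$}; once $\theta$ is bounded away from $1$, the factors $(1-\theta)$ and $(1+\theta)^2$ are uniformly bounded and can be absorbed into constants depending only on the problem data and $a$. Your justification of $\mu>\sigma_m$ via $\mu_f\le L$ also implicitly uses global rather than local smoothness, but this is a minor technicality.
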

\begin{proof}
    This lemma is a direct consequence of \lem{linear-convergence-rate} (with $\theta < 1/2$) and \cite[Proposition 6]{mohammadi2021convergence}. 
\end{proof}

\begin{proposition}\label{prop:linear-converge-rate}
    For any initial stabilizing feedback gain $K_0 \in \mathcal{S}_K$, we denote $a \coloneqq f(K_0)$. Then, there exists constants $\mu > \sigma_m > 0$, depending only on $A, B, Q, R$, and $a$, such that for any fixed $\sigma \in [0, \sigma_m]$, the iterates of \algo{quantum-policy-grad} satisfy
    \begin{subequations}
        \begin{align}
        f(K_k) - f(K^*) \le (1 - \sigma/\mu)^k\left(f(K_0) - f(K^*)\right),\label{eqn:contraction-a}\\
        \|K_k - K^*\|^2 \le b(1 - \sigma/\mu)^k \|K_0 - K^*\|^2,\label{eqn:contraction-b}
        \end{align}
    \end{subequations}
    where $b$ is an absolute constant that is independent of $k$.
\end{proposition}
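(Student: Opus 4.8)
The plan is to establish \eqn{contraction-a} and \eqn{contraction-b} simultaneously by induction on $k$, with \lem{descent-lemma} as the engine. Write $a \coloneqq f(K_0)$, fix the robustness parameter of \algo{quantum-policy-grad} so that $\theta < 1/2$, and let $\mu > \sigma_m > 0$ be the constants furnished by \lem{descent-lemma} for the sublevel set $\mathcal{S}_K(a)$; these depend only on $A, B, Q, R$ and $a$. Fix any step size $\sigma \in [0, \sigma_m]$. The induction hypothesis at step $k$ is that $K_k \in \mathcal{S}_K(a)$ together with \eqn{contraction-a} up to index $k$.

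For the base case, $f(K_0) = a$ gives $K_0 \in \mathcal{S}_K(a)$ and \eqn{contraction-a} is an identity at $k = 0$. For the inductive step, suppose $K_k \in \mathcal{S}_K(a)$. Then \thm{approx-gradient} returns a $\theta$-robust estimate $G_k$ of $\nabla f(K_k)$, so \lem{descent-lemma} applies at $K = K_k$ with step size $\sigma \le \sigma_m$:
\begin{equation}
    f(K_{k+1}) - f(K^*) \;=\; f(K_k - \sigma G_k) - f(K^*) \;\le\; (1 - \sigma/\mu)\bigl(f(K_k) - f(K^*)\bigr).
\end{equation}
Since $1 - \sigma/\mu \in (0, 1)$, this forces $f(K_{k+1}) \le f(K_k) \le a$, hence $K_{k+1} \in \mathcal{S}_K(a)$, closing the induction; iterating the displayed contraction from $0$ to $k$ yields \eqn{contraction-a}.

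To obtain \eqn{contraction-b}, I would sandwich the optimality gap between two quadratics in $\|K - K^*\|_F$. The lower sandwich is \lem{lower-bound-K} applied at $K_k \in \mathcal{S}_K(a)$, which gives
\begin{equation}
    \|K_k - K^*\|_F^2 \;\le\; \frac{a}{\nu\,\lambda_{\min}(R)}\bigl(f(K_k) - f(K^*)\bigr).
\end{equation}
The upper sandwich is a local smoothness estimate: on the compact sublevel set $\mathcal{S}_K(a)$ the gradient $\nabla f$ is Lipschitz with some constant $L = L(A, B, Q, R, a)$, and since $\nabla f(K^*) = 0$, a Taylor estimate along a path from $K^*$ to $K_0$ (staying in a slightly enlarged sublevel set) yields $f(K_0) - f(K^*) \le \tfrac{L}{2}\|K_0 - K^*\|_F^2$. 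Combining these with \eqn{contraction-a},
\begin{equation}
    \|K_k - K^*\|_F^2 \;\le\; \frac{a}{\nu\,\lambda_{\min}(R)}(1 - \sigma/\mu)^k\bigl(f(K_0) - f(K^*)\bigr) \;\le\; \frac{aL}{2\nu\,\lambda_{\min}(R)}(1 - \sigma/\mu)^k\|K_0 - K^*\|_F^2,
\end{equation}
which is \eqn{contraction-b} with $b \coloneqq \tfrac{aL}{2\nu\,\lambda_{\min}(R)}$, independent of $k$.

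I expect the main obstacle to be the upper sandwich $f(K_0) - f(K^*) \le \tfrac{L}{2}\|K_0 - K^*\|_F^2$: the sublevel set $\mathcal{S}_K(a)$ is not convex, so no global descent inequality is available, and one must instead rely on the standard LQR landscape facts (coercivity of $f$, compactness of $\mathcal{S}_K(a)$, and local Lipschitz continuity of $\nabla f$ on it) to justify the Taylor estimate along an admissible path. Two minor points to track are keeping $\theta < 1/2$ throughout so that \lem{descent-lemma} (and hence \lem{linear-convergence-rate}) stays applicable at every step, and noting that the precondition $\|K_k - K^*\| > \varepsilon$ of \thm{approx-gradient} is automatically met until \algo{quantum-policy-grad} has already produced an $\varepsilon$-accurate iterate, so the induction runs for exactly as many steps as the end-to-end guarantee requires.
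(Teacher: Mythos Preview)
Your induction for \eqn{contraction-a} is correct and in fact more careful than the paper's one-line appeal to \lem{descent-lemma}: you explicitly maintain the invariant $K_k \in \mathcal{S}_K(a)$, which is what licenses each application of that lemma.

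For \eqn{contraction-b}, you and the paper agree on the lower sandwich via \lem{lower-bound-K} and on chaining through \eqn{contraction-a}; the divergence is exactly at the step you flag as the obstacle. Your route to the upper sandwich $f(K_0) - f(K^*) \le \tfrac{L}{2}\|K_0 - K^*\|_F^2$ through Lipschitzness of $\nabla f$ on a sublevel set genuinely runs into the nonconvexity of $\mathcal{S}_K(a)$: the straight segment from $K^*$ to $K_0$ need not remain stabilizing, and a Taylor estimate along a curved admissible path controls the gap by path length, not by $\|K_0 - K^*\|_F$. The ``standard landscape facts'' you list do not by themselves close this. The paper sidesteps the issue entirely by reusing the exact identity
\[
f(K) - f(K^*) \;=\; \Tr\!\bigl[(K - K^*)^\top R\,(K - K^*)\,X(K)\bigr]
\]
from \cite[Lemma~2]{mohammadi2021convergence} --- the very identity underlying \lem{lower-bound-K}. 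Evaluated at $K = K_0$ and bounded from above it yields $f(K_0) - f(K^*) \le b\,\|K_0 - K^*\|_F^2$ immediately, with $b$ depending only on $R$ and $X(K_0)$, and no smoothness or convexity argument is needed. In short, the same algebraic identity furnishes both sides of the sandwich, which is cleaner than the analytic route you propose.
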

\begin{proof}
    The first part~\eqn{contraction-a} is a direct corollary of~\lem{descent-lemma}. 
    By \lem{lower-bound-K} and~\eqn{contraction-a}, we have
    \begin{align}
        \|K_k - K^*\|^2_F &\le \frac{a}{\nu \lambda_{\min}(R)}\left(f(K_k) - f(K^*)\right) \\
        &\le (1 - \sigma/\mu)^k\left(f(K_0) - f(K^*)\right) \le b(1 - \sigma/\mu)^k\|K_0 - K^*\|,
    \end{align}
    where the last step follows from \cite[Lamma 2]{mohammadi2021convergence} and $b=\lambda_{\max}(RX(K_0))$.
\end{proof}

\begin{theorem}[Quantum policy gradient for LQR]\label{thm:policy-grad-main}
    Assume that we have efficient procedures (as described in~\assump{sparse-oracle}) to access the problem data $A, B, Q, R$ in $\OO(\poly\log(n))$ time.
    Let $K_0 \in \mathcal{S}_K$ be a stabilizing policy. Then, \algo{quantum-policy-grad} outputs an $\varepsilon$-approximate solution to \prob{lqr} in cost
    \begin{equation}
        \widetilde{\OO}\left(\frac{m^{1.5}n}{\varepsilon^{1.5}}\right).
    \end{equation}
\end{theorem}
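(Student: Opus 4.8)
Here is my plan for proving \thm{policy-grad-main}. The strategy is to combine the linear convergence guarantee of \prop{linear-converge-rate} with the per-iteration cost of the quantum gradient estimator in \thm{approx-gradient}, and then observe that, because the iterates decay geometrically and the algorithm halts once $\|K_k-K^*\|_F\le\varepsilon$, the overall cost is dominated (up to polylogarithmic factors that $\widetilde{\OO}(\cdot)$ absorbs) by a single invocation of the gradient subroutine at the target accuracy.

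First I would fix, once and for all, a step size $\sigma\in(0,\sigma_m]$ and a robustness parameter $\theta<1/2$ (say $\theta=1/4$), where $\sigma_m$, $\mu$ are the constants furnished by \lem{descent-lemma}; these depend only on the problem data $A,B,Q,R$ and on $a\coloneqq f(K_0)$. Running \algo{quantum-policy-grad} with these parameters, an induction on $k$ using \lem{descent-lemma} shows that every iterate $K_k$ stays stabilizing and in the sublevel set $\mathcal{S}_K(a)$ (so the uniform bounds of \lem{general-estimate} apply throughout with a single constant $a$), and that $f(K_k)-f(K^*)\le(1-\sigma/\mu)^k(f(K_0)-f(K^*))$. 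Combining this with \lem{lower-bound-K} (equivalently \eqn{contraction-b}) gives $\|K_k-K^*\|_F^2\le b(1-\sigma/\mu)^k\|K_0-K^*\|_F^2$, so it suffices to run
\[
N=\left\lceil\frac{\mu}{\sigma}\log\!\Big(\frac{b\,\|K_0-K^*\|_F^2}{\varepsilon^2}\Big)\right\rceil=O\!\Big(\tfrac{\mu}{\sigma}\log(1/\varepsilon)\Big)
\]
iterations to guarantee $\|K_N-K^*\|_F\le\varepsilon$, i.e.\ an $\varepsilon$-approximate solution to \prob{lqr}; the prefactor $\mu/\sigma$ is a condition-number-type constant that is absorbed into $\widetilde{\OO}(\cdot)$.

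Next I would bound the cost of a single iteration. In iteration $k$, as long as the algorithm has not terminated we have $\|K_k-K^*\|_F>\varepsilon$, so \thm{approx-gradient}, invoked with robustness $\theta$ and the lower bound $\varepsilon$, produces a $\theta$-robust estimate $G_k$ of $\nabla f(K_k)$ in cost $\widetilde{\OO}\!\big(m^{1.5}n/(\theta^{1.5}\varepsilon^{1.5})\big)=\widetilde{\OO}\!\big(m^{1.5}n/\varepsilon^{1.5}\big)$, since $\theta$ is an absolute constant and the problem-data prefactors ($a,\rho,\kappa,\dots$) are absorbed into $\widetilde{\OO}(\cdot)$. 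Updating the quantum-accessible data structure to store $K_{k+1}=K_k-\sigma G_k$ costs $\OO(mn\log^2(mn))$ by \lem{block_encoding_data_structure}, which is dominated by the gradient step. Summing over the $N=\widetilde{\OO}(\log(1/\varepsilon))$ iterations yields a total cost $\widetilde{\OO}\!\big(m^{1.5}n/\varepsilon^{1.5}\big)$, the logarithmic factor being swallowed by $\widetilde{\OO}(\cdot)$; this matches the stated bound (and, under $m\ll n$, the informal $\widetilde{\OO}(n\varepsilon^{-1.5})$).

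The main obstacle is the bookkeeping that keeps the preconditions of the gradient subroutine valid at every step: one must check that a single fixed pair $(\sigma,\theta)$ simultaneously (i) keeps all iterates inside the fixed sublevel set $\mathcal{S}_K(a)$, so that $\mu$, $\sigma_m$ and the norm bounds do not drift across iterations, and (ii) enforces contraction — which is exactly where the hypothesis $\theta<1/2$ in \lem{descent-lemma} is used. A secondary subtlety worth flagging is that the per-iteration gradient cost actually \emph{worsens} as $K_k\to K^*$, scaling like $\|K_k-K^*\|_F^{-1.5}$; however, since the algorithm stops as soon as $\|K_k-K^*\|_F\le\varepsilon$, the costliest iteration is precisely the last one, tied to the target accuracy $\varepsilon$, and the geometric growth of these costs over $N=\widetilde{\OO}(\log(1/\varepsilon))$ steps contributes only a polylogarithmic factor. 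A sharper accounting that uses the geometric decay of $\|K_k-K^*\|_F$ toward $\varepsilon$ replaces the naive bound $N\cdot\widetilde{\OO}(m^{1.5}n/\varepsilon^{1.5})$ by a convergent geometric series dominated by its final term, giving the same $\widetilde{\OO}(m^{1.5}n/\varepsilon^{1.5})$ without even the logarithmic overhead.
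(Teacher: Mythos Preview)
Your proposal is correct and follows essentially the same approach as the paper: combine the $\log(1/\varepsilon)$ iteration count from \lem{descent-lemma}/\prop{linear-converge-rate} with the per-iteration gradient cost $\widetilde{\OO}(m^{1.5}n/\varepsilon^{1.5})$ from \thm{approx-gradient}. Your write-up is in fact more careful than the paper's two-sentence proof, explicitly verifying the sublevel-set invariance needed for the constants to stay fixed and checking the precondition $\|K_k-K^*\|_F>\varepsilon$ of \thm{approx-gradient}; the ``secondary subtlety'' paragraph is unnecessary (once you pass the fixed target $\varepsilon$ to the estimator, every iteration has the same cost), but it does no harm.
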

\begin{proof}
    Notice that with lemma~\ref{lem:descent-lemma}, we need $\log\left(\frac{1}{\varepsilon}\right)$ iterations. By \thm{approx-gradient}, the quantum gradient estimation subroutine requires $\widetilde{\OO}\left(\frac{m^{1.5}n}{\varepsilon^{1.5}}\right)$ elementary gates. 
\end{proof}

\section{Extended numerical experiments}
\label{append:more_numeric}
\subsection{Aircraft Control Problem}
We perform another experiment on a practical problem that can be formulated as LQR. Here, we consider the aircraft flight control problem, specifically for pitch angle control. We adopt a linearized model of the aircraft around a steady flight condition. For a small aircraft, the pitch dynamics can be represented by the following state variables: pitch angle $\theta$ (rad) and pitch rate $q$ (rad/s). The control input is elevator deflection angle $\delta$ (rad). The state-space model can be represented as $\Dot{x}=Ax+Bu$, where $x=[\theta, q]\trans$, $u=[\delta]$. We set $A=[[0,1],[0,-0.5]]\trans$, $B=[0,1]\trans$, $Q=[[10,0],[0,1]]\trans$, and $R=[0.1]$. The plot of our optimization curve is available in the Figure~\fig{aircraft}. Our method converges faster than the classical method.

\begin{figure}[!ht]
\vspace{-1em}
\centering
\begin{tabular}{@{}c@{\hspace{0.3mm}}c@{}}
    \includegraphics[width=0.48\linewidth]{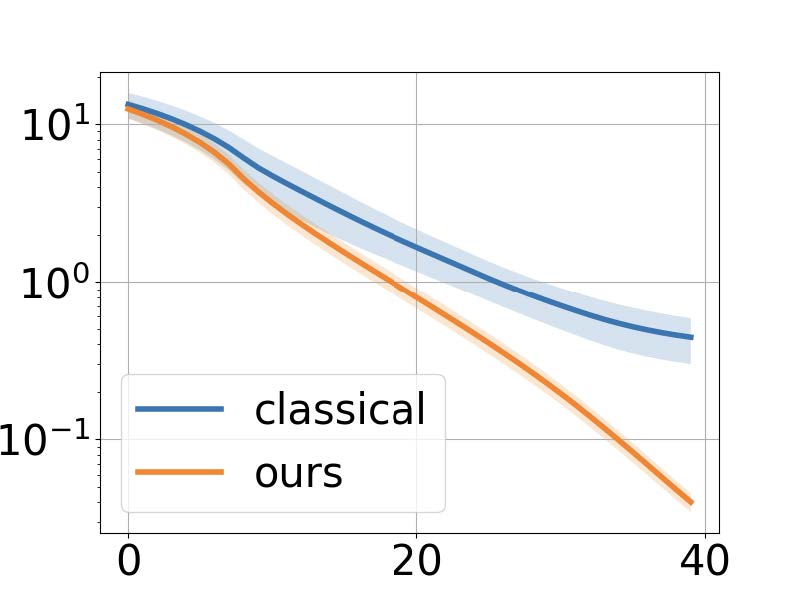} &
    \includegraphics[width=0.48\linewidth]{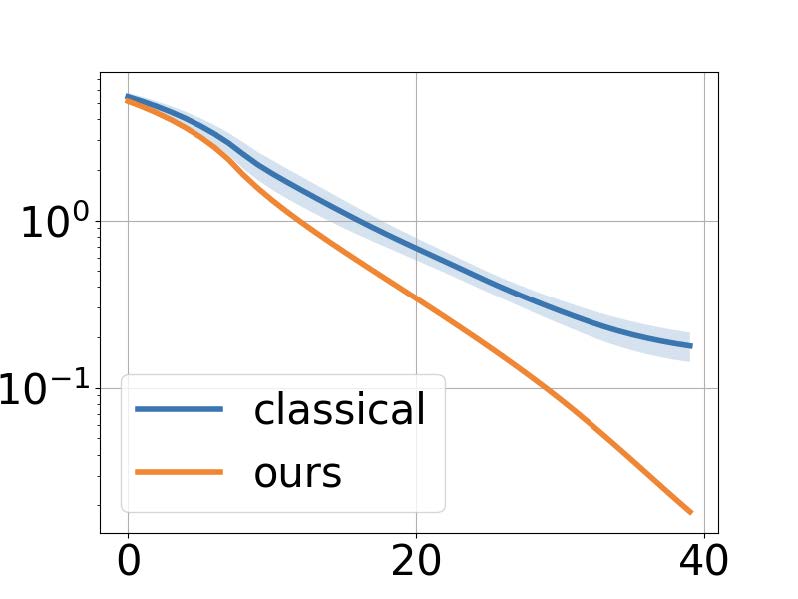} \\
    \small (a) $J-J^*$  & \small (b) $f(K) - f(K^*)$   \\
\end{tabular}
\vspace{-1mm}
\caption{\textbf{Numerical Results on Convergence.} In the aircraft control problem, our policy gradient descent algorithm converges much faster than classic method \cite{mohammadi2021convergence}.}
\vspace{-0.5em}
\label{fig:aircraft}
\end{figure}

\subsection{Relative Errors}
We conducted further numerical experiments to understand how the optimality scales with problem size in Figure~\fig{scale}. Here, we scale the number of masses in a spring-mass system from 2 to 4, and the problem dimension scales accordingly from 2 to 8. We measure the optimality by the relative error found in both our method and the classical method. The relative errors are $(J-J^*)/j^*$ and $(f(K) - f(K^*))/f(K^*)$. As the dimension scales up, the relative errors increase, while our method consistently outperforms the classical optimization method. 

 \begin{figure}[!ht]
\vspace{-1em}
\centering
\begin{tabular}{@{}c@{\hspace{0.3mm}}c@{}}
    \includegraphics[width=0.48\linewidth]{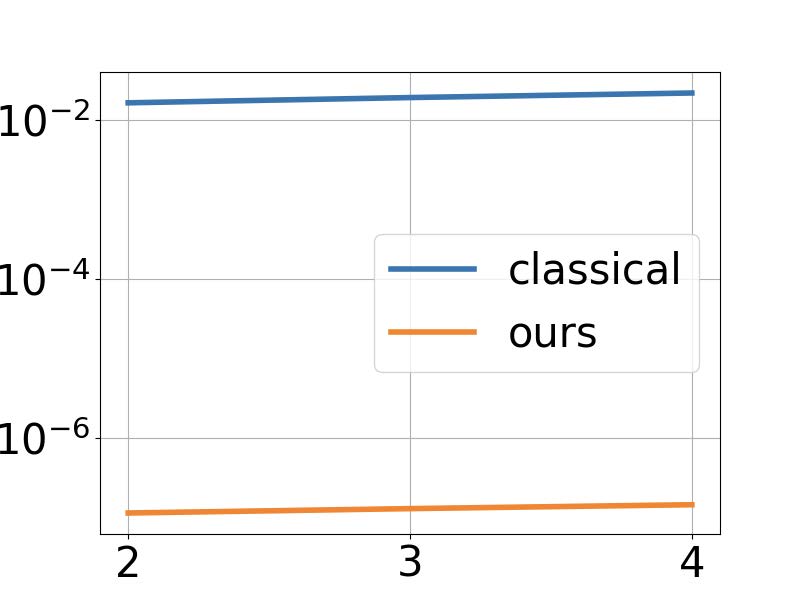} &
    \includegraphics[width=0.48\linewidth]{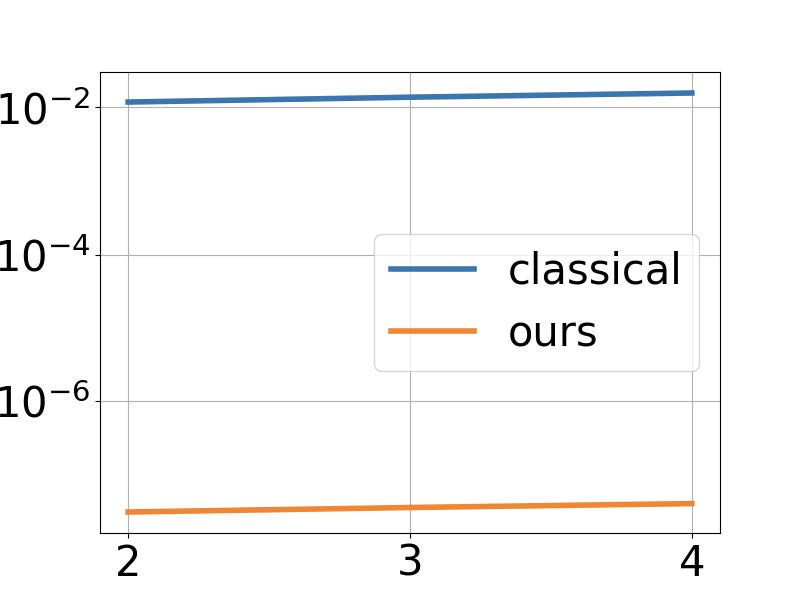} \\
    \small (a) Relative $J-J^*$ error  & \small (b) Relative $f(K) - f(K^*)$ error   \\
\end{tabular}
\vspace{-1mm}
\caption{\textbf{Relative Error.} We scale the size of a mass-spring system and our method consistently gets smaller relative error compared to \cite{mohammadi2021convergence}. }
\vspace{-0.5em}
\label{fig:scale}
\end{figure}

\end{document}